\documentclass[a4paper,10pt,twoside]{article}
\usepackage[top=3cm, bottom=3cm, inner=3cm, outer=3cm, includehead]{geometry}
\usepackage{fancyhdr}
\usepackage{xurl}
\usepackage{graphicx}
\usepackage{alltt}
\usepackage{amsmath}
\usepackage[hidelinks, pdftex]{hyperref}
\usepackage[T1]{fontenc}
\usepackage[utf8]{inputenc}
\usepackage{lmodern}
\usepackage{IEEEtrantools}
\usepackage{csquotes}
\usepackage[backend=biber, style=numeric, sorting=none]{biblatex}
\usepackage{lipsum}
\usepackage[utf8]{inputenc}
\usepackage{xcolor}
\usepackage{lipsum}
\usepackage[T1]{fontenc}
\usepackage[utf8]{inputenc}
\usepackage{amsmath, amssymb, amsthm}
\usepackage[expansion=false]{microtype}
\usepackage{xcolor}
\usepackage{booktabs}
\usepackage{array}
\usepackage{tcolorbox}
\usepackage{enumitem}
\usepackage{hyperref}
\usepackage{titlesec}
\usepackage{fancyhdr}
\usepackage{parskip}
\usepackage{amsthm}
\usepackage{amssymb}
\usepackage{mathtools}
\tcbuselibrary{theorems, skins, breakable}
\newtheorem{theorem}{Theorem}[section]
\newtheorem{Proposition}[theorem]{Proposition}
\newtheorem{corollary}{Corollary}[theorem]
\newtheorem{lemma}[theorem]{Lemma}
\newtheorem{assumption}[theorem]{Assumption}
\newtheorem{definition}{Definition}[section]
\newtheorem{remark}{Remark}
\newcommand{\R}{\mathbb{R}}
\newcommand{\RP}{\mathbb{RP}}
\newcommand{\T}{\mathrm{T}}
\newcommand{\Cayley}{\mathcal{C}}
\newcommand{\diag}{\mathrm{diag}}
\newcommand{\norm}[1]{\left\|#1\right\|}

\newcommand{\sqmu}{\sqrt{\mu}}
\newcommand{\sqmuL}{\sqrt{\mu/L}}
\newcommand{\sqLmu}{\sqrt{L/\mu}}
\newcommand{\sqkappa}{\sqrt{\kappa}}
\newcommand{\xst}{x^*}
\newcommand{\nbf}{\nabla^2\! f}

\newcommand{\phist}{\phi_k^*}
\newcommand{\lf}{\ell_f}
\newcommand{\eps}{\varepsilon}
\newcommand{\Mc}{\mathcal{M}}
\newcommand{\Dop}{\mathcal{D}}
\newcommand{\Gr}{\mathrm{Gr}}
\DeclareMathOperator{\Sch}{S}

\begin{document}

\title{On the Nesterov's acceleration: A NAIM perspective}

\author{Rachit Mehra\thanks{Project Lead, TenneT Offshore GmBH, Germany}  
\and 
M Parimi \thanks{Research Scholar, $E-MC^2$ Lab,VJTI, India} 
\and
Amol Yerudkar \thanks{Associate Professor, School of Computer Science and Technology, Zhejiang Normal University, China}
\and
S.R. Wagh \thanks{Faculty- EED \& P.I., $E-MC^2$ Lab, VJTI, India }
\and
Navdeep Singh \thanks{IGI Research Chair Professor, $E-MC^2$ Lab, VJTI, India }
}

\maketitle

\begin{abstract} 
We present a unifying Nearly Asymptotically Invariant Manifold (NAIM) framework for
understanding Nesterov’s Accelerated Gradient (NAG) method. By lifting the first-order gradient flow into a second-order phase space we construct a NAIM--a slow, attracting graph and show that acceleration emerges from a curvature aware perturbation of this graph. The evolving slope of the perturbed manifold is governed by a Differential Riccati Equation (DRE), which enforces strict tangency of the vector field to the manifold surface. In the quadratic case the DRE reduces to an Algebraic Riccati Equation (ARE), and the requirement of spectral resonance—equal contraction rates across all curvature modes—uniquely determines the damping coefficient, directly
yielding the continuous-time Nesterov ODE. Fenichel’s theorem then extends this picture
rigorously to general smooth, strongly convex landscapes: normal hyperbolicity guarantees
persistence of the accelerated manifold despite varying Hessian curvature. The method is further extended to unified geometric derivation of NAG methods for smooth convex and strongly convex optimization in the discrete case. We exploit the underlying geometric structure and derive both cases from the same principle of preserving the projective structure under discretization process. A Lie--Trotter splitting separates the linear dissipative dynamics from the nonlinear gradient flow. The dissipative subsystem is integrated by the Cayley (bilinear)
transform, which preserves the underlying projective (M\"{o}bius) structure unconditionally and
produces the classical Nesterov momentum coefficient as the unique Pad\'{e} multiplier. For the convex case, projective flatness (vanishing Schwarzian derivative) uniquely selects the time-varying damping recovering the canonical Nesterov ODE for convex functions.
\end{abstract}
\paragraph{Keywords:}
Nesterov's accelerated scheme, convex optimization, Normally Attractive Invariant Manifold, Fenichel's theorem

\tableofcontents

\section{Introduction}

Standard first-order optimization methods  struggle with ill-conditioned, ravine-like landscapes. 
While NAG method overcomes this topography, its traditional mathematical justification, the estimation sequence, is epistemically incomplete. It relies on an unexplained, assumed quadratic structure and discovers the optimal momentum merely as an algebraic fixed point, failing to provide any physical or geometric mechanism for why the acceleration actually works. \par

Despite the algorithm’s maturity, the geometric mechanism underlying the acceleration has
remained elusive. \cite{su2016differential} linked Nesterov’s method to a second-order ODE
but the discretization and extension to non-quadratic $f$ remained informal. \cite{Shi2019} derived high-resolution ODEs capturing momentum differences. Despite the method’s maturity, some gaps in its justification persist. 
\begin{enumerate}
    \item \textit{Logical Gap in Acceleration Proofs:} The classical estimation-sequence proof \cite{nesterov2013introductory}, \cite{nesterov2018lectures} imposes a quadratic lower-bound ansatz without geometric justification and identifies the optimal momentum only as an algebraic fixed point (see Appendix \ref{App:Gap} for full discussion). There are obstacles in generalizing to Non-Quadratic functions. While acceleration admits an exact and transparent description for quadratic objectives, extending the analysis to general strongly convex functions introduces substantial mathematical complications. The Hessian matrix becomes time-varying along the optimization trajectory, implying that the spectral structure governing optimal decay is only locally valid. Consequently, the Riccati equation describing the dynamics becomes a variable-coefficient partial differential equation without a closed-form solution, and the invariant structure guiding the dynamics is no longer a linear subspace but a curved nonlinear manifold.
    \item \textit{Spectral mechanism:} The damping value is asserted in ODE-based derivations \cite{su2016differential}, \cite{Shi2019} but never derived from a structural principle.
    \item \textit{Limitations of Physical Analogies:} Simple mechanical interpretations, such as the Heavy-Ball analogy \cite{polyak1964some}, \cite{Arun} ultimately fail to capture the mechanism of acceleration. These models lack the ability to anticipate curvature changes in the objective landscape, which leads to overshooting near the minimum and generates instability and oscillatory behavior. Moreover both NAG and Heavy Ball algorithms arise from the same ODE yet Heavy–Ball diverges for non-quadratic function \cite{lessard2016analysis} while Nesterov converges. The geometric reason has not been explained at the discrete level.
    \item \textit{Discretization without geometric justification:} Existing discretizations of the Nesterov ODE use Euler methods \cite{Shi2019, JMLR_Su} or variational integrators \cite{wibisono2016variational} but are without geometric justification. Conventional numerical integrators, particularly the explicit Euler method, are structurally inadequate for this purpose. Euler discretization uses the gradient only as a local downhill direction and neglects the geometry of the constraint surface. As a consequence, it produces structural defects such as energy drift, departure from invariant manifolds, and violations of geometric invariants. 
\end{enumerate}


By lifting optimization into a second-order phase space with velocity, this work constructs a NAIM framework. The framework enhances standard optimization by introducing a specific, curvature-aware perturbation to the unaccelerated invariant manifold, creating a steeper geometric track that drives faster convergence.  To maintain stable convergence along this new geometric route, strict tangency between the dynamic vector field and the manifold's surface needs to be enforced. The graph's evolving slope is governed by the DRE, which acts as the strict tilt consistency condition that enforces this tangency. The perturbed system must be appropriately damped to ensure stability and adherence to the manifold; the required damping coefficient is not heuristic but emerges as the stable root of the Riccati equation, corresponding to a spectral resonance phenomenon that uniformly equalizes contraction rates across all directions. Applying this geometrically derived damping yields the final continuous-time ordinary differential equation (ODE) of NAG method. To the best of authors knowledge this provides the first purely geometric, structural proof of the so called acceleration step. The proposed framework could be logically extended  to general non-quadratic landscapes through Fenichel's theorem. Finally, to prevent the energy drift and step-size restrictions caused by geometrically blind numerical methods, the framework employs structure-preserving discretization. This involves modularly separating the linear momentum dynamics from the non-linear gradient flow—to ensure that the theoretical speedup translates into the algorithm. 

The strongly convex case is discretized using structure-preserving Lie-Trotter splitting \cite{Blanes2024} and the Cayley transform \cite{VIRGOS2018} to unconditionally stabilize momentum, whereas the convex case is discretized by enforcing projective flatness to yield the geometrically canonical Nesterov ODE. This work justifies the final ODE as a form amenable to structure-preserving discretizations, to ensure the theoretical speedup translates to the actual algorithm without drifting off course.

The present work critiques the mechanistic viewpoint of acceleration, arguing that it obscures the underlying geometric intuition. To address the gaps, we make the following contributions:
\begin{enumerate}
    \item \textit{Formulating acceleration as NAIM graph evolution (Subsection: \ref{subsec:Grpah}):} The work successfully reconceptualizes acceleration by lifting the first-order gradient flow into a second-order phase space, framing the optimal optimization path as a NAIM. The Riccati equation is derived as a strict \textit{tilt consistency condition} that calculates the exact slope required to keep the algorithm perfectly tangent to the optimal path.
    \item  \textit{Spectral resonance arriving at the continuous-time ODE (Lemma: \ref{OptDam}):} Instead of discovering the optimal momentum merely as an algebraic coincidence, the framework evaluates the evolution of the curvature aware perturbed graph to find the exact slope needed for critical damping. This geometrically derived optimal damping triggers spectral resonance, equalizing contraction rates across all dimensions. Applying this specific resonant damping  directly arrives at the continuous-time ODE of Nesterov's method. 
    \item \textit{Closing the logical gap of the estimation sequence (Appendix: \ref{App:Gap}):} The traditional estimation sequence proof is shown to be mathematically valid but epistemically incomplete because it relies on an unexplained quadratic ansatz and discovers the optimal momentum merely as an algebraic coincidence. This work replaces that algebraic inequality transfer with a complete physical and geometric mechanism for why acceleration works.
    \item \textit{Employing Fenichel’s theorem for general landscapes (Theorem \ref{thm:fenichel}):} Instead of relying on a quadratic ansatz to bridge the gap between simple quadratic problems and general non-quadratic landscapes, this work uses Fenichel's theorem, a key result in Geometric singular perturbation theory. Because the algorithm's fast transverse attraction to the manifold strictly dominates any tangential drift (normal hyperbolicity), Fenichel's theorem rigorously guarantees that the optimal accelerated path persists despite varying curvatures. The optimal convergence rate is shown to emerge physically from this continuous$-$time Fenichel spectral gap.
    \item \textit{Structure-preserving discretization (Section \ref{sec:SPD}):} The work demonstrates that standard numerical discretization (like explicit Forward Euler) is geometrically blind to the invariant manifold constraint, which causes energy drift and step-size limitations that cancel out the theoretical acceleration. As a solution, the work contributes a Lie-Trotter splitting method combined with the unconditionally stable Cayley transform to discretize the continuous geometry, perfectly translating the speedup into the actual algorithm without drifting off course.
    \item  \textit{Resolving the strictly convex case via projective geometry (Subsection \ref{subsec:sconvex}):} Finally, for smooth convex functions where strong convexity is absent, the work introduces a novel perspective grounded in projective geometry. It shows that the exact, time-varying damping required for standard convex acceleration emerges naturally by enforcing projective flatness (a zero Schwarzian derivative), which uniquely selects the canonical Nesterov constants.
    \item \textit{Fenichel verification (Proposition \ref{prop:hypotheses}) :} We verify all five hypotheses of Fenichel’s theorem for the NAG fast-slow system on compact sublevel sets.
    \item \textit{Triple-momentum rate ceiling (Corollary \ref{TM}) :}. A cubic Riccati factorization theorem proves that the triple momentum methods \cite{Scoy2018} cannot exceed the NAG method convergence rate.
    \end{enumerate}

The manuscript is organized as follows: In Section \ref{sec:GeoFrame} the NAIM framework is proposed for continuous case and Nesterov's ODE is rigorously derived. In Section \ref{sec:fenichel} a generalization to composite optimization, where optimization function may be non-smooth, is obtained. The discretization of the continuous time ODE for the strongly convex and the convex case is further derived in Section \ref{sec:SPD}. Sections 2–4 develop the core results in self-contained form. All detailed proofs, supporting analysis, and supplementary material are collected in the appendices. Appendices include the NAIM definition on fiber bundles (appendix \ref{App:Fiber}), the Riccati tilt derivation (appendix \ref{App:Tilt}),  resolution of gaps in NAG (appendix \ref{App:Gap}), verification of Fenichel theorem (appendix \ref{App:fenichel}), note on triple momentum dynamics methods (appendix \ref{App:Triple}) and analysis on perturbed NAIMs (appendix \ref{App:Unperturbed}).

\section{NAIM geometric framework}\label{sec:GeoFrame}

Consider the unconstrained optimization problem
\[
\min_{x \in \mathbb{R}^n} f(x)
\]
where $f: \R^n \to R$ is $L$-smooth and $\mu$ strongly convex with $0 < \mu \leq L$. The condition number $\kappa = L /\mu$ characterizes the difficulty of the problem. The continuous time gradient flow $x= - \nabla f(x)$ converges to the minimizer but suffers from slow convergence for ill conditioned problems. Nesterov's accelerated gradient method achieves the optimal convergence rate, and its continuous-time limit takes the form of a damped oscillator equation. 

For a function $f(x)$, we have:
$$ \mu I \le \nabla^2 f(x) \le L I $$

and the corresponding $1/L$-scaled gradient flow of $f(x)$ is defined as:
\begin{equation} \label{eq:flow}
    \dot{x} = -\frac{1}{L} \nabla f(x)
\end{equation}

The second-order ODE takes the following form:

\begin{equation} \label{eq:Nest}
    \ddot{x} + 2 \sqrt{\frac{\mu}{L} } \dot{x} + \nabla f(x) =0    
\end{equation}

The objective of the section is to reinterpret \eqref{eq:Nest} geometrically and to show how acceleration emerges from a structured perturbation of its lifted realization. We show that \eqref{eq:Nest} can be understood geometrically as the dynamics induced by a tilted attracting slow manifold in a lifted state space. The construction proceeds as follows:
\begin{enumerate}
    \item Start from the $1/L$-scaled gradient flow and lift the dynamics to a second-order state space. Realize gradient flow as a slow invariant graph.
    \item Introduce adapted coordinates separating tangential and normal directions. Perturb the normal dynamics so that the invariant graph tilts and show that this tilt induces an accelerated effective descent scale. Derive a Riccati-type invariance condition that balances graph slope, curvature, and normal contraction.
    \item In the quadratic case, recover an exact ARE and obtain the square-root damping law by critical damping of the weakest curvature mode.
\end{enumerate}

Concluding we show that the acceleration is not merely an inertial modification of gradient descent, but rather the manifestation of a curvature-aligned invariant manifold in a lifted dynamical system

\subsection{Step 1: Lifting to Second Order Structures}
The gradient descent algorithm which minimizes $f(x)$ is constrained by the smoothness $L$. Hence, across ill-conditioned level sets, the iterate update is slow along flat terrains and fast along steeper terrains. The issues with getting stuck at saddle nodes remain. Hence the need for algorithms with structures which can be exploited during iterate updates and this is possible only through lifted coordinates. Though this doesn't change the dimension of $f(x)$, it adds dimension to the algorithm which finds the minimum of $f(x)$.

Introduce lifted coordinates $(x_1, x_2) \in \mathbb{R}^n \times \mathbb{R}^n$ and consider the second-order system:
\begin{align}\label{eq:1}
    \dot{x}_1 &= x_2 \\
    \dot{x}_2 &= u_{in} + u_{a}
\end{align}
where $u_{in}$ and $u_{a}$ are virtual controls. 
The lifted coordinates should finally result in convergence to the manifold $$ M_0 = \{ (x_1, x_2) \in R^{2}\mid x_2 + \frac{1}{L} \nabla f(x_1) = 0 \}$$
on which $\dot x_1=-\frac{1}{L} \nabla f(x_1) $ recovering \ref{eq:flow}. 

The objective of $u_{in}$ and $u_a$ is to design them in such a way as to ensure that the desired manifold $M_0$ is invariant and attractive.
\begin{enumerate}
    \item The invariant property ensures that once a trajectory is on the manifold $M_0$, it stays on $M_0$. The virtual control $u_{in}$ is designed to make the manifold $M_0$ invariant.

    We define $z$ as the transverse coordinate which measures how far the state $(x_1,x_2)$ is from the candidate manifold $M_0$, which is defined by ${z=0}$.
    
        \begin{align} \label{eq:3}
        \dot{x}_1 &= z-\frac{1}{L} \nabla f(x_1)  \nonumber\\ 
        \dot{z} &= - \alpha z
        \end{align}
    
    On the manifold $M_0$ the invariance condition implies $z=0 \implies \dot{z}=0$.
    $$ \dot{z} = \dot{x}_2 + \frac{1}{L} \nabla^2 f(x_1) \dot{x}_1 = u_{in} + \frac{1}{L} \nabla^2 f(x_1) x_2 $$
    Setting $z=0$ gives the design for the internal control $u_{in}$:
    \begin{equation}
     u_{in} = -\frac{1}{L} \nabla^2 f(x_1) x_2 \label{uin}   
    \end{equation}
    As a control law $u_{in}$ cancels exactly the drift of $z$ along the system flow, keeping trajectories on $M_0$ once they reach it. 
    \item Trajectories that are not on the manifold are naturally and transversally contracted back towards it. To ensure attractivity of the manifold, we set the dynamics of $z$ to be:
    $$ \dot{z} = -\mu z $$ which makes $z(t)=z(0) e^{-\mu t} \to 0$ exponentially (where $\mu$ is the rate of contraction). 
    Substituting back:
    \begin{equation} \label{ua}
        u_{a} = -\mu \left( x_2 + \frac{1}{L} \nabla f(x_1) \right) = -\mu z
    \end{equation}
   The transverse derivative of $z$ is strictly negative, pulling all off-manifold trajectories back toward $M_0$.
\end{enumerate}

The full virtual control is the orthogonal sum of the two components:
\begin{equation}\label{eq:u_total}
  u \;=\; u_{im} + u_{a}
    \;=\; -\frac{1}{L}\,\nabla ^2 f(x_{1})x_{2}
          \;-\;\mu\!\left(x_{2}+\frac{1}{L}\,\nabla f(x_{1})\right).
\end{equation}
where $u_{im}$ acts tangentially to $M_0$ (maintains structure) and $u_{a}$ acts transversely to $M_0$ (drives convergence onto $M_0$).

\begin{remark}[Time-scale separation] The unperturbed manifold $M_0$ represents the slow dynamics of the system. This evolves at a slow time scale related to the inverse of the smoothness constant $L$ (i.e., a rate of $O(1/L)$).  The transverse fiber dynamics governed by \eqref{ua} contracts at rate $\mu$. The ratio $\kappa = L/ \mu$ quantifies the time-scale separation that underlies the NAIM structure.
\end{remark}

Replacing $z$ with $v$ to show representation in adapted coordinates as:
\begin{align} \label{eq:4}
\dot{x}_1 &= v-\frac{1}{L} \nabla f(x_1)  
\end{align}
\begin{align}\label{eq:5}
\dot{v} &= - \alpha v
\end{align} 
where $\alpha > 0 $ is the transverse contraction rate. The system~\eqref{eq:4}--\eqref{eq:5} is the NAIM with slow dynamics along $M_0$ and fast fibers contracting exponentially.


\subsection{Step 2: Representing NAIM as a Graph} \label{subsec:Grpah}

In the lifted space $(x, v) \in \R^n \times \R^n$, the state naturally separates into a base
component $x$ (position, governing tangential progress) and a fiber component $v$ (velocity,
governing transverse attraction).
 
\begin{definition}[Graph manifold]
An invariant graph manifold with slope operator $P \in \R^{n \times n}$ is
\[
  \mathcal{M}_P = \{(x,v) \in \R^{2n} \mid v = Px\}.
\]
The slope $P = D\sigma(x)$ is the Jacobian of the graph map $\sigma : \R^n \to \R^n$.
\end{definition}
 
\noindent For the unperturbed manifold $\mathcal{M}_0$, the slope is $P_0 = -\tfrac{1}{L}\nabla^2 f(x_1)$, which is negative semidefinite by $L$-smoothness.
 
\begin{remark}[Grassmannian interpretation]
Tracking the slope $P(t)$ is equivalent to tracing a path on the Grassmannian
$\mathrm{Gr}(n, 2n)$---the space of all $n$-dimensional subspaces of $\R^{2n}$. The DRE
below is the equation of a geodesic flow on $\mathrm{Gr}(n, 2n)$ with the Riemannian structure
induced by the Hessian.
\end{remark}
 
\begin{remark}[Non-quadratic landscapes]
For a general smooth, strongly convex $f$, the Hessian $H_f(x_1(t))$ varies along the
trajectory and the invariant manifold cannot maintain a constant slope. The graph must be
represented by a time-varying relation $v = P(t)x$, and the tangency condition expands into
the DRE introduced in Section~\ref{subsec:SR}.
\end{remark}
 
The unperturbed system~\eqref{eq:4}--\eqref{eq:5} converges to $\mathcal{M}_0$ but inherits the same slow tangential rate $O(1/L)$ as gradient descent. To achieve acceleration, we introduce the perturbation
\begin{equation}
  \dot{x}_2 = -\alpha x_2 - \nabla f(x_1), \label{eq:perturb}
\end{equation}
which modifies the fiber dynamics. In adapted coordinates, the perturbed system (retaining
leading-order terms in the fast--slow splitting) becomes:
\begin{align}
  \dot{x} &= v, \label{eq:ps1} \\
  \dot{v} &= -\alpha v - \nabla f(x). \label{eq:ps2}
\end{align}
\noindent To overcome this limitation and achieve acceleration, a curvature-aware perturbation is introduced, shifting the system onto a new perturbed manifold ($M_\varepsilon$) with a steeper slope that drives trajectories towards the optimum at a higher velocity. 

\begin{equation} \label{eq:8}
\dot{x}_2 = -\alpha x_2 - \nabla f(x_1) \nonumber
\end{equation}
This is equivalent to stating that, for acceleration:
\begin{align}\label{eq:9}
    \dot{x}_1 &= x_2 \nonumber\\
    \dot{x}_2 &= u_{in} + u_{a}+u_{per} \nonumber
\end{align}
where $u_{per}=-\alpha x_2 - \nabla f(x_1) $.

Let $M_\epsilon$ denote the perturbed invariant manifold. Representing the evaluation of the perturbation with respect to the adapted coordinates, from (\ref{eq:5}), 

\begin{equation*}
\dot{v} = -\alpha x_2 - \nabla f(x_1) + \frac{1}{L} \nabla^2 f(x_1) \dot{x}_1
\end{equation*}
\begin{align*}
\dot{v} &= -\alpha \left[ v - \frac{1}{L} \nabla f(x_1) \right] - \nabla f(x_1) + \frac{1}{L} \nabla^2 f(x_1) \left[ v - \frac{1}{L} \nabla f(x_1) \right] \\
&= -\alpha v - \nabla f(x_1) + \frac{1}{L} \nabla^2 f(x_1) v - \frac{1}{L^2} \nabla^2 f(x_1) \nabla f(x_1) + \frac{\alpha}{L} \nabla f(x_1)\\
&= \underbrace{-\alpha v - \nabla f(x_1)}_{\text{Leading order terms} }  +\underbrace{\frac{1}{L} \nabla^2 f(x_1) v - \frac{1}{L^2} \nabla^2 f(x_1) \nabla f(x_1) + \frac{\alpha}{L} \nabla f(x_1)}_{\text{ Higher order terms} }  
\end{align*}

The leading order refers to the first term in an asymptotic expansion in the small parameter $\epsilon$. For the slow manifold, this gives you the critical manifold $M_0$ (solution to the layer problem at $\epsilon=0$), which is the zeroth order approximation.

\begin{align}\label{eq:10}
    \dot{x}_1 &= v - \frac{1}{L} \nabla f(x_1)  
    \end{align}
    \begin{align}\label{eq:13}
    \dot{v} &= -\alpha v - \nabla f(x_1) + \text{Higher Order Terms} 
\end{align}

Since $v=x_2+\frac{1}{L} \nabla f(x_1)$, \\
\begin{equation}\label{eq:14}
    \dot v= \dot x_2+\frac{1}{L} \nabla f(x_1) \dot x_1 \\
    =-\alpha x_2 - \nabla f(x_1)+\frac{1}{L} \nabla f(x_1) \dot x_1 
\end{equation}

\noindent From (\ref{eq:13}): $\dot v=-\alpha v-\nabla f(x_1)$, indicating that when $\dot v=0, v=-\frac{1}{\alpha} \nabla f(x_1)$. Since $\dot x_2=-\alpha x_2 - \nabla f(x_1)$, $\dot x_2=0 \implies x_2=-\frac{1}{\alpha} \nabla f(x_1)$. This results in the conclusion $x_2=v$.\\
The final adapted coordinate representation of the perturbed dynamics are:
\begin{align}\label{eq:15a}
     \dot{x} &= v\\
     \dot v&=-\alpha v-\nabla f(x) \label{eq:15b}
\end{align}

\noindent The invariance of $M_\epsilon$ is now established. 

\begin{definition}[Adapted fiber coordinate]\label{def:w}
  \[
    w \;=\; v - Px
  \]
$w$ measures the distance from $M_0$ in the fiber direction: $w=0\Leftrightarrow(x,v)\in M_\epsilon$.
\end{definition}
\noindent The adapted fiber $w=v-Px$ measures the transverse deviation from the candidate manifold $M_\epsilon=\{(x,v) \mid v=Px\}$.\\

\noindent Conditions for invariance of  $M_\epsilon $ require: if $w=0\implies \dot w=0$.
\begin{align}\nonumber
    \dot w|_{w=0} &=\dot v-\dot P x-P \dot x \nonumber\\
    &=-\alpha(w+Px)-\nabla f(x)-\dot P x-P \dot x\nonumber\\
    &=-\alpha Px-H_f x-\dot P x-P^2 x (where \nabla f(x)\approx H_f x)\nonumber\\
    &=-(\alpha+P)w-[\dot P+P^2+\alpha P+H_f]x \nonumber
\end{align}
Thus, when $w=0$, $\dot w=0$ which requires:
\begin{equation}\label{eq:11}
    \dot P +P^2 +\alpha P+H_f =0
\end{equation}

Discussions:

\begin{enumerate}
    \item For the perturbed, steeper graph to be a valid track for the algorithm, it must be strictly invariant—meaning trajectories starting on it must never leave it. Geometrically, invariance requires that the optimization vector field remains tangent to the graph's surface everywhere. The rate of change in the fiber direction computed via the physical vector field must exactly match the rate of change computed by following the graph's slope along the base motion.
    \item It is required to check how the graph evolves under perturbation, ensuring that it remains orthogonal to $M_\epsilon$. This requires that the evolution of the graph due to perturbation should just grace through $M_\epsilon $ without destroying it. This behaviour can be evaluated through the evolution of the graph and ensuring that the evolution is critically damped. Since the graph has two coordinates $x$ and $v$ related through $P$, we evaluate the evolution of $P$ through the Riccati equation. The ensuing conditions for critically damped oscillations results in finding the slope and graph relation for $M_\epsilon$.    
    \item Equation (\ref{eq:11}) is a DRE, which indicates the evolution of the slope $P$. Instead of tracking the evolution of the base and fiber coordinates (in $R^{2n}$), one tracks the evolution of $n$-dimensional linear subspace that describes the relationship between $x$ and $v$. The space of all possible $n$-dimensional planes passing through the origin of a $2n$-dimensional space is a foundational concept in projective geometry called a Grassmannian manifold. $\dot{P}$ represents the literal velocity of the manifold's tilt as it traces a continuous path across the surface of the Grassmannian manifold $\Gr(n, 2n)$. The notion of projective geometry is crucial because it replaces the concept of absolute distance with the robust concept of relative ratio. Projective geometry works by focusing only on ratios and relative positions, ignoring absolute scale. This makes it ideal for flat or degenerate landscapes. 
    \item The addition of the perturbation $M_\epsilon$ has a profound effect on the system's dynamics. It creates a singularly perturbed system—one that evolves on two (or more) distinct time scales. The dynamics are split into two components. \begin{enumerate}
        \item The Fast Inner Layer Dynamics: These are the transverse dynamics, responsible for the \textit{Normally Attractive} property. They operate on a very fast time scale (related to the strong convexity constant $\mu$) and decay exponentially. Their sole purpose is to rapidly pull any off-manifold state $x$ onto the perturbed manifold $M_\epsilon$. 
        \item The Slow Perturbed Outer Layer: These are the dynamics on the manifold $M_\epsilon$ itself. This is the \textit{slow flow} that governs the system's evolution towards the optimal solution $x^*$.
    \end{enumerate}

    \item In the lifted Nesterov system, perturbative coupling between the slow coordinates $(x, v = Px)$ and the transverse fiber coordinate w may destroy uniform normal hyperbolicity of the nominal slow manifold. Nevertheless, the manifold persists in a weaker, nonuniform sense whenever the perturbation contributes sufficient dissipation in the transverse directions. In this regime, the transverse dynamics remain contracting on average, and one obtains an asymptotically attracting invariant graph $w = h(x)$, even though the classical uniform spectral gap condition fails. Thus the reduced slow dynamics retain geometric meaning not because the perturbation is small, but because it is geometrically stabilizing: transverse contraction compensates for time–scale mixing and ensures that trajectories are still funneled toward a nonuniform NAIM surrogate.
    \item The details are derived in the appendix: \ref{App:Tilt} where the Ricatti condition is derived from the tilt consistency condition.
\end{enumerate}

\subsection{Step 3: Spectral Resonance and the Nesterov ODE} \label{subsec:SR}
\begin{Proposition}
    (Riccati Invariance Condition). For the manifold $\{w = 0\}$ to be invariant under the flow, the operator $P$ must satisfy the matrix Riccati equation.
 \begin{equation}
     \dot{P} + P^2 + \lambda P + H_f=0, \label{ric}
 \end{equation}
where $H_f = \nabla^2f(x)$ is the Hessian of $f$.
\end{Proposition}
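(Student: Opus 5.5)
We prove the statement by direct differentiation of the adapted fiber coordinate $w = v - Px$ of Definition~\ref{def:w} along the perturbed flow \eqref{eq:15a}--\eqref{eq:15b}, with damping $\lambda$ in place of $\alpha$. The idea is to write $\dot w$ as a part that vanishes on $\{w=0\}$ plus a residual linear in $x$, and then read the Riccati equation off from the requirement that this residual vanish for every $x$. This is the same tilt-consistency computation that produced \eqref{eq:11}, now written with $\lambda$ and the exact Hessian.

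\textbf{Step 1: linearise the gradient.} We work in coordinates centred at the minimiser, so $x^*=0$, and use the linearisation $\nabla f(x) = H_f x$. For quadratic $f$ this is exact with constant $H_f$. For general $f$ it is the first-order Taylor expansion around $x^*$, and the remainder $O(\|x\|^2)$ is the higher-order term already discarded in \eqref{eq:13}.

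\textbf{Step 2: compute $\dot w$.} By the product rule,
\[
\dot w = \dot v - \dot P x - P\dot x .
\]
Substitute $\dot x = v = w + Px$ and $\dot v = -\lambda v - H_f x = -\lambda(w+Px) - H_f x$. Grouping terms gives
\[
\dot w = -(\lambda I + P)\,w \;-\; \bigl[\dot P + P^2 + \lambda P + H_f\bigr]x .
\]
The first term vanishes identically on $\{w=0\}$ and governs the transverse contraction. On $\{w=0\}$ we are therefore left with $\dot w = -R(t)\,x$, where $R(t) = \dot P + P^2 + \lambda P + H_f$.

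\textbf{Step 3: tangency forces $R=0$.} Invariance means $\dot w|_{w=0}=0$ for every point $(x,Px)$ of the graph. Since $x$ ranges over all of $\R^n$, the matrix $R(t)$ must vanish, which is \eqref{ric}. Conversely, if $R\equiv 0$, then $w$ satisfies the linear homogeneous equation $\dot w = -(\lambda I + P(t))w$. By uniqueness of solutions, $w(0)=0$ implies $w\equiv 0$, so the condition is also sufficient. This equivalence is the geometric content of tangency: the vector field is tangent to $\mathcal{M}_P$ exactly when $\dot v$ computed from the flow equals $\frac{d}{dt}(Px)$ computed along the base motion.

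\textbf{Main obstacle.} The delicate point is justifying the linear graph $v=Px$ when $H_f$ is not constant. Two issues arise.
\begin{itemize}[label={}]
\item First, $P$ must be allowed to depend on $t$ through $H_f(x(t))$. Here we interpret $\dot P$ as the total derivative along the trajectory, as in the non-quadratic remark above.
\item Second, one must check that the neglected terms, namely the Taylor remainder and the higher-order terms of \eqref{eq:13}, enter only the $O(\|x\|^2)$ part of $\dot w$ and do not affect the coefficient of $x$.
\end{itemize}
I would handle this by treating the Proposition as an exact statement for the linearised (or quadratic) system. The persistence of the nonlinear invariant manifold near this graph is deferred to the Fenichel argument of Theorem~\ref{thm:fenichel}.
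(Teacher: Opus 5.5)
Your proposal is correct and is essentially the paper's proof: differentiate $w = v - Px$ along the perturbed flow, linearise $\nabla f(x) \approx H_f x$, and require $\dot w = 0$ on $\{w=0\}$, which leaves exactly $-(\dot P + P^2 + \lambda P + H_f)x = 0$. Your additions are sound refinements rather than a different route: keeping the $-(\lambda I + P)w$ term (the paper derives this in the very next proposition), stating explicitly that $x$ ranges over all of $\R^n$, and proving sufficiency via uniqueness for $\dot w = -(\lambda I + P(t))w$.
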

\begin{proof}
Compute the time evolution of the adapted coordinate: 
$$ w = \dot{v} - \dot{P} x - P \dot{x}.$$
Substituting the dynamics \eqref{eq:ps2}:
$$ \dot{w} = (- \lambda v - \nabla f(x)) - \dot{P} x - Pv.$$

On the manifold where $v = Px$, and linearizing $\nabla f(x) \approx H_f x$ near critical points: 
$$ \dot{w}\big|_{w=0} = - \lambda P x - H_f x - \dot{P}x - P^2x.$$

For invariance, we require $\dot{w}\big|_{w=0} = 0$, which yields \eqref{ric}.      
\end{proof}

At steady state ($\dot{P} = 0$), the Riccati equation \eqref{ric} reduces to the ARE: 
 \begin{equation}
     P^2 + \lambda P + H_f=0 
 \end{equation}
 
$H_f$ is symmetric with eigenvalues in $[\mu, L]$ (by strong convexity and smoothness). 

To analyze stability, we compute the full transverse dynamics for $w \neq 0.$
\begin{Proposition}
    The adapted coordinate evolves according to
\begin{equation}
    \dot{w} =-(\lambda + P)w \label{eqnw}   
\end{equation}
 \end{Proposition}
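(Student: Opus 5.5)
The plan is a direct computation of $\dot w$ for the adapted coordinate $w = v - Px$ of Definition~\ref{def:w}, this time without restricting to $w=0$. The perturbed dynamics \eqref{eq:ps1}--\eqref{eq:ps2} are used with damping $\lambda$ in place of $\alpha$, together with the same linearization $\nabla f(x)\approx H_f x$ used in the proof of the Riccati Invariance Condition. The structure of the argument is: differentiate, rewrite every occurrence of $v$ as $w+Px$, and then show that all terms proportional to $x$ cancel by virtue of \eqref{ric}.

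\textbf{Step 1 (differentiate).} By the product rule,
\[
\dot w = \dot v - \dot P x - P\dot x = -\lambda v - H_f x - \dot P x - P v .
\]
This uses $\dot x = v$ and $\dot v = -\lambda v - \nabla f(x)$, with the gradient replaced by $H_f x$.

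\textbf{Step 2 (split into $w$-part and $x$-part).} Substituting $v = w + Px$ in the two terms that contain $v$ gives
\[
\dot w = -(\lambda + P)w - \bigl[\dot P + P^2 + \lambda P + H_f\bigr]x .
\]
This is the same grouping already displayed before \eqref{eq:11}. The $w$-terms collect cleanly because $\lambda$ is scalar: $\lambda w + Pw = (\lambda I + P)w$. The $x$-terms $\lambda P x + P^2 x$ combine with $H_f x$ and $\dot P x$ into the Riccati bracket.

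\textbf{Step 3 (use the Riccati condition).} The slope $P$ defining $\mathcal{M}_\epsilon$ is required to satisfy \eqref{ric}, or the ARE in the steady-state case. Under that requirement the bracket vanishes identically in $x$, not merely on $\{w=0\}$, because it is an operator identity independent of the state. What remains is $\dot w = -(\lambda + P)w$, which is \eqref{eqnw}.

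\textbf{Main obstacle.} The delicate point is Step 1's linearization $\nabla f(x)\approx H_f x$. For quadratic $f$ centered at the minimizer (after shifting $x^*$ to the origin) it is exact. In that case the proposition holds exactly with constant $H_f$ and, at steady state, constant $P$.

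For general $f$ I would write $\nabla f(x) = \bar H(x)\,x$, with the averaged Hessian $\bar H(x) = \int_0^1 \nabla^2 f(sx)\,ds$. This matrix still has spectrum in $[\mu,L]$. I would then read \eqref{ric} with $H_f$ replaced by $\bar H(x(t))$ along the trajectory. This keeps the identity exact, at the cost of a state-dependent Riccati coefficient.

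I would state the result in the quadratic/linearized setting, consistent with the preceding Proposition, and flag the remainder term for the Fenichel treatment in Section~\ref{sec:fenichel}.
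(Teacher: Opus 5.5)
Your proof is correct and is the same argument as the paper's: you differentiate $w = v - Px$ and substitute $v = w + Px$ to get $\dot w = -(\lambda I + P)w - [\dot P + P^2 + \lambda P + H_f]x$, then eliminate the bracket with the Riccati equation \eqref{ric}, relying like the paper on the linearization $\nabla f(x) \approx H_f x$. Your averaged-Hessian device $\nabla f(x) = \bar H(x)\,x$ (minimizer at the origin) is a harmless refinement that makes the identity exact for general $f$, at the cost of a trajectory-dependent $P$; the paper's DRE with $H_f(x(t))$ already implicitly accepts that cost.
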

\begin{proof}
    For general $w$, we have $v = w + Px$. Substituting into the dynamics:
\begin{align*}
    \dot{w}&=\dot{v}-\dot{P}x-P \dot{x} \\
        &= -\lambda (w + Px) - \nabla f(x) - \dot{P} x - P(w + Px) \\
        &= -(\lambda + P) w - [\dot{P}+P^2+\lambda P+H_f]x
\end{align*}
The DRE \eqref{ric} eliminates the $x$-dependent term, yielding \eqref{eqnw}. Exponential attraction follows because the transverse contraction rate matrix $\lambda I + P$ has strictly positive eigenvalues under the conditions established below.
\end{proof}

\noindent In the quadratic case $f(x) = \tfrac{1}{2}x^T H x$ with $H$ constant, the DRE~\eqref{ric}
reduces to the ARE
\begin{equation}
  P^2 + \lambda P + H = 0, \label{eq:are}
\end{equation}
since $\dot{P} = 0$ at steady state. Because $H \succ 0$, the stable root (negative definite) is
\begin{equation}
  P^* = -\tfrac{\lambda}{2}I - \sqrt{\tfrac{\lambda^2}{4}I - H},
  \label{eq:stable_root}
\end{equation}
where the square root is taken in the operator sense.

\begin{lemma}[Optimal damping---spectral resonance]
\label{lem:resonance}
For a $\mu$-strongly convex quadratic objective, the optimal damping coefficient that equalizes
transverse contraction rates across all curvature modes is
\begin{equation}
  \lambda^* = 2\sqrt{\mu}. \label{eq:lambda_star}
\end{equation}
\end{lemma}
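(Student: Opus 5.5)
Since $H$ is constant and symmetric, we diagonalize $H=Q\,\mathrm{diag}(h_1,\dots,h_n)Q^\top$ with $h_i\in[\mu,L]$. Any solution of the ARE \eqref{eq:are} that is a function of $H$ is simultaneously diagonal in this basis. The matrix equation therefore decouples into scalar quadratics $p_i^2+\lambda p_i+h_i=0$, whose roots are
\[
p_i^{\pm}=-\tfrac{\lambda}{2}\pm\sqrt{\tfrac{\lambda^2}{4}-h_i}.
\]
By \eqref{eqnw} together with the ARE, the transverse coordinate obeys $\dot w=-(\lambda I+P)w$. In each mode we also have $\dot x_i=p_i x_i$ on the graph, so the eigenvalue of $\lambda+P$ in mode $i$ is $-p_i^{\mp}$, again a root. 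The real contraction rate of mode $i$, both tangential and transverse, is therefore
\[
\rho_i(\lambda)=\min\{-\mathrm{Re}\,p_i^+,\,-\mathrm{Re}\,p_i^-\}.
\]
The plan is to compute $\rho_i(\lambda)$ explicitly in each regime and then choose $\lambda$ to maximize the worst mode, $\min_i\rho_i$, arguing that the maximizer is exactly the value that equalizes the rates.

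Next I would split into cases. If $\lambda^2/4<h_i$ (underdamped), the roots are complex and $\rho_i=\lambda/2$, independent of $h_i$. If $\lambda^2/4\ge h_i$ (overdamped or critical), the slow root gives
\[
\rho_i=\tfrac{\lambda}{2}-\sqrt{\tfrac{\lambda^2}{4}-h_i}=\frac{h_i}{\tfrac{\lambda}{2}+\sqrt{\tfrac{\lambda^2}{4}-h_i}},
\]
which is increasing in $h_i$ and decreasing in $\lambda$. The weakest mode is always $h=\mu$. As a function of $\lambda$, $\rho_\mu(\lambda)$ increases as $\lambda/2$ while $\lambda<2\sqrt\mu$. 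Beyond $2\sqrt\mu$ it decreases: at $\lambda=2\sqrt\mu$ the rate equals $\sqrt\mu$, and thereafter it is strictly less than $\sqrt\mu$ and tends to $\mu/\lambda$. Hence $\max_\lambda\min_i\rho_i$ is attained uniquely at $\lambda^*=2\sqrt\mu$, which is critical damping of the weakest mode.

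To connect this with the ``equalizing'' phrasing, I would then observe what happens at $\lambda^*$. Every mode with $h_i\ge\mu$ satisfies $\lambda^{*2}/4=\mu\le h_i$, so it is critically damped or underdamped, and all modes share the common rate $\rho_i=\lambda^*/2=\sqrt\mu$. This is the spectral resonance. Conversely, the rates can all be equal and real to $\lambda/2$ only if $\lambda^2/4\le\mu$, so the resonant $\lambda$ form the interval $(0,2\sqrt\mu]$, and the largest common rate is reached at its endpoint. The stable root \eqref{eq:stable_root} is well defined at $\lambda^*$ once the operator square root is read as the principal complex root. In particular $\lambda I+P^*$ has eigenvalues with real part $\sqrt\mu>0$, which gives the positivity claimed in the proof of \eqref{eqnw}. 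Substituting $\lambda^*$ into \eqref{eq:ps2} yields \eqref{eq:Nest} after the $1/L$ time rescaling, in which $\mu$ becomes $\mu/L$.

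The main obstacle I expect is interpretive rather than computational. For $\lambda<2\sqrt{h_i}$ the ``stable root'' $P^*$ is complex, so the real invariant graph $v=Px$ has to be replaced by the real two-dimensional invariant subspace spanned by the real and imaginary parts of each mode. I would handle this by stating the optimality in terms of the spectral abscissa of the block matrix $\begin{pmatrix}0&I\\-H&-\lambda I\end{pmatrix}$, whose eigenvalues are exactly the $p_i^\pm$. With that formulation the min–max argument above is rigorous, and the Riccati picture is recovered at $h=\mu$, where the root is real and double.
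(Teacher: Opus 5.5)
Your proof is correct. It rests on the same skeleton as the paper's proof: decouple the ARE into the scalar quadratics $p_i^2+\lambda p_i+\sigma_i=0$, then impose critical damping on the weakest mode $\sigma=\mu$. Where you differ is in making ``rate'' and ``optimal'' precise.

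The paper takes $\rho_i=\lambda+p_i$, with $p_i$ the more negative root, and treats it as real even when $\lambda^2<4\sigma_i$. This produces the complex expression $\rho_L=\sqrt{\mu}-\sqrt{\mu-L}$ together with the mutually inconsistent claims $\rho_L<\sqrt{\mu}$ and $\rho_i\ge\sqrt{\mu}$. The paper also argues optimality only one-sidedly, noting that $\lambda>2\sqrt{\mu}$ overdamps the slow mode.

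Your $\rho_i=\min\{-\mathrm{Re}\,p_i^+,-\mathrm{Re}\,p_i^-\}$ is exactly the real part of the paper's $\lambda+p_i$. The identity $\lambda+p_i^{\pm}=-p_i^{\mp}$ makes this clear, and it also shows that the tangential and transverse rates are the two roots of one quadratic. With this definition you prove two-sided monotonicity of $\rho_\mu(\lambda)$. It follows that $\lambda^*=2\sqrt{\mu}$ is the unique maximizer of the worst-mode rate, and that every mode then decays at exactly $\sqrt{\mu}$.

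You also make explicit something the lemma's phrasing hides. Equalization alone only confines $\lambda$ to $(0,2\sqrt{\mu}]$; it is the maximization that selects the endpoint.

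The price of your approach is that, for $\sigma_i>\mu$, you must give up the real graph $v=Px$ and work with real two-dimensional invariant subspaces, i.e.\ the spectral abscissa of the companion matrix. The paper keeps the Riccati-graph picture, but at the cost of rigor.

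One point should be stated explicitly in either version. Your max--min argument actually selects $2\sqrt{\lambda_{\min}(H)}$. So the step ``the weakest mode is $h=\mu$'' needs either $\mu$ to be the sharp strong-convexity constant, or a worst case taken over all quadratics with spectrum in $[\mu,L]$.
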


\begin{proof}
Each eigenvalue $\sigma_i \in [\mu, L]$ of $H$ gives an independent scalar ARE
$p_i^2 + \lambda p_i + \sigma_i = 0$ with stable root
$p_i = -\tfrac{\lambda}{2} - \sqrt{\tfrac{\lambda^2}{4} - \sigma_i}$.
The transverse contraction rate at mode $i$ is
\begin{equation}
  \rho_i = \lambda + p_i = \tfrac{\lambda}{2} - \sqrt{\tfrac{\lambda^2}{4} - \sigma_i}.
  \label{eq:rho}
\end{equation}
For real roots (no oscillation) we need $\lambda^2 \ge 4\sigma_i$. The critical damping
condition at the weakest mode $\sigma_i = \mu$ sets $\lambda = 2\sqrt{\mu}$, giving
$p_{\min} = -\sqrt{\mu}$ and transverse rate $\rho_{\min} = 2\sqrt{\mu} - \sqrt{\mu} = \sqrt{\mu} > 0$.
For the strongest mode $\sigma_i = L$, the condition $\lambda^2 \ge 4L$ requires
$\lambda \ge 2\sqrt{L}$. Choosing $\lambda^* = 2\sqrt{\mu}$ gives
$\rho_L = \sqrt{\mu} - \sqrt{\mu - L} < \sqrt{\mu}$ (still positive since $\mu \le L$),
and one verifies that all modes satisfy $\rho_i \ge \sqrt{\mu}$.

The choice $\lambda^* = 2\sqrt{\mu}$ is optimal in the sense that it matches the intrinsic
convergence rate of the slow manifold ($\sqrt{\mu}$) with the transverse contraction rate,
preventing any single mode from bottlenecking convergence. Increasing $\lambda$ beyond
$2\sqrt{\mu}$ overdamps the slow mode and reduces the overall convergence rate.
\end{proof}

\begin{remark}[Spectral resonance as a gauge condition]
The relation $\lambda^* = -2p^*$ (equivalently $\lambda^* + p^* = -p^* = \sqrt{\mu}$) serves as
a gauge condition aligning fiber dynamics with base dynamics. It prevents any spectral mode
from dominating and ensures balanced contraction across all directions.
\end{remark}
\begin{remark}[Curvature alignment]
The relation $\lambda^* = -2p^*$ can also be interpreted as Curvature alignment condition. The fiber dynamics $\dot{w} = - \sqrt{\mu}$ matches the slow dynamics on the manifold $\dot{x} = \sqrt{\mu}$ . So with this condition the fiber dynamics becomes tangential to the slow manifold dynamics and the curvatures are matched. Solution of the ARE follows from the tilt of the manifold. This tilt ensures invariance leads to the critically dampened root $\sqrt{\mu}$ which is on the slow manifold. The contraction of fast fibre is so adjusted so that it aligns with the slow manifold. This is called curvature alignment which gives further rise to NAG--ODE. 
\end{remark}

\begin{lemma}\label{OptDam}
    With the optimal damping $\lambda^* = 2 \sqrt{\mu}$, the perturbed system~\eqref{eq:ps1}--\eqref{eq:ps2}  is equivalent to the second-order ODE
    \begin{equation}\label{eq:NestODE}
    \ddot{x} + 2 \sqrt{\mu} \dot{x} + \nabla f(x) = 0
    \end{equation} 
\end{lemma}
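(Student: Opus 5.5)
The plan is a direct elimination of the velocity variable, followed by a check that the resulting equation matches \eqref{eq:Nest} up to the paper's normalization. First I substitute $\alpha=\lambda^*=2\sqrt{\mu}$ into the perturbed system \eqref{eq:ps1}--\eqref{eq:ps2}. Then I show the two systems are equivalent in both directions: every $C^2$ solution of \eqref{eq:NestODE} yields a solution $(x,\dot x)$ of the first-order system, and conversely.

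\textbf{Forward direction.} Let $(x,v)$ solve \eqref{eq:ps1}--\eqref{eq:ps2} with $\alpha=2\sqrt{\mu}$. Since $\nabla f$ is Lipschitz ($L$-smooth), the right-hand side is $C^0$ in $(x,v)$, so $v$ is $C^1$. Because $\dot x=v$, $x$ is $C^2$ and $\ddot x=\dot v$. Substituting into \eqref{eq:ps2} gives $\ddot x=-2\sqrt{\mu}\,\dot x-\nabla f(x)$, which is \eqref{eq:NestODE}.

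\textbf{Converse direction.} Given a $C^2$ solution $x$ of \eqref{eq:NestODE}, set $v:=\dot x$. Then \eqref{eq:ps1} holds by definition, and \eqref{eq:ps2} is exactly \eqref{eq:NestODE} rewritten. For the correspondence to be a genuine equivalence and not just an inclusion of solution sets, I invoke Picard--Lindel\"of. The vector field $(x,v)\mapsto(v,-2\sqrt{\mu}v-\nabla f(x))$ is globally Lipschitz with constant at most $\max(1,L+2\sqrt{\mu})$. Hence initial data $(x_0,v_0)$ for the system and $(x(0),\dot x(0))=(x_0,v_0)$ for the ODE determine unique global solutions, and the maps above are mutually inverse bijections of solution sets.

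\textbf{Main obstacle: normalization.} The algebra is trivial. The real issue is reconciling \eqref{eq:NestODE}, with damping $2\sqrt{\mu}$, with the form \eqref{eq:Nest}, with damping $2\sqrt{\mu/L}$.

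\begin{itemize}
\item Under the time rescaling $t=\sqrt{L}\,s$, write $\tilde x(s)=x(\sqrt{L}s)$. Then $\tilde x''=L\ddot x$ and $\tilde x'=\sqrt{L}\dot x$, so \eqref{eq:NestODE} becomes $\tilde x''+2\sqrt{\mu}\sqrt{L}\,\tilde x'+L\nabla f(\tilde x)=0$.
\item Equivalently, replace $f$ by $f/L$, so that $\mu\mapsto\mu/L$ and $L\mapsto1$. Then $\lambda^*=2\sqrt{\mu/L}$ and \eqref{eq:Nest} is recovered verbatim.
\end{itemize}

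I will record this rescaling as a remark. This also explains why $\lambda^*$ is stated with $\sqrt{\mu}$ in Lemma~\ref{lem:resonance}: the curvature spectrum was not normalized there.

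Finally, I will note that the identification $v=x_2$ made in the derivation of \eqref{eq:15a}--\eqref{eq:15b} is what allows \eqref{eq:ps1}--\eqref{eq:ps2} to be read in the original lifted coordinates. So the equivalence holds in the physical variables $(x_1,x_2)$ as well, with no dependence on the adapted fiber $w$ or on the quadratic assumption.
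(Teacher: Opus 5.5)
Your elimination argument (differentiate $\dot x = v$ and substitute the $\dot v$ equation) is exactly the paper's proof. Your converse direction with Picard--Lindel\"of uniqueness is a harmless strengthening that the paper omits.

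The only error is in your optional normalization remark, which the lemma does not need. The substitution $t=\sqrt{L}\,s$ produces damping $2\sqrt{\mu L}$; it amounts to replacing $f$ by $Lf$, so it is not ``equivalent'' to replacing $f$ by $f/L$. The paper's rescaling $s=\sqrt{L}\,t$, i.e.\ $\tilde x(s)=x(s/\sqrt{L})$, instead gives $\tilde x''+2\sqrt{\mu/L}\,\tilde x'+\tfrac{1}{L}\nabla f(\tilde x)=0$, which is \eqref{eq:Nest} for $f/L$.
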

\begin{proof}
Substituting $\lambda^* = 2 \sqrt{\mu}$ into \eqref{eq:ps1} and \eqref{eq:ps2} 
\[
\dot{x}=v, \hspace{1cm} \dot{v} = -2 \sqrt{\mu} v + \nabla f(x) = 0
\]
Differentiating the first equation and substituting the second yields \eqref{eq:NestODE}.
\end{proof}

\begin{remark}[Normalized form with smoothness]
Under the time rescaling $\tau = t\sqrt{L}$, equation~\eqref{eq:NestODE} becomes
\begin{equation}
  \frac{d^2 x}{d\tau^2} + 2\sqrt{\tfrac{\mu}{L}}\,\frac{dx}{d\tau}
  + \tfrac{1}{L}\nabla f(x) = 0,
  \label{eq:normalized}
\end{equation}
which is the standard form used for algorithmic discretization and matches the continuous-time
limit of~\cite{JMLR_Su}.
\end{remark}

\subsection{Summary of the NAIM Framework}
 
The NAIM geometric framework provides a complete, structural explanation of acceleration through
four interlocking elements:
 
\begin{enumerate}
  \item \textit{Time-scale separation.} The condition number $\kappa = L/\mu$ creates fast
    transverse dynamics (rate $\mu$) and slow tangential dynamics (rate $1/L$), naturally
    defining the NAIM structure.
 
  \item \textit{Slow graph manifold.} The invariant manifold $\mathcal{M}_0$ is the zero set of
    the transverse coordinate $z$, realized as a graph with slope $P_0 = -H_f/L$. Trajectories
    converge exponentially to $\mathcal{M}_0$ in the fiber direction.
 
  \item \textit{Curvature-aware tilt.} Introducing the perturbation~\eqref{eq:perturb} tilts the
    graph to $\mathcal{M}_\varepsilon$, whose slope $P(t)$ evolves via the DRE~\eqref{eq:11}.
    The tilt creates a steeper geometric track that drives faster convergence.
 
  \item \textit{Spectral resonance.} The optimal damping $\lambda^* = 2\sqrt{\mu}$ is the unique
    value equalizing transverse contraction rates across all curvature modes. It simultaneously
    satisfies the ARE at the weakest mode ($\sigma = \mu$) and ensures the Fenichel gap dominates
    for all $\sigma \in [\mu, L]$.
\end{enumerate}

\section{Fenichel Extension to General Objectives}\label{sec:fenichel}
 
For general $f$, the DRE \eqref{ric} has no closed form and $M_{\varepsilon}$
is a curved submanifold. Fenichel's theorem provides the rigorous bridge.
 
Set $\varepsilon = \sqrt{\mu/L}$ and $\tau = \sqrt{\mu}\,t$. The system becomes
\begin{equation}\label{eq:fenichel-form}
\varepsilon\, \frac{dx}{d\tau} = v, \qquad
\varepsilon\, \frac{dv}{d\tau} = -v - \frac{\varepsilon}{\sqrt{\mu}}\nabla f(x).
\end{equation}
At $\varepsilon = 0$ (the layer problem), $v = 0$ is the globally attracting
equilibrium, defining the critical manifold $M_{0} = \{v = 0\}$.
 
\begin{Proposition}[Normal hyperbolicity; full proof in Appendix~\ref{App:fenichel}]\label{prop:fenichel-hyp}
On any compact sublevel-set region $K = \{f(x) + \tfrac{1}{2}\norm{v}^{2} \le c\}$
(invariant by the Lyapunov calculation $\dot{V} = -2\sqrt{\mu}\norm{v}^{2}$),
system \eqref{eq:fenichel-form} satisfies all five Fenichel hypotheses:
compactness, layer invariance, normal hyperbolicity gap $\gamma_{\perp}=1/\varepsilon$,
$C^{r}$ smoothness, and inward-pointing boundary.
\end{Proposition}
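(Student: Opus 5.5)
We take the five Fenichel hypotheses one at a time for system \eqref{eq:fenichel-form}, working on the compact set $K$. \emph{Compactness} comes first. Strong convexity gives $f(x)\ge f(x^*)+\tfrac{\mu}{2}\norm{x-x^*}^2$. Hence $f(x)+\tfrac12\norm{v}^2\le c$ bounds both $\norm{x-x^*}$ and $\norm{v}$, so $K$ is closed and bounded in $\R^{2n}$. \emph{Invariance of $K$} then follows from the Lyapunov function $V=f(x)+\tfrac12\norm{v}^2$ computed in original time for \eqref{eq:ps1}--\eqref{eq:ps2} with $\lambda^*=2\sqrt{\mu}$:
\[
\dot V=\langle\nabla f(x),v\rangle+\langle v,-2\sqrt{\mu}v-\nabla f(x)\rangle=-2\sqrt{\mu}\norm{v}^2\le 0 .
\]
Thus sublevel sets of $V$ are forward invariant. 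We also note that \eqref{eq:fenichel-form} is only a reparametrization of time, so this conclusion carries over.

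\emph{Inward-pointing boundary} is the most delicate item, because $\dot V=0$ on $\{v=0\}\cap\partial K$, so the flow is only weakly inward there. We handle this by LaSalle reasoning. At such points $\dot v=-\nabla f(x)\neq 0$ unless $x=x^*$, so $\ddot V=-4\sqrt{\mu}\langle v,\dot v\rangle=0$ while the third derivative is strictly negative. Trajectories therefore leave $\partial K$ into the interior immediately. If a strict transversality statement is needed, we replace $c$ by a slightly larger $c'$ and use $K'=\{V\le c'\}$ as an overflowing-invariant neighbourhood, as in Fenichel's setup.

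\emph{Layer invariance and normal hyperbolicity} come from the fast time $s=\tau/\varepsilon$. At $\varepsilon=0$ the layer problem is $x'=0$, $v'=-v$. Every point of $M_0=\{v=0\}$ is an equilibrium, so $M_0\cap K$ is an invariant compact manifold of equilibria. Its linearization in the normal ($v$) directions is $-I$, and in the tangential directions it is $0$. The normal rate is therefore $1$ in fast time, which is $1/\varepsilon$ in slow time $\tau$, while tangential rates vanish. This gives the gap $\gamma_\perp=1/\varepsilon$ with all normal eigenvalues having strictly negative real part, uniformly on $K$. \emph{$C^r$ smoothness} holds whenever $f\in C^{r+1}$, since the vector field involves $\nabla f$. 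We note this as the standing regularity assumption. The $\varepsilon$-dependence $\varepsilon/\sqrt{\mu}\,\nabla f$ is polynomial in $\varepsilon$ and hence smooth.

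The main obstacle is reconciling the scaling. In \eqref{eq:fenichel-form} the $x$-equation also carries $\varepsilon$, so the system is not literally in standard slow--fast form $x'=\varepsilon g$, $v'=h$. We will first check, or rewrite if needed, that in fast time $s$ the $x$-equation reads $x'=v$. That expression vanishes on $M_0$, so $x$ is slow near the critical manifold and $M_0$ is still a manifold of equilibria. Fenichel's theorem applies to such a family of equilibria: only normal hyperbolicity of the equilibrium set is required, not a standard form. We then verify that all estimates above depend on $f$ only through the uniform bounds $\mu I\preceq\nabla^2 f\preceq LI$ on $K$. This uniformity is what makes the spectral gap independent of the base point despite varying Hessian curvature.
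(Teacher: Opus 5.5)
You follow the same item-by-item route as the paper's proof of Proposition~\ref{prop:hypotheses}, and two of your observations are sharper than the paper's. First, at $\varepsilon=0$ the fast-time system is $x'=v$, $v'=-v$ (as you note at the end, not $x'=0$). This still makes $\{v=0\}$ a manifold of equilibria with normal eigenvalues $-1$ and tangential eigenvalues $0$, so $\gamma_\perp=1/\varepsilon$ in $\tau$-time is fine. Second, the field is indeed only tangent to $\partial K$ along $\{v=0\}$, contrary to the paper's claim of strict inward pointing.

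The genuine gap is your sentence asserting that \eqref{eq:fenichel-form} is only a reparametrization of time of \eqref{eq:ps1}--\eqref{eq:ps2}. With $\tau=\sqrt{\mu}\,t$ the NAG field becomes $dx/d\tau=v/\sqrt{\mu}$, $dv/d\tau=-2v-\nabla f(x)/\sqrt{\mu}$. By contrast, \eqref{eq:fenichel-form} reads $dx/d\tau=v/\varepsilon$, $dv/d\tau=-v/\varepsilon-\nabla f(x)/\sqrt{\mu}$. The two fields are in general not even parallel, so no time change relates them. Computing directly along \eqref{eq:fenichel-form},
\[
\frac{dV}{d\tau}=\Bigl(\frac{1}{\varepsilon}-\frac{1}{\sqrt{\mu}}\Bigr)\langle\nabla f(x),v\rangle-\frac{1}{\varepsilon}\norm{v}^{2},
\]
and the cross term survives unless $\varepsilon=\sqrt{\mu}$, i.e.\ $L=1$. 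Take a point of $\partial K$ with $x\neq x^*$ and small $v$ along $\pm\nabla f(x)$, with the sign matching that of $\frac{1}{\varepsilon}-\frac{1}{\sqrt{\mu}}$. There the cross term dominates and $dV/d\tau>0$. So the $K$ of the statement is in general not forward invariant for \eqref{eq:fenichel-form}. Your invariance and boundary items, including the third-derivative computation, were therefore verified for the wrong vector field. The paper's proof makes the same identification, so you inherited the slip, but it is a real one.

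The repair is to reweight the energy for the given $\varepsilon$. Take $W=f(x)+\tfrac{\sqrt{\mu}}{2\varepsilon}\norm{v}^{2}$; along \eqref{eq:fenichel-form} it satisfies $dW/d\tau=-\tfrac{\sqrt{\mu}}{\varepsilon^{2}}\norm{v}^{2}$. Its sublevel sets are compact by strong convexity and meet $\{v=0\}$ exactly in $\{f\le c\}\times\{0\}$. Your tangency argument also goes through on $\{v=0\}$, because there $dv/d\tau=-\nabla f(x)/\sqrt{\mu}\neq 0$. With $K$ taken as a sublevel set of $W$ rather than $V$, the five items hold, with the boundary item in your weak (entering) sense.

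Your fallback of enlarging $c$ gains nothing. Every such sublevel set has the same tangency along $\{v=0\}$, and the flow there is weakly inflowing, which is the opposite of overflowing. If you want a strictly inward boundary, add a cross term: $E=f(x)-f(x^*)+\tfrac{\sqrt{\mu}}{2\varepsilon}\norm{v+b(x-x^*)}^{2}$ with $0<b<1$ and $b(1-b)\le\varepsilon\sqrt{\mu}$ has $dE/d\tau<0$ away from $(x^*,0)$. Note, though, that the standard locally invariant form of Fenichel's theorem (compact normally hyperbolic critical manifold, boundary allowed) needs no condition on $\partial K$ at all.
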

 
\begin{theorem}[Fenichel persistence for NAG; full proof in Appendix~\ref{App:fenichel}]\label{thm:fenichel}
Let $f \in C^{r+1}$. For all $\varepsilon =
\sqrt{\mu/L}$ sufficiently small, there exists a $C^{r}$ slow manifold
$M_{\varepsilon} \subset K$ that is $O(\varepsilon)$-close to $M_{0}$, locally invariant, and attracts nearby trajectories at rate $e^{-\sqrt{\mu}\,t}$. The reduced slow flow on $M_{\varepsilon}$ achieves the per-step contraction $1 - \sqrt{\mu/L}$, matching the optimal Nesterov rate for any $C^{2}$ strongly convex $f$.
\end{theorem}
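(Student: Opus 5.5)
The plan is to rescale \eqref{eq:fenichel-form} into the standard fast--slow form, apply Fenichel's first and third theorems using the hypotheses of Proposition~\ref{prop:fenichel-hyp}, and then read off the rate from the reduced flow together with a Lyapunov estimate. Write the system in fast time $s=\tau/\varepsilon$, so that $x' = v$ and $v' = -v - (\varepsilon/\sqrt{\mu})\nabla f(x)$. At $\varepsilon=0$ the critical set $M_0=\{v=0\}$ consists of equilibria, and the linearization transverse to it is $-I$. This makes $M_0\cap K$ compact and normally attracting, with uniform gap (rate $1$ in $s$, i.e.\ $1/\varepsilon$ in $\tau$). The remaining hypotheses are taken from Proposition~\ref{prop:fenichel-hyp}: invariance of the layer, $C^r$ smoothness from $f\in C^{r+1}$, and the inward-pointing boundary of $K$ from $\dot V=-2\sqrt{\mu}\norm{v}^2$. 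Because $K$ is only overflowing-invariant, I would apply Fenichel's theorem to a slightly enlarged compact manifold with boundary. This gives a locally invariant $C^r$ graph $M_\varepsilon=\{v=h_\varepsilon(x)\}$ with $\norm{h_\varepsilon}_{C^r}=O(\varepsilon)$.

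To identify $h_\varepsilon$ to leading order, I would substitute the graph into the invariance equation $Dh_\varepsilon(x)\,h_\varepsilon(x) = -h_\varepsilon(x)-(\varepsilon/\sqrt\mu)\nabla f(x)$. This yields $h_\varepsilon(x)=-(\varepsilon/\sqrt{\mu})\nabla f(x)+O(\varepsilon^2)$, which is the nonlinear analogue of the slope $P$ solving the Riccati equation \eqref{ric}; linearizing at $x^*$ recovers the ARE root of Lemma~\ref{lem:resonance}. The attraction statement comes from Fenichel's third theorem, the stable foliation: the fibers contract at rate bounded by the normal eigenvalue, uniformly on $K$ up to $O(\varepsilon)$ corrections. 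Converting back to $t$ and using the resonant damping $\lambda^*=2\sqrt\mu$, this is the transverse rate $\rho_{\min}=\sqrt\mu$ computed in Lemma~\ref{lem:resonance}, giving the factor $e^{-\sqrt{\mu}t}$.

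For the rate on the reduced flow, I would avoid relying only on the leading-order gradient flow, since that contracts like $e^{-\mu t}$, not $e^{-\sqrt{\mu}t}$. Instead I would use the Lyapunov function $E = f(x)-f^*+\tfrac12\norm{v+\sqrt\mu(x-x^*)}^2$ restricted to $M_\varepsilon$. Using $\mu$-strong convexity, $E$ satisfies $\dot E\le-\sqrt\mu E$. Sampling at step $h=1/\sqrt L$ (the normalization of \eqref{eq:normalized}) then gives $e^{-\sqrt{\mu/L}}\le 1-\sqrt{\mu/L}+O(\mu/L)$ per step, which matches the claimed contraction to first order.

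The main obstacle is the mismatch between the rates produced by Fenichel and the rate claimed in the statement. Fenichel controls normal attraction at rate $O(1/\varepsilon)$ in $\tau$, whereas the slow rate comes from the Lyapunov argument. I also have to check that $\varepsilon$ "sufficiently small" is compatible with uniform constants on $K$, which depend on $L$ through $\nabla^2 f$. The first thing I would try is to rescale $f$ by $1/L$ so that its Hessian lies in $[\mu/L,1]$, making all the Fenichel constants independent of $\kappa$. The per-step claim would then be stated up to $O(\varepsilon^2)$ corrections.
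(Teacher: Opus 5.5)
Your argument takes a slow manifold from one system and a Lyapunov estimate from another, and neither yields the accelerated claim.

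In your fast time $s=\tau/\varepsilon=\sqrt{L}\,t$, system \eqref{eq:fenichel-form} is $x''+x'+L^{-1/2}\nabla f(x)=0$, i.e.\ $\ddot x+\sqrt{L}\,\dot x+\sqrt{L}\,\nabla f(x)=0$ in $t$. By contrast, \eqref{eq:NestODE} rescaled to unit damping ($\sigma=2\sqrt{\mu}\,t$) is $x''+x'+(4\mu)^{-1}\nabla f(x)=0$, with stiffness $(4\mu)^{-1}\nabla^2 f\succeq\tfrac14 I$. The NAG linearization at $x^*$ has eigenvalues $-\sqrt{\mu}\pm i\sqrt{\lambda_i-\mu}$, all with real part exactly $-\sqrt{\mu}$; this is the resonance of Lemma~\ref{lem:resonance}. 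With no separation of real parts there is no splitting into dominated tangential and dominating normal rates. So the NAG flow has no normally hyperbolic slow manifold near $(x^*,0)$ for Fenichel's theorem to produce.

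Fenichel does apply to \eqref{eq:fenichel-form} when its forcing is $C^1$-small, but that is a different, overdamped system:
its normal rate is $1/\varepsilon$ in $\tau$, i.e.\ $\sqrt{L}$ in $t$, not the $\rho_{\min}=\sqrt{\mu}$ of Lemma~\ref{lem:resonance};
its linearized slope is the small real root of $P^2+P+L^{-1/2}\nabla^2 f(x^*)=0$, not the NAG Riccati root;
and its reduced flow is, as you found, $\dot x=-\nabla f(x)+\dots$, contracting like $e^{-\mu t}$.
Moreover, the smallness Fenichel needs is that of the forcing's Jacobian $L^{-1/2}\nabla^2 f$. Its norm is up to $\sqrt{L}$, which $\varepsilon=\sqrt{\mu/L}$ does not control. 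Your normalization $L=1$ makes it as large as the normal rate $1$, which removes the small parameter altogether: the stiffest modes of $x''+x'+\nabla\tilde f(x)=0$ have eigenvalues $(-1\pm i\sqrt{3})/2$.

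The Lyapunov step cannot repair this. The bound $\dot E\le-\sqrt{\mu}\,E$ for $E=f-f^*+\tfrac12\norm{v+\sqrt{\mu}(x-x^*)}^2$ is correct for \eqref{eq:NestODE} with $v=\dot x$. There, however, it holds on all of phase space and uses no invariant manifold. It is false for the reduced flow on the $M_\varepsilon$ of \eqref{eq:fenichel-form}. Take quadratic $f$ and the eigendirection of curvature $\mu$: the error on $M_\varepsilon$ decays at a rate between $\mu$ and $2\mu$, so $E\ge f-f^*$ decays at rate at most $4\mu$. That is below $\sqrt{\mu}$ whenever $\mu<1/16$, which holds throughout the regime $L\le 1/16$, $\varepsilon<1$, where \eqref{eq:fenichel-form} is overdamped in every direction.

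Your remark that the reduced flow is a gradient flow is therefore decisive, not something to route around. For $\ddot x+\gamma\dot x+\nabla f=0$, a normally attracting real slow graph in every curvature direction needs real, distinct roots, i.e.\ $\gamma>2\sqrt{L}$. The softest mode then decays at rate $2\mu/(\gamma+\sqrt{\gamma^2-4\mu})<\mu/\sqrt{L}$, which is the unaccelerated factor $\approx 1-\mu/L$ per time step $1/\sqrt{L}$. Acceleration lives precisely in the critically and under-damped regime, where no slow/fast splitting exists.

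Following the paper does not close the gap. Its proof gets $\gamma_F=\sqrt{\mu}(1-\sqrt{\mu/L})$ by writing the normal gap as $\tfrac{1}{\varepsilon}\cdot\varepsilon$ in $\tau$-time and converting with $t=\tau/\sqrt{\mu}$. That is the same identification of the normal rate with $\sqrt{\mu}$, and Fenichel's theorem itself says nothing tying the reduced flow's rate to the normal gap.

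What you do prove correctly is the continuous-time fact $E(t)\le e^{-\sqrt{\mu}\,t}E(0)$ for \eqref{eq:NestODE}. This gives a factor $e^{-\sqrt{\mu/L}}\le 1-\sqrt{\mu/L}+\tfrac{\mu}{2L}$ per time $1/\sqrt{L}$. But that follows from the Lyapunov function alone, with no input from the slow manifold.
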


\section{Structure-Preserving Discretization}\label{sec:SPD}

Nesterov's accelerated gradient methods achieve optimal convergence rates for convex optimization:
\begin{itemize}
    \item $\mathcal{O}(1/k^2)$ for smooth convex functions
    \item $\mathcal{O}\!\left(\left(\frac{\sqrt{L}-\sqrt{\mu}}{\sqrt{L}+\sqrt{\mu}}\right)^{\!k}\right)$ for $\mu$-strongly convex, $L$-smooth functions
\end{itemize}
 
The continuous-time limits of these methods are second-order ODEs with specific damping terms. The standard view, following \cite{su2016differential}, identifies these ODEs by reverse-engineering from the discrete algorithms. This approach obscures the geometric origin of the
damping coefficients. In this paper, we show that both ODEs are uniquely determined by a single geometric principle: \textit{vanishing Schwarzian derivative} (projective flatness). The discrete algorithms then arise from structure-preserving discretization using Lie--Trotter splitting and Cayley transformation. Some definition and basic properties are introduced first. 

\begin{definition}
The Schwarzian derivative of a locally invertible function $w(t)$ is
\begin{equation}
\Sch[w] = \frac{w'''}{w'} - \frac{3}{2}\left(\frac{w''}{w'}\right)^{\!2}.
\label{eq:schwarzian}
\end{equation}
\end{definition}
 
The Schwarzian has a fundamental characterization:
 
\begin{Proposition}
$\Sch[w] = 0$ if and only if $w$ is a M\"obius transformation:
\begin{equation}
w(t) = \frac{at+b}{ct+d}, \qquad ad - bc \neq 0.
\end{equation}
\end{Proposition}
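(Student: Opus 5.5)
The plan is the classical linearization argument for the Schwarzian. Throughout, $w$ is a $C^3$ function on an interval $I$ with $w'(t)\neq 0$, which is the local invertibility assumed in the definition.

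\emph{Möbius maps have zero Schwarzian.} For the direction ``Möbius $\Rightarrow \Sch[w]=0$'' I will simply differentiate. If $c=0$, then $w$ is affine, so $w''=w'''=0$ and $\Sch[w]=0$ trivially. If $c\neq 0$, write
\[
w(t)=\frac{a}{c}-\frac{ad-bc}{c}\,(ct+d)^{-1}.
\]
Then
\[
w'=(ad-bc)(ct+d)^{-2},\qquad \frac{w''}{w'}=\frac{-2c}{ct+d},\qquad \frac{w'''}{w'}=\frac{6c^2}{(ct+d)^2}.
\]
Substituting gives $\frac{6c^2}{(ct+d)^2}-\frac32\cdot\frac{4c^2}{(ct+d)^2}=0$. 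The hypothesis $ad-bc\neq0$ is exactly what guarantees $w'\neq0$, so $\Sch[w]$ is defined.

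\emph{Reduction of the converse to a linear ODE.} For ``$\Sch[w]=0 \Rightarrow$ Möbius'' I will use the substitution $y=w''/w'$, the logarithmic derivative of $w'$. A direct computation gives $\Sch[w]=y'-\tfrac12 y^2$, so the hypothesis becomes the Riccati equation $y'=\tfrac12 y^2$. I will linearize this Riccati equation, in the same spirit as the DRE \eqref{ric}, by setting $w'=u^{-2}$ with $u>0$ locally (replacing $w$ by $-w$ if $w'<0$). Then $y=-2u'/u$. Substituting, and using $\tfrac{d}{dt}(u'/u)=u''/u-(u'/u)^2$, gives
\[
\Sch[w]=-2\,\frac{u''}{u}.
\]
Hence $\Sch[w]=0$ is equivalent to $u''=0$, so $u(t)=ct+d$.

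\emph{Integrating back.} It remains to integrate $w'=(ct+d)^{-2}$. For $c\neq0$ this gives $w=-\frac{1}{c(ct+d)}+e$, and for $c=0$ it gives $w=t/d^2+e$. Combining over the common denominator $ct+d$ puts $w$ in the form $\frac{at+b}{ct+d}$. The associated determinant is nonzero: in the $c\neq0$ case, with $b=ed-1/c$ and $a=ec$, we get $ad-bc=1$. If $w$ was replaced by $-w$, flipping the sign of $(a,b)$ keeps the Möbius form.

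\emph{Main obstacle.} The step most likely to go wrong is the middle one, namely the identity $\Sch[w]=-2u''/u$ and the sign normalization of $w'$. I will verify that identity carefully, first by checking it against the Möbius example from the first part. I will also state that the result holds on every interval where $w'\neq0$, so the Möbius representation is local, which is all the statement requires.
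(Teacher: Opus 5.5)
Your proof is correct. The forward computation, the identity $\Sch[w]=y'-\tfrac12y^2$ for $y=w''/w'$, the identity $\Sch[w]=-2u''/u$ under $w'=u^{-2}$, the integration and the determinant check all go through. Your sign normalization tacitly uses $\Sch[-w]=\Sch[w]$, which is immediate from the definition.

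The paper gives no proof of this proposition; it quotes it as a classical fact. The natural comparison is therefore with Proposition~\ref{prop:schwarzian-ode}, which shows $\Sch[u_1/u_2]=2Q$ for independent solutions of $u''+Qu=0$. Your substitution is exactly the converse of that construction. Put $u_2=u=(w')^{-1/2}$ and $u_1=wu$. Then $(wu)''=w''u+2w'u'+wu''=wu''$, so both functions solve $u''+Qu=0$ with $Q=-u''/u$, and their Wronskian is $w'u^2=1$. Your identity $\Sch[w]=-2u''/u$ is then literally $\Sch[w]=2Q$.

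Phrasing the argument through that proposition buys a slightly tidier converse. $\Sch[w]=0$ makes both $u$ and $wu$ affine, so $w=(at+b)/(ct+d)$ at once, with $ad-bc$ equal to the Wronskian. There is no separate integration and no $c=0$ versus $c\neq0$ split. The forward direction becomes the case $Q\equiv 0$, $u_1=at+b$, $u_2=ct+d$, where $ad-bc\neq0$ is precisely linear independence.

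Your version, in turn, is fully self-contained. It also makes the Riccati-to-linear reduction explicit ($y'=\tfrac12y^2$ becoming $u''=0$) rather than hiding it inside the Wronskian computation.
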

 
M\"obius transformations are the projective automorphisms of $\RP^1$. Thus, vanishing Schwarzian
characterizes \emph{projectively flat} maps---those preserving the projective structure.
 
We derive the connection to second-order ODEs. 
 
\begin{Proposition}\label{prop:schwarzian-ode}
Consider the second-order linear ODE in normal form:
\begin{equation}
u'' + Q(t)\, u = 0.
\label{eq:normal-form}
\end{equation}
Let $u_1, u_2$ be linearly independent solutions and define $w = u_1/u_2$. Then
\begin{equation}
\Sch[w] = 2 Q(t).
\end{equation}
\end{Proposition}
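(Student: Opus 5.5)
The plan is to compute the Schwarzian of $w=u_1/u_2$ directly in terms of the Wronskian, using the normal form \eqref{eq:normal-form} to eliminate second derivatives. Set $W = u_1'u_2 - u_1u_2'$. Differentiating and substituting $u_i'' = -Q u_i$ gives
\[
W' = u_1''u_2 - u_1u_2'' = -Qu_1u_2 + Qu_1u_2 = 0,
\]
so $W$ is a nonzero constant. Nonvanishing follows from the linear independence of $u_1,u_2$ via the standard Wronskian criterion for \eqref{eq:normal-form}. Hence, on any interval where $u_2\neq 0$, the quotient rule gives
\[
w' = \frac{W}{u_2^{2}} \neq 0,
\]
so $w$ is locally invertible and $\Sch[w]$ in \eqref{eq:schwarzian} is well defined there.

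Next I will compute the logarithmic derivative rather than $w''$ and $w'''$ separately. From $w' = W u_2^{-2}$ with $W$ constant,
\[
\frac{w''}{w'} = (\log w')' = -2\,\frac{u_2'}{u_2}.
\]
Then I use the identity
\[
\Sch[w] = \Bigl(\frac{w''}{w'}\Bigr)' - \frac12\Bigl(\frac{w''}{w'}\Bigr)^{2},
\]
which follows from $\bigl(w''/w'\bigr)' = w'''/w' - (w''/w')^2$. Writing $y = u_2'/u_2$, we have $y' = u_2''/u_2 - y^2 = -Q - y^2$ by the ODE. Therefore
\[
\Sch[w] = -2y' - 2y^2 = 2Q + 2y^2 - 2y^2 = 2Q(t).
\]

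The only delicate point is the domain: $w$ has poles where $u_2=0$. I will handle this by symmetry. Near such points one considers $1/w = u_2/u_1$, which has $u_1\neq0$ there, since $u_1$ and $u_2$ cannot vanish simultaneously when $W\neq0$. The Schwarzian is invariant under Möbius post-composition, in particular under $w\mapsto 1/w$, and this invariance is consistent with the preceding Proposition characterizing $\Sch=0$. So the identity $\Sch[w]=2Q$ holds at every $t$, understood in the projective sense. Apart from this remark, the computation is routine. The main step to get right is the sign bookkeeping in the Riccati substitution $y' = -Q - y^2$.
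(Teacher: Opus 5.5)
Your proof is correct and follows essentially the same route as the paper: the constant nonzero Wronskian, $w' = W/u_2^{2}$, and the substitution $u_2'' = -Q u_2$. The differences are minor: you organize the computation through the logarithmic derivative $w''/w' = -2u_2'/u_2$ and the Riccati relation $y' = -Q - y^2$ rather than computing $w''$ and $w'''$ explicitly, and you also handle the zeros of $u_2$ via $1/w$, which the paper leaves implicit.
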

 
\begin{proof}
From $w = u_1/u_2$, we have $w' = W/u_2^2$ where $W = u_1' u_2 - u_1 u_2'$ is the Wronskian. Abel's identity gives $W' = 0$ for solutions of \eqref{eq:normal-form}.
 
Differentiating and using $u_i'' = -Q u_i$:
\[
w'' = -\frac{2 W u_2'}{u_2^3}, \qquad
w''' = \frac{2 W Q}{u_2^2} + \frac{6 W (u_2')^2}{u_2^4}.
\]
Computing:
\[
\frac{w'''}{w'} = 2Q + \frac{6 (u_2')^2}{u_2^2}, \qquad
\left(\frac{w''}{w'}\right)^{\!2} = \frac{4(u_2')^2}{u_2^2}.
\]
Therefore $\Sch[w] = 2Q + 6(u_2'/u_2)^2 - \tfrac{3}{2}\cdot 4(u_2'/u_2)^2 = 2Q$. \qedhere
\end{proof}
 
\begin{corollary}
The solution ratio $w = u_1/u_2$ evolves by M\"obius transformations if and only if
$Q(t) \equiv 0$.
\end{corollary}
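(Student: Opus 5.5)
The corollary follows by chaining the two results just stated: the Proposition on Schwarzian derivatives of solution ratios (Proposition~\ref{prop:schwarzian-ode}), which gives $\Sch[w] = 2Q(t)$, and the characterization that $\Sch[w]=0$ exactly when $w$ is a M\"obius transformation $w(t) = (at+b)/(ct+d)$ with $ad-bc\neq 0$. I read ``$w$ evolves by M\"obius transformations'' as meaning that $w$, viewed as a function of $t$, is such a fractional linear map. Equivalently, $w(t)$ is the image of $t$ under a fixed element of $\mathrm{PGL}(2,\R)$.

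\emph{Forward direction.} Assume $Q\equiv 0$. Then Proposition~\ref{prop:schwarzian-ode} gives $\Sch[w]=0$ on any interval where $w$ is locally invertible, and the characterization shows $w$ is M\"obius. A direct check is also available: every solution of $u''=0$ is affine, so $u_1 = at+b$ and $u_2 = ct+d$. Linear independence gives $W = ad-bc \neq 0$, so $w = u_1/u_2$ is M\"obius outright.

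\emph{Reverse direction.} Assume $w$ is M\"obius. The characterization gives $\Sch[w]\equiv 0$, and then Proposition~\ref{prop:schwarzian-ode} yields $2Q(t) = \Sch[w] = 0$, so $Q\equiv 0$.

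The only step needing care is well-definedness of $\Sch[w]$. Since $w' = W/u_2^2$ and $W$ is a nonzero constant by Abel's identity, $w'$ never vanishes, so $w$ is locally invertible wherever $u_2\neq 0$. The zeros of $u_2$ are isolated because $u_2$ is a nontrivial solution of a second-order linear ODE. Hence the identities hold on a dense open set of $t$, and $Q\equiv 0$ extends everywhere by continuity of $Q$. I would also note that the choice of the independent pair $(u_1,u_2)$ only changes $w$ by post-composition with a M\"obius map, which leaves $\Sch[w]$ unchanged. So the statement does not depend on that choice. The main obstacle is the interpretive step of fixing what ``evolves by M\"obius transformations'' means; once that is settled, the proof is two lines.
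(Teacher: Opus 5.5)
Your proof is correct and takes the same route as the paper, which gives no separate proof and treats the corollary as the immediate combination of $\Sch[w]=2Q$ (Proposition~\ref{prop:schwarzian-ode}) with the characterization that $\Sch[w]=0$ exactly for M\"obius maps. Your direct affine-solution check when $Q\equiv 0$ and your treatment of the zeros of $u_2$ add rigor the paper leaves implicit: those zeros are isolated, so the identity holds on a dense open set and extends to all $t$ by continuity of $Q$.
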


\subsection{Nesterov: Strongly Convex case} \label{Subsec:SC}

In this section we derive the two-stage Nesterov momentum algorithm by geometric numerical integration of the Nesterov ODE \eqref{eq:NestODE}. The key insight is to recognize the projective (M\"{o}bius) structure of the damped harmonic oscillator part, which motivates a Lie--Trotter operator splitting.  The gradient term is discretised by a forward (explicit) Euler step, while the dissipative term is discretised by the Cayley (bilinear) transform, which exactly preserves the projective flow on the unit circle.  The final result is the classical Nesterov two-step scheme
\[
  x_{k+1} = y_k - \tfrac{1}{L}\nabla f(y_k),\qquad
  y_{k+1} = x_{k+1} + \beta\,(x_{k+1}-x_k),
\]
with $\beta = \dfrac{1-\sqrt{\mu/L}}{1+\sqrt{\mu/L}}$.  We detail every stage of this derivation and explain why the Cayley transform is geometrically correct and not merely discretisation.

We start by introducing the phase-space state $(x,v)$ with $v = \dot{x}$.  Then \eqref{eq:NestODE} becomes the first-order system:
\begin{equation}\label{eq:splitting}
  \begin{pmatrix}\dot{x}\\\dot{v}\end{pmatrix}
  =
  \underbrace{\begin{pmatrix}0\\-2q\,v\end{pmatrix}}_{=:\,F_D(x,v)}
  +
  \underbrace{\begin{pmatrix}v\\-\nabla f(x)\end{pmatrix}}_{=:\,F_G(x,v)}.
\end{equation}
The splitting \eqref{eq:splitting} decomposes the flow into:
\begin{itemize}[leftmargin=2em]
  \item $F_D$: the \emph{dissipative part}---pure exponential damping of $v$,
        decoupled from $x$;
  \item $F_G$: the \emph{gradient part}---advection of $x$ by $v$ coupled to
        the potential $f$.
\end{itemize}
This is not merely an algebraic convenience.  As we show in \ref{sec:cayley}, each part possesses a distinct geometric structure that dictates the correct integrator.

\subsubsection{Projective Dynamics and the Cayley Transform}
\label{sec:cayley}
 
The dissipative subsystem is viewed as a projective flow. 
 
\noindent Consider the dissipative subsystem in isolation:
\[
  \dot{v} = -2q\,v,\qquad \dot{x} = 0.
\]
The exact flow over a time step $h$ is $v(h) = e^{-2qh}v(0)$.  Now examine the ratio $r = v/\norm{x-x^*}$ (the \textit{slope} in the $(x-x^*,v)$ plane). Under this subsystem, $r$ evolves as a real projective transformation (M\"{o}bius map on $\RP^1$):
\[
  r \;\mapsto\; e^{-2qh}\,r.
\]
More precisely, the full linear system $(\dot{x},\dot{v}) = A(x,v)$ with $A = \diag(0,-2q)$ generates a flow in $\mathrm{GL}(2,\R)$ whose action on projective space $\RP^1$ is a M\"{o}bius transformation.
 
\noindent We now show why the Cayley transform is exact for projective flows. 
 
\begin{definition}[Cayley Transform]
For a linear ODE $\dot{z} = Az$, the Cayley (bilinear) transform
\[
  z_{k+1}
  = \underbrace{\left(I - \tfrac{h}{2}A\right)^{-1}
    \!\left(I + \tfrac{h}{2}A\right)}_{\Cayley_h(A)}\,z_k
\]
is the Pad\'{e} $(1,1)$ approximant to $e^{hA}$.  It exactly maps the imaginary axis to the unit circle when $A = i\Omega$ (skew-Hermitian), and more generally preserves the M\"{o}bius structure of the projective flow.
\end{definition}
 
\noindent For the scalar dissipative system $\dot{v} = -2qv$ with eigenvalue
$\lambda = -2q$, the Cayley map gives:
\begin{equation}\label{eq:cayley_scalar}
  v_{k+1}
  = \frac{1 + \tfrac{h}{2}(-2q)}{1 - \tfrac{h}{2}(-2q)}\,v_k
  = \frac{1 - qh}{1 + qh}\,v_k.
\end{equation}

\begin{remark}[The canonical step size for the dissipative subsystem]
    In the Lie--Trotter splitting, the gradient and dissipative subsystems are integrated with their own natural step sizes.  The gradient step size is fixed by $L$-smoothness at $h_G = 1/L$.  For the dissipative subsystem, the natural convention is that one algorithm iteration advances the ODE by one unit of time, i.e.\ $h_D = 1$.  This amounts to measuring continuous time $t$ in units of one gradient step, which is the standard convention when deriving
    discrete algorithms from ODEs.  With $q = \sqrt{\mu/L}$, one unit step gives:
    \begin{equation}\label{eq:qh}
      q\,h_D = \sqrt{\frac{\mu}{L}}\cdot 1 = q = \sqrt{\frac{\mu}{L}}.
    \end{equation}
\end{remark}

Computing $\beta$.
Substituting \eqref{eq:qh} into \eqref{eq:cayley_scalar}:
\begin{equation}\label{eq:beta_deriv}
  v_{k+1}
  = \frac{1-qh}{1+qh}\,v_k
  = \frac{1-q}{1+q}\,v_k
  = \frac{1-\sqrt{\mu/L}}{1+\sqrt{\mu/L}}\,v_k
  =: \beta\,v_k.
\end{equation}
 
\textit{Cayley Discretisation of the Dissipative Part.}
Applying the Cayley transform with unit step $h=1$ to the dissipative subsystem $\dot{v} = -2qv$ (with $q=\sqrt{\mu/L}$) yields:
\[
  v^{(D)}_{k+1} = \beta\,v_k,\qquad
  \beta = \frac{1-\sqrt{\mu/L}}{1+\sqrt{\mu/L}}\in(0,1).
\]
The $x$-coordinate is unchanged: $x^{(D)}_{k+1} = x_k$.
 
\begin{remark}[Consistency check]
The exact flow of $\dot{v}=-2qv$ over one unit step is $v(1)=e^{-2q}v(0)$.
The Cayley approximant gives $\beta=(1-q)/(1+q)$.  One verifies
$\beta \approx e^{-2q}$ to second order in $q$:
\[
  \frac{1-q}{1+q} = 1 - 2q + 2q^2 - \cdots,\qquad
  e^{-2q}         = 1 - 2q + 2q^2 - \cdots,
\]
confirming that the Cayley map is a Pad\'{e} $(1,1)$ approximant to the exact exponential, with error $O(q^3)$.
\end{remark}
 
The momentum coefficient $\beta$ has a clean physical interpretation: it equals zero when $\mu=L$ ($\kappa=1$, perfectly conditioned, no acceleration needed) and approaches one as $\kappa=L/\mu\to\infty$ (severely ill-conditioned, maximum momentum required).  In the $z$-transform sense, $\beta$ is the real pole of the Cayley-discretised integrator; it lies
strictly inside the unit disk, guaranteeing unconditional stability of the momentum update.
 
\subsubsection{Lie--Trotter Splitting}
 \textbf{Lie--Trotter Theorem.}
Given a vector field $F = F_D + F_G$ with corresponding flows $\Phi^D_t$ and $\Phi^G_t$, the Lie--Trotter splitting approximates the true flow $\Phi^{D+G}_t$ by
\[
  \Phi^{D+G}_t \approx \Phi^D_t \circ \Phi^G_t + O(t^2),
\]
That is one step of $F_G$ followed by one step of $F_D$ (or vice versa). The error is first-order in $t$, with the commutator $[F_D,F_G]$ as leading term.
 
The Lie--Trotter paradigm is not merely a numerical trick: it respects the geometric structures of each subsystem independently.  In our case:
\begin{itemize}[leftmargin=2em]
  \item the gradient flow $F_G$ is discretised by Euler (appropriate since
        $\nabla f$ has no special Lie group structure to preserve);
  \item the dissipative flow $F_D$ is discretised by the Cayley transform
        (mandatory to preserve the projective/M\"{o}bius structure).
\end{itemize}
 
\textit{Stage 1:} Gradient half-step via Euler
 
Apply the gradient part $F_G$ first.  The subsystem is:
\[
  \dot{x} = v,\qquad \dot{v} = -\nabla f(x).
\]
An explicit Euler step of length $h$ from state $(x_k,v_k)$ gives:
\begin{align}
  \tilde{x} &= x_k + h\,v_k,\label{eq:euler_x}\\
  \tilde{v} &= v_k - h\,\nabla f(x_k).\label{eq:euler_v}
\end{align}
 
\begin{remark}
In Nesterov's scheme, the gradient evaluation point is $y_k$ (the look-ahead point), not $x_k$.  This is equivalent to reordering the Lie--Trotter stages: applying the dissipative step first (which defines $y_k$), then the gradient step.  This reordering is one of the key concept introduced.
\end{remark}
 
\textit{Stage 2:} Dissipative Step via Cayley

Apply the dissipative part $F_D$ to the updated state $(\tilde{x},\tilde{v})$:
\[
  \dot{x} = 0,\qquad \dot{v} = -2q\,v.
\]
Using the Cayley discretisation derived in \ref{sec:cayley}:
\begin{align} \nonumber
  x_{k+1} &= \tilde{x},\\ \nonumber
  v_{k+1} &= \beta\,\tilde{v}.
\end{align}
The $x$ coordinate is unchanged by the dissipative step (since $\dot{x}=0$).
 
We now derive the two-stage Nesterov Algorithm and statrt by reordering where the dissipative step comes first followed by gradient step. 

Define the look-ahead point $y_k$ as the result of the dissipative step applied to $(x_k,v_k)$:
\begin{align}
  \text{Dissipative step (Cayley):}\qquad
  y_k &= x_k\quad(\text{$x$ unchanged}),\nonumber \\
  u_k &= \beta\,v_k\quad(\text{momentum attenuated by $\beta$}).\nonumber
\end{align}
In Nesterov's formulation the look-ahead is $y_k = x_k + \beta(x_k - x_{k-1})$, which corresponds to accumulating momentum into position.  Specifically, writing $v_k \approx (x_k-x_{k-1})/h$
and noting $h\beta\,v_k \approx \beta(x_k-x_{k-1})$, the look-ahead point becomes:
\begin{equation}\label{eq:lookahead}
  y_k = x_k + \beta(x_k - x_{k-1}).
\end{equation}
Gradient step (Euler): starting from $y_k$ with an explicit gradient step of
size $1/L$:
\begin{equation}\label{eq:grad_step}
  x_{k+1} = y_k - \frac{1}{L}\nabla f(y_k).
\end{equation}
 
Also see Appendix~\ref{app:strongly} for detail derivation. Now combining \eqref{eq:lookahead} and \eqref{eq:grad_step}, and writing the look-ahead update recursively (using $x_{k+1}$ already computed) we get the Nesterov Two-Stage Algorithm.
\begin{align}
  \label{N1}
  x_{k+1} &= y_k - \tfrac{1}{L}\nabla f(y_k),\\[6pt]
  \label{N2}
  y_{k+1} &= x_{k+1} + \beta\,(x_{k+1} - x_k),
\end{align}
with momentum coefficient
\[
  \beta = \frac{1-\sqrt{\mu/L}}{1+\sqrt{\mu/L}}.
\]
 
\begin{itemize}[leftmargin=2em]
  \item Equation \eqref{N1} is the Euler discretisation of the gradient subsystem
        $\dot{x} = -\nabla f(x)/L$, with step size $1/L$, evaluated at the
        look-ahead point $y_k$;
  \item Equation \eqref{N2} is the Cayley discretisation of the dissipative subsystem
        $\dot{v} = -2\sqrt{\mu/L}\,v$, translated into position-space via the
        velocity approximation $v_k \approx x_k - x_{k-1}$;
  \item the momentum coefficient $\beta = (1-\sqrt{\mu/L})/(1+\sqrt{\mu/L})$ arises exactly from the
        Cayley Pad\'{e} $(1,1)$ approximant to
        $e^{-2\sqrt{\mu/L}\cdot h}$ at the canonical step size
        $h = 1/\sqrt{\mu L}$.
\end{itemize}

\subsubsection{Heavy Ball case: Lie--Trotter Splitting Ordering}\label{sec:lietrotter}
The exact subflows as derived from \eqref{eq:splitting} are 
\[
\varphi^{A}_{h} : \; v \mapsto e^{-2ah} v, \quad
                       x \mapsto x + \tfrac{1 - e^{-2ah}}{2a}\, v,
\qquad\text{and}\qquad
\varphi^{B}_{h} : \; v \mapsto v - h\nabla f(x), \quad x \mapsto x.
\]
The two first-order Lie--Trotter compositions are:
\begin{equation}\label{eq:two-orderings}
\Phi_{\mathrm{NAG}} = \varphi^{B}_{h} \circ \varphi^{A}_{h}
\qquad\text{and}\qquad
\Phi_{\mathrm{HB}}  = \varphi^{A}_{h} \circ \varphi^{B}_{h}.
\end{equation}
 
The slow manifold $M_{0}$ implies a discrete slaving condition on the fiber:
after advancing $x$, the velocity should satisfy
$v \approx -\tfrac{1}{2a}\nabla f(x)$. Define the slow-manifold consistency
residual
\begin{equation}\label{eq:residual}
R_{k+1} := v_{k+1} + \tfrac{1}{2a}\nabla f(x_{k+1}).
\end{equation}
 
\begin{theorem}[Ordering sensitivity]\label{thm:fiber-consistency}
Assume the current state is $O(h)$-close to $M_{0}$ (i.e.\
$v_{k} + \tfrac{1}{2a}\nabla f(x_{k}) = O(h)$).
\begin{enumerate}[label=(\roman*),leftmargin=2.2em]
\item \emph{Nesterov ordering $\Phi_{\mathrm{NAG}}$:} apply $\varphi^{A}_{h}$
first (drift to transported point $y_{k} = x_{k} + \tfrac{1-\beta}{2a} v_{k}$),
then $\varphi^{B}_{h}$ (gradient kick at $y_{k}$). The residual satisfies
$R_{k+1} = O(h)$: the fiber is kept consistent with $M_{0}$ at the new base
point.
\item \emph{Heavy--Ball ordering $\Phi_{\mathrm{HB}}$:} apply $\varphi^{B}_{h}$
first (gradient kick at the stale point $x_{k}$), then $\varphi^{A}_{h}$ (drift).
The residual satisfies $R_{k+1} = O(1)$ generically: the fiber is left
misaligned with $M_{0}$.
\end{enumerate}
Consequently, only the Nesterov ordering inherits the Fenichel-guaranteed
transverse contraction and accelerated convergence for non-quadratic $f$.
\end{theorem}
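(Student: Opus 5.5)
We compute the residual $R_{k+1}$ directly for each composition in \eqref{eq:two-orderings}, Taylor-expanding $\nabla f$ about the appropriate base point. Write the hypothesis as $v_k = -\tfrac{1}{2a}\nabla f(x_k) + r_k$ with $r_k = O(h)$. Abbreviate $\beta_h := e^{-2ah}$ and $H := \nabla^2 f$; on the compact sublevel set $K$ of Proposition~\ref{prop:fenichel-hyp}, $H$ is bounded by $L$ and $\nabla f$ is Lipschitz. These bounds make all $O(\cdot)$ constants uniform.

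\emph{Nesterov ordering.} We apply $\varphi^A_h$ first: $y_k = x_k + \tfrac{1-\beta_h}{2a}v_k$ and $u_k = \beta_h v_k$. Since $\tfrac{1-\beta_h}{2a} = h + O(h^2)$, we get $y_k - x_k = O(h)$. Then $\varphi^B_h$ gives $v_{k+1} = \beta_h v_k - h\nabla f(y_k)$ with $x_{k+1} = y_k$. Hence
\[
R_{k+1} = \beta_h v_k - h\nabla f(y_k) + \tfrac{1}{2a}\nabla f(y_k).
\]
Substituting $v_k$ and $\nabla f(y_k) = \nabla f(x_k) + H(x_k)(y_k-x_k) + O(h^2)$ gives
\[
R_{k+1} = \beta_h r_k + \tfrac{1-\beta_h}{2a}\nabla f(x_k) - h\nabla f(y_k) + \tfrac{1}{2a}H(x_k)(y_k-x_k) + O(h^2).
\]
Each term is $O(h)$. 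The first is $\beta_h r_k = O(h)$. The second and third cancel to leading order, since $\tfrac{1-\beta_h}{2a} = h + O(h^2)$ and $\nabla f(y_k) = \nabla f(x_k) + O(h)$. The fourth is $O(h)$ because $y_k - x_k = O(h)$. Thus $R_{k+1} = O(h)$. The structural point is that the gradient is evaluated at the new base point $x_{k+1} = y_k$ itself, so the slaving relation is tested exactly where the kick was applied.

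\emph{Heavy-Ball ordering.} We apply $\varphi^B_h$ first: $\tilde v = v_k - h\nabla f(x_k)$ with $x$ unchanged. Then $\varphi^A_h$ gives $x_{k+1} = x_k + \tfrac{1-\beta_h}{2a}\tilde v$ and $v_{k+1} = \beta_h \tilde v$. The residual involves $\nabla f(x_{k+1})$ while the kick used $\nabla f(x_k)$. I will expand $\nabla f(x_{k+1}) - \nabla f(x_k)$ and isolate the contribution that is not multiplied by $h$. With the paper's convention $h = h_D = 1$ from \eqref{eq:qh}, or more generally in the regime $ah = O(1)$, the displacement $\tfrac{1-\beta_h}{2a}\tilde v$ is $O(1)$. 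In that regime I expect $R_{k+1} = \tfrac{1}{2a}\bigl[\nabla f(x_{k+1}) - \nabla f(x_k)\bigr] + (\beta_h - 1)(\ldots)$, and this cannot cancel without extra structure. To show it is $O(1)$ generically, it suffices to exhibit one non-quadratic example (a scalar $f$ with $f'''\neq 0$) where the leading coefficient is nonzero.

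The main obstacle is making the $O(1)$ claim honest. If $h\to 0$ with $a$ fixed, the same Taylor bookkeeping gives $O(h)$ for both orderings, since both are consistent first-order splittings. So the dichotomy must be stated in the scaling the paper uses, with unit dissipative step and $a = \sqrt{\mu/L}$, where $1-\beta_h$ is not small relative to the curvature scale. Alternatively, it can be stated as a difference in the coefficient of the first-order term, driven by the commutator $[F_D,F_G]$. I would first compute the commutator term explicitly, $\tfrac{h^2}{2}[F_D,F_G]$ evaluated against $\nabla R$, and show that it enters $R$ with the opposite sign in the two orderings, vanishing for NAG and not for HB. Failing that, I would present (ii) as a generic counterexample computation in one dimension. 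The final "consequently" clause then follows by feeding $R_{k+1} = O(h)$ into the attracting-neighbourhood statement of Theorem~\ref{thm:fenichel}.
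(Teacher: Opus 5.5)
Your part (i) matches the paper's computation and is more accurate than it. The paper claims the remainder is $O(h^2)$, but the terms $\beta_h r_k$ and $\tfrac{1}{2a}\nabla^2 f(x_k)(y_k-x_k)$ that you keep are genuinely of order $h$, so $O(h)$ is the right conclusion. The gap is part (ii). Your proposal does not prove it, and your suspicion is justified: under the stated hypothesis it is false as $h\to 0$. Carry your bookkeeping through both compositions with $v_k=-\tfrac{1}{2a}\nabla f(x_k)+r_k$ and $c_h:=\tfrac{1-\beta_h}{2a}=h-ah^2+O(h^3)$. For $\nabla^2 f$ locally Lipschitz this gives
\[
R^{\mathrm{NAG}}_{k+1}=\beta_h r_k+(c_h-h)\,\nabla f(x_k)+\tfrac{c_h}{2a}\nabla^2 f(x_k)v_k+O(h^2),
\]
\[
R^{\mathrm{HB}}_{k+1}=\beta_h r_k+(c_h-\beta_h h)\,\nabla f(x_k)+\tfrac{c_h}{2a}\nabla^2 f(x_k)v_k+O(h^2).
\]
Since $c_h-h=-ah^2+O(h^3)$ and $c_h-\beta_h h=ah^2+O(h^3)$, both residuals equal $\beta_h r_k+\tfrac{h}{2a}\nabla^2 f(x_k)v_k+O(h^2)$. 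Even the crude bound $|R_{k+1}|\le Ch$ for both orderings needs only Lipschitz continuity of $\nabla f$. The paper's proof of (ii) reaches $O(1)$ only through the sentence ``even if $v_k=O(h)$'', which contradicts the standing hypothesis $v_k+\tfrac{1}{2a}\nabla f(x_k)=O(h)$. Inserting the hypothesis into the paper's own line $\beta v_k+(\tfrac{1}{2a}-\beta h)\nabla f(x_k)+O(h)$ gives $(c_h-\beta_h h)\nabla f(x_k)+O(h)=O(h)$. Evaluating the gradient at the new base point, which you single out as the structural point of the Nesterov ordering, only changes the $O(h^2)$ terms.

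For the same reason, your salvage routes fail as described.
\begin{enumerate}
\item There is no difference in the first-order coefficient. The orderings first differ at $O(h^2)$, where the two Lie--Trotter compositions deviate from the exact flow by equal and opposite bracket terms, so that contribution cannot vanish for the Nesterov ordering alone. For $f=\tfrac{\lambda}{2}x^2$ in one dimension one finds exactly $R^{\mathrm{HB}}_{k+1}-R^{\mathrm{NAG}}_{k+1}=\lambda c_h h\bigl(v_k+(2a-\tfrac{\lambda}{2a})x_k\bigr)$.
\item A scalar example with $f'''\neq 0$ cannot give an $O(1)$ residual either, since the bound above is uniform for Lipschitz-gradient $f$ on compact sets.
\item In the unit-step regime $h=1$, the comparison ``$O(h)$ versus $O(1)$'' has no asymptotic content. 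Moreover, the Nesterov residual itself equals $\beta_h r_k+(c_h-h)\nabla f(x_k)+(\tfrac{1}{2a}-h)\bigl(\nabla f(y_k)-\nabla f(x_k)\bigr)$, a gradient increment over an $O(1)$ displacement. That is exactly the kind of term you flag for Heavy--Ball.
\item The ``consequently'' clause cannot come from feeding a one-step residual bound into the Fenichel persistence theorem. The bound holds for both orderings, so it cannot single out Nesterov. It also does not propagate, because $\{R=0\}$ is not invariant even for the continuous flow: $\dot R=-2aR+\tfrac{1}{2a}\nabla^2 f(x)\,v$ along $\dot x=v$, $\dot v=-2av-\nabla f(x)$. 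So after $O(1/h)$ steps $R$ generically becomes of order one under either splitting.
\end{enumerate}
What your computation actually supports is a counterexample to (ii), and that is what you should write up.
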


\begin{proof}

Part (i): Nesterov ordering $\Phi_{\mathrm{NAG}} = \varphi^{B}_{h} \circ \varphi^{A}_{h}$
 
Apply $\varphi^{A}_{h}$ first with $\beta = e^{-2ah}$:
\[
y_{k} = x_{k} + \tfrac{1-\beta}{2a} v_{k}, \qquad v_{k}^{+} = \beta v_{k}.
\]
Then apply $\varphi^{B}_{h}$ to $(y_{k}, \beta v_{k})$:
\[
x_{k+1} = y_{k}, \qquad v_{k+1} = \beta v_{k} - h \nabla f(y_{k}).
\]
Compute $R_{k+1} = v_{k+1} + \tfrac{1}{2a}\nabla f(x_{k+1})$:
\begin{align*}
R_{k+1} &= \beta v_{k} - h\nabla f(y_{k}) + \tfrac{1}{2a}\nabla f(y_{k})\\
        &= \beta v_{k} + \left(\tfrac{1}{2a} - h\right)\nabla f(y_{k}).
\end{align*}
By assumption $v_{k} = -\tfrac{1}{2a}\nabla f(x_{k}) + O(h)$, so
$\beta v_{k} = e^{-2ah}\bigl(-\tfrac{1}{2a}\nabla f(x_{k}) + O(h)\bigr)$.
Taylor-expanding $e^{-2ah} = 1 - 2ah + O(h^{2})$ and
$\nabla f(y_{k}) = \nabla f(x_{k}) + O(h)$:
\[
R_{k+1} = (1 - 2ah)\bigl(-\tfrac{1}{2a}\nabla f(x_{k})\bigr)
        + \left(\tfrac{1}{2a} - h\right)\nabla f(x_{k}) + O(h^{2})
        = O(h^{2}).
\]
(The $O(1)$ terms cancel exactly; the leading correction is $O(h^{2})$, hence
$O(h)$.)
 
Part (ii): Heavy--Ball ordering $\Phi_{\mathrm{HB}} = \varphi^{A}_{h} \circ \varphi^{B}_{h}$
 
Apply $\varphi^{B}_{h}$ first:
\[
x_{k}^{+} = x_{k}, \qquad v_{k}^{+} = v_{k} - h\nabla f(x_{k}).
\]
Then apply $\varphi^{A}_{h}$:
\[
x_{k+1} = x_{k} + \tfrac{1-\beta}{2a}\bigl(v_{k} - h\nabla f(x_{k})\bigr),
\qquad
v_{k+1} = \beta\bigl(v_{k} - h\nabla f(x_{k})\bigr).
\]
Since $x_{k+1} = x_{k} + O(h)$: $\nabla f(x_{k+1}) = \nabla f(x_{k}) + O(h)$.
Hence:
\[
R_{k+1} = \beta\bigl(v_{k} - h\nabla f(x_{k})\bigr) +
          \tfrac{1}{2a}\nabla f(x_{k}) + O(h)
        = \beta v_{k} + \left(\tfrac{1}{2a} - \beta h\right)\nabla f(x_{k})
          + O(h).
\]
Even if $v_{k} = O(h)$, the term $\tfrac{1}{2a}\nabla f(x_{k}) = O(1)$ whenever
$\nabla f(x_{k}) \neq 0$. Hence $R_{k+1} = O(1)$ generically.
\end{proof}

\subsubsection{Summary of the Numerical Integration Architecture}
 
The Lie--Trotter splitting respects two distinct geometric layers:
\begin{enumerate}[leftmargin=2em]
  \item \textit{Symplectic layer (gradient part).}  The Hamiltonian
        $H(x,v) = \tfrac{1}{2}\norm{v}^2 + f(x)$ generates a symplectic flow on $\T^*\R^n$.  The Euler step is a first-order approximation that breaks symplecticity, but this is acceptable since the damping term already destroys symplecticity globally.
  \item \textit{Projective layer (dissipative part).}  The linear damping
        $v \mapsto e^{-2qt}v$ acts on the projective space $\RP^{n-1}$ as a contraction toward the origin.  The Cayley map is the unique rational integrator that preserves this M\"{o}bius/projective structure at each step.
\end{enumerate}
In the NAIM framework, the dissipative ODE corresponds to the slow-manifold flow, and the Cayley discretisation is the integrator that respects the Ehresmann connection on the phase-space bundle---it maps fibres to fibres exactly, not approximately.
 
We show why Cayley and not explicit Euler for the dissipative part. An explicit Euler discretisation of $\dot{v} = -2qv$ would give:
\[
  v_{k+1} = (1 - 2qh)\,v_k.
\]
This is stable only for $h < 1/q = \sqrt{L/\mu}$, and it does not preserve
the M\"{o}bius structure.  The Cayley map:
\begin{itemize}[leftmargin=2em]
  \item is unconditionally stable (the multiplier $\beta\in(-1,1)$ for all
        $h>0$);
  \item exactly preserves the real projective line $\RP^1$;
  \item gives $\beta = (1-qh)/(1+qh)$ which at $h = 1/\sqrt{\mu L}$ reduces to the classical Nesterov coefficient.
\end{itemize}
No other standard integrator (Euler, midpoint, RK4) simultaneously satisfies all three properties.
 
\begin{theorem}[Convergence Rate]\label{thm:convergence}
Under the Nesterov two-stage scheme \eqref{N1}--\eqref{N2} applied to a
$\mu$-strongly convex, $L$-smooth function $f$, the iterate error satisfies:
\[
  f(x_k) - f(x^*)
  \;\le\;
  \left(\frac{1-\sqrt{\mu/L}}{1+\sqrt{\mu/L}}\right)^{2k}
  \!\bigl(f(x_0)-f(x^*)\bigr)
  \;=\;
  \beta^{2k}\bigl(f(x_0)-f(x^*)\bigr).
\]
The linear convergence factor $\beta^2 = \!\left(\tfrac{\sqrt{\kappa}-1}{\sqrt{\kappa}+1}\right)^{\!2}$ is optimal for first-order methods, matching the information-theoretic lower bound.
\end{theorem}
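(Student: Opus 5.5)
The plan is a discrete Lyapunov (estimate-sequence) argument run at the contraction factor $\beta^2$ instead of the usual $1-\sqrt{\mu/L}$. Write $q=\sqrt{\mu/L}$ and $\beta=(1-q)/(1+q)$, so that $\beta^2=\bigl(\tfrac{\sqrt{\kappa}-1}{\sqrt{\kappa}+1}\bigr)^2$.

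First I would fix the energy. Take the auxiliary point $z_k = x_k + \tfrac{1}{q}(x_k-x_{k-1})$, which is the discrete velocity of \S\ref{Subsec:SC} rescaled by the Cayley step, and set
\[
\mathcal{E}_k \;=\; f(x_k)-f(x^*) + \tfrac{\mu}{2}\norm{z_k-x^*}^2 .
\]
This is the discrete counterpart of $V=f+\tfrac12\norm{v}^2$ from Proposition~\ref{prop:fenichel-hyp}, written in the adapted fiber coordinate $w=v-Px$ with $P=-\sqrt{\mu}$, as in Lemma~\ref{lem:resonance}. The target is the one-step inequality $\mathcal{E}_{k+1}\le \beta^2\mathcal{E}_k$. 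Since $\mathcal{E}_0 = f(x_0)-f(x^*)$ when $x_{-1}=x_0$ and $z_0=x^*$ is arranged (or absorbed into the constant), the stated bound follows.

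For the one-step inequality I would use two standard inequalities. The first is the descent lemma at the gradient step \eqref{N1}:
\[
f(x_{k+1})\le f(y_k)-\tfrac{1}{2L}\norm{\nabla f(y_k)}^2 .
\]
The second is strong convexity between $y_k$ and both $x_k$ and $x^*$. I would then combine them with weights $\beta^2$ and $1-\beta^2$ in place of the usual $1-q$ and $q$. Using $y_k = x_k+\beta(x_k-x_{k-1})$ from \eqref{N2}, I would expand $z_{k+1}-x^*$ as an affine combination of $z_k-x^*$, $y_k-x^*$ and $\nabla f(y_k)$. The goal is to show that every cross term cancels and that the remaining quadratic in $\nabla f(y_k)$ is nonpositive, using the identity $(1-\beta)/(1+\beta)=q$ that is built into the Cayley multiplier \eqref{eq:beta_deriv}.

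As a sanity check and guide, I would first do the quadratic case mode by mode. For eigenvalue $\sigma\in[\mu,L]$, the iteration is the $2\times2$ companion recursion with characteristic polynomial
\[
z^2-(1+\beta)(1-\sigma/L)z+\beta(1-\sigma/L).
\]
Here I would compute the spectral radius and compare it with $\beta$; this is exactly where the claimed $\beta^{2k}$ rate on function values, which needs spectral radius at most $\beta$, must come from. For the general strongly convex case, Theorem~\ref{thm:fenichel} would then transfer the quadratic (linearized) estimate along the persisting manifold $M_\varepsilon$.

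The main obstacle is the one-step contraction at the rate $\beta^2$ itself. In the standard analysis, balancing the descent gain $\tfrac{1}{2L}\norm{\nabla f}^2$ against the growth of $\tfrac{\mu}{2}\norm{z-x^*}^2$ only closes at the factor $1-q$, and $\beta^2\approx 1-4q$ is strictly smaller. The quadratic computation at $\sigma=\mu$ already gives a double-root radius of about $1-q$, not $\beta$. So I expect this step to fail as stated with step size $1/L$. In that case the most the argument can deliver is $(1-q)^k$. The sharper rate would require either the ODE-level rate $e^{-2\sqrt{\mu}t}$ read through the Cayley time-rescaling, or modified parameters such as a heavy-ball step $4/(\sqrt{L}+\sqrt{\mu})^2$ restricted to quadratics, where I would try first.
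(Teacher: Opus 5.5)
Your closing diagnosis is right, and it should be the conclusion rather than an obstacle to work around. The stated bound is false, so neither the one-step target $\mathcal{E}_{k+1}\le\beta^2\mathcal{E}_k$ nor any transfer argument can establish it. The paper gives no proof of this theorem; it is only asserted. The only rate the paper derives elsewhere, in Appendix~\ref{App:Gap} and Corollary~\ref{cor:nestrate}, is $1-\sqrt{\mu/L}$.

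Your $\sigma=\mu$ mode already gives an explicit counterexample. Take $n=1$ and $f(x)=\tfrac{\mu}{2}x^2$, which is $\mu$-strongly convex and $L$-smooth for every $L\ge\mu$, so $x^*=0$. Start from $y_0=x_0\neq 0$. Then $x_{k+1}=(1-q^2)y_k$. Since $(1+\beta)(1-q^2)=2(1-q)$ and $\beta(1-q^2)=(1-q)^2$, the recursion $x_{k+1}=2(1-q)x_k-(1-q)^2x_{k-1}$ has the exact double root $1-q$. With $x_1=(1-q^2)x_0$ this gives $x_k=(1+qk)(1-q)^k x_0$, hence
\[
  f(x_k)-f(x^*)\;=\;(1+qk)^2(1+q)^{2k}\,\beta^{2k}\bigl(f(x_0)-f(x^*)\bigr).
\]
For $\kappa>1$ the prefactor exceeds $1$ already at $k=1$ and grows without bound, so no constant repairs the claim. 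Any other $y_0$ fails too: $x_k$ is then a nonzero affine function of $k$ times $(1-q)^k$, while $(1-q)/\beta=1+q>1$. In particular your one-step inequality fails for your energy, since it would force $f(x_k)-f(x^*)\le 2\beta^{2k}\bigl(f(x_0)-f(x^*)\bigr)$.

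What your argument can deliver is the rate you predicted, $f(x_k)-f(x^*)\le(1-q)^k\mathcal{E}_0$, with the usual weights $1-q$ and $q$. The one-step inequality closes cleanly with the estimate-sequence centre $z_k=x_k+\tfrac{1-q}{q}(x_k-x_{k-1})$, rather than your coefficient $\tfrac1q$.

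Note that $z_0=x^*$ cannot be arranged. With $x_{-1}=x_0$ you get $z_0=x_0$, so $\mathcal{E}_0=f(x_0)-f(x^*)+\tfrac{\mu}{2}\norm{x_0-x^*}^2\le 2\bigl(f(x_0)-f(x^*)\bigr)$. Even the correct rate therefore carries a constant $2$, not $1$.

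The rate $\beta^{2k}$ is the classical lower bound for first-order methods on infinite-dimensional quadratics. The two-stage scheme matches it only in the complexity sense $O(\sqrt{\kappa}\log(1/\varepsilon))$. Spectral radius $\beta$ is reached by differently tuned methods, such as heavy ball with step $4/(\sqrt{L}+\sqrt{\mu})^2$ on quadratics, and even there only up to polynomial transient factors.

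So drop the rescue routes. Neither an ODE-level rate read through the Cayley rescaling nor Fenichel persistence can override an exact computation for the discrete iteration. Moreover, Fenichel's theorem is a qualitative $\varepsilon\to0$ statement with unspecified constants, so it could never produce a nonasymptotic inequality with constant $1$.
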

 
\begin{remark}[Connection to Critical Damping]
The value $\beta = (1-\sqrt{\mu/L})/(1+\sqrt{\mu/L})$ is precisely the reflection coefficient of a critically damped harmonic oscillator across the unit circle under the bilinear (Cayley) transform.  In the $z$-domain ($z = e^{j\omega}$), critical damping maps to the pole at $z=\beta$ on the real axis---the fastest pole placement inside the unit disk consistent with the given curvature bounds $[\mu,L]$.
\end{remark}
 
To conclude, the Nesterov two-stage algorithm
\[
  x_{k+1} = y_k - \tfrac{1}{L}\nabla f(y_k),\qquad
  y_{k+1} = x_{k+1} + \beta(x_{k+1}-x_k),
\]
arises canonically from geometric numerical integration of the Nesterov ODE
\eqref{eq:NestODE} via:
\begin{enumerate}[leftmargin=2em]
  \item Lie--Trotter operator splitting of the ODE into gradient and dissipative subsystems;
  \item Explicit Euler for the gradient subsystem (no special structure to preserve);
  \item Cayley (bilinear) transform for the dissipative subsystem (mandatory to preserve the projective/M\"{o}bius flow);
  \item the momentum coefficient
        $\beta = (1-\sqrt{\mu/L})/(1+\sqrt{\mu/L})$ emerging exactly as the Cayley Pad\'{e} $(1,1)$ multiplier at the canonical integration step size $h = 1/\sqrt{\mu L}$.
\end{enumerate}
 
The original formulation presented is verified as correct.  The geometric derivation presented here shows that this is not a coincidence or a heuristic approximation: it is the unique first-order Lie--Trotter integrator that preserves the projective structure of the Nesterov ODE.

\subsection{Nesterov: Convex case}\label{subsec:sconvex}

The continuous-time limit of Nesterov's accelerated gradient method for smooth convex functions is typically written as 
\begin{equation}\label{eq:20}
  \ddot{x} + \frac{c}{t}\dot{x} + \nabla f(x) = 0.
\end{equation}
Following \cite{JMLR_Su}, the standard view holds that $c=3$ yields the optimal $O(1/t^2)$ convergence rate. In this section, we present a different perspective grounded in projective geometry. 

A convex function $f(x)$  is characterized by a non-negative curvature and a guaranteed global minimum. The optimization landscape becomes arbitrarily flat, characterized by a vanishing Hessian/ curvature, indicating geometric degeneracy. This results in a loss of sense of the distance to the minimum, implying a loss of scale. Initially, the function looks steep (large gradient) but later on, gradient tends to zero (Hessian becomes very small). A constant damping fails to work for both regimes, as in the strongly convex case, hence, the need for time varying damping $c/t$ as in (\ref{eq:20}). Calculation of the value of $c$ which represents the optimal damping and is responsible for shaping the geometry of the flow is now done through scale free technique-  projective geometry. One wishes to find the value of $c$ for which the flow is projectively invariant. This implies that rather than seeking the distance to the minimum (which keeps changing as the landscape is flat), one calculates the ratio of the progress to the minimum. This notion allows to find $c$ with unknown flatness. When geometry of landscape changes and distance to the minimum becomes difficult to calculate, then the relative positions or landmarks need to be considered which remain invariant. This is possible only when we write \eqref{eq:20} in the form amenable to the projective space $RP^1$. In this space, to ensure invariance, the Schwarzian derivative must be zero. The Nesterov ODE is scalar (or decoupled in each coordinate), so the projective geometry is one-dimensional. The convex Nesterov dynamics reduce to a scalar second-order ODE. The ratio of its two solutions naturally lives on $RP^1$, and requiring projective flatness $(S[w] = 0)$ selects $c = 2$ as the unique value where this flow is Möbius-invariant. To determine the canonical value of $c$ in \eqref{eq:20}, we analyze the linearization near a minimum.

We stat by computing the Schwarzian Analysis. Near a quadratic minimum with Hessian $H$, consider the scalar equation:
\begin{equation}
  \ddot{z} + \frac{c}{t}\,\dot{z} + \omega^2 z = 0. \nonumber
  \label{eq:scalar}
\end{equation}

To convert to normal form, substitute $z = t^{-c/2}u$:
\begin{equation}
  \ddot{u} + Q(t)\,u = 0, \qquad Q(t) = \omega^2 + \frac{c(2-c)}{4t^2}. \nonumber
  \label{eq:normal_convex}
\end{equation}

\begin{theorem}
    For the intrinsic projective structure (setting $\omega = 0$), the Schwarzian vanishes if
and only if $c = 0$ or $c = 2$. The non-trivial case is $c = 2$.   
\end{theorem}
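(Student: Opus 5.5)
The plan is to reduce the claim to Proposition~\ref{prop:schwarzian-ode} by way of the normal form already written down, and then to cross-check the result by solving the $\omega=0$ equation explicitly. Throughout, ``the Schwarzian vanishes'' means that it vanishes identically on $t>0$.

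\textbf{Step 1: normal form.} I would first verify the reduction. Substituting $z=t^{-c/2}u$ into $\ddot z+\frac{c}{t}\dot z+\omega^2 z=0$ gives
\[
\dot z=t^{-c/2}\bigl(\dot u-\tfrac{c}{2t}u\bigr),\qquad
\ddot z=t^{-c/2}\Bigl(\ddot u-\tfrac{c}{t}\dot u+\tfrac{c(c+2)}{4t^2}u\Bigr).
\]
Hence the first-order terms cancel. The zeroth-order coefficient is
\[
\frac{c(c+2)}{4t^2}-\frac{c^2}{2t^2}+\omega^2=\omega^2+\frac{c(2-c)}{4t^2},
\]
which confirms $Q(t)$.

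\textbf{Step 2: the ratio on $\RP^1$.} If $z_1,z_2$ are independent solutions, then $u_i=t^{c/2}z_i$ are independent solutions of $\ddot u+Qu=0$. The common factor $t^{c/2}$ cancels in the ratio, so $w=z_1/z_2=u_1/u_2$. The projective object attached to the original equation is therefore exactly the ratio covered by Proposition~\ref{prop:schwarzian-ode}.

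\textbf{Step 3: apply the proposition.} Proposition~\ref{prop:schwarzian-ode} gives $\Sch[w]=2Q$. Setting $\omega=0$ yields
\[
\Sch[w]=\frac{c(2-c)}{2t^2}.
\]
Since $t^{-2}\neq 0$ on $t>0$, this vanishes identically if and only if $c(2-c)=0$, that is, $c\in\{0,2\}$. The case $c=0$ is the undamped equation, which is excluded as trivial, leaving $c=2$.

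\textbf{Step 4: independent check.} For $\omega=0$ and $c\neq 1$, the solutions are $1$ and $t^{1-c}$, so $w=t^{a}$ with $a=1-c$. A direct computation gives
\[
\Sch[t^a]=\frac{1-a^2}{2t^2}=\frac{c(2-c)}{2t^2}.
\]
For $c=1$ the solutions are $1$ and $\log t$, and $w=\log t$ has $\Sch[w]=\frac{1}{2t^2}\neq 0$. Both agree with Step 3. For $c=2$ we get $w=1/t$, which is visibly a M\"obius map, consistent with the characterization of vanishing Schwarzian.

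\textbf{Main obstacle.} The main point needing care is Step 2: justifying that the gauge factor $t^{-c/2}$ does not change the projective ratio. Once that is settled, everything else is the algebra of Step 1.
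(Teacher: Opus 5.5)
Your proposal is correct and follows the same route as the paper: with $\omega=0$ the normal-form coefficient is $Q(t)=\frac{c(2-c)}{4t^2}$, and Proposition~\ref{prop:schwarzian-ode} gives $\Sch[w]=2Q$, which vanishes if and only if $c(2-c)=0$. Your additional steps are also correct and fill in details that the paper's one-line proof takes for granted: the verification of the substitution $z=t^{-c/2}u$, the observation that the gauge factor cancels in $w=z_1/z_2=u_1/u_2$, and the explicit cross-check via $w=t^{1-c}$ (and $w=\log t$ at $c=1$).
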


\begin{proof}
With $\omega = 0$: $Q(t) = \frac{c(2-c)}{4t^2}$. By Proposition~3,
$S[w] = 2Q = 0$ requires $c(2-c) = 0$.
\end{proof}

\begin{remark}
    The value $c = 2$ has a natural interpretation: it is the critical damping threshold for the projective dynamics. For $c < 2$, the projective flow is hyperbolic; for $c > 2$, it is elliptic; at $c = 2$, it is parabolic (M\"{o}bius).
\end{remark}

The projectively flat and geometrically canonical ODE for smooth convex optimization is:
\begin{equation}
  \ddot{x} + \frac{2}{t}\,\dot{x} + \nabla f(x) = 0.
  \label{eq:flat_ode}
\end{equation}
In first-order form with $v = \dot{x}$:
\begin{align}
  \dot{x} &= v,  \nonumber\\
  \dot{v} &= -\frac{2}{t}\,v - \nabla f(x).  \nonumber
\end{align}

Lie–Trotter splitting treats the dissipative and gradient parts sequentially over each timestep $h$ from $t_k = kh$ to $t_{k+1} = (k + 1)h$:
\begin{enumerate}
  \item[(A)] \textit{Dissipative:} Apply dissipative flow for time $h$: $\dot{x} = v,\;\dot{v} = -\tfrac{2}{t}\,v$
  \item[(B)] \textit{Gradient:} Apply gradient flow for time $h$ $\dot{x} = v,\;\dot{v} = -\nabla f(x)$
\end{enumerate}

Next computing the Cayley discretization of the dissipative part. The dissipative dynamics in matrix form:
\begin{equation} \nonumber
  \frac{d}{dt}\begin{pmatrix}x\\v\end{pmatrix}
  = D(t)\begin{pmatrix}x\\v\end{pmatrix}, \qquad
  D(t) = \begin{pmatrix} 0 & I \\ 0 & -\tfrac{2}{t}I \end{pmatrix}.
  \label{eq:dissipative}
\end{equation}
The Cayley transformation at $t_k = kh$:
\begin{equation}\nonumber
  \Phi_k = \left(I - \tfrac{h}{2}D(t_k)\right)^{-1}\!\left(I + \tfrac{h}{2}D(t_k)\right).
  \label{eq:cayley}
\end{equation}
Block inversion yields:
\begin{align}\nonumber
  x_{k+1/2} &= x_k + h\,v_k, \nonumber \\
  v_{k+1/2} &= \beta^{\mathrm{Cayley}}_k\,v_k, \nonumber 
\end{align}
where
\begin{equation}\nonumber
  \beta^{\mathrm{Cayley}}_k = \frac{1 - \tfrac{1}{k}}{1 + \tfrac{1}{k}} = \frac{k-1}{k+1}.
  \label{eq:beta_cayley}
\end{equation}

The last step involves computing the gradient step and index shift. Applying symplectic Euler to the gradient part:
\begin{align} \nonumber
  v_{k+1} &= v_{k+1/2} - h\,\nabla f(x_{k+1/2}), \\ \nonumber
  x_{k+1} &= x_{k+1/2} + h\,v_{k+1}.
\end{align}
After elimination (see Appendix~\ref{app:smooth}), the effective momentum becomes:
\begin{equation}
  \beta_k = \frac{k-1}{k+2}. \nonumber
\end{equation}
The $+1$ shift in the denominator arises from the Lie--Trotter splitting structure.

The resulting algorithm is:
\begin{align}
  x_{k+1} &= y_k - h^2\,\nabla f(y_k), \label{eq:nest_x}\\
  y_{k+1} &= x_{k+1} + \frac{k-1}{k+2}(x_{k+1} - x_k), \label{eq:nest_y}
\end{align}
which is precisely Nesterov's method for smooth convex functions with $O(1/k^2)$ convergence.

\subsection{Unified Geometric Framework}
It can be concluded that both cases Section \ref{Subsec:SC} and Section \ref{subsec:sconvex} arise from the same geometric principle:
 
\textit{Projective Flatness Principle:} The optimal damping is uniquely determined
by requiring the projective dynamics (Riccati flow) to be flat---either M\"obius
(smooth convex) or at a fixed point (strongly convex).

The Cayley transform acts as a projective integrator. The Cayley transformation $\Phi = (I - A)^{-1}(I + A)$ is distinguished as a M\"obius (linear-fractional) map which preserves the projective structure of $\RP^1$. It maps the stable half-plane to the unit disk and is the unique symmetric rational approximation to $e^{2A}$. For projectively flat continuous dynamics, Cayley discretization maintains projective flatness at the discrete level. 

The connections between two cases are summarized as follows:
\begin{center}
\renewcommand{\arraystretch}{1.6}
\begin{tabular}{lcc}
\toprule
 & Smooth Convex & Strongly Convex \\
\midrule
Function class & $f$ convex, $L$-smooth & $f$ $\mu$-strongly convex, $L$-smooth \\
Damping & Time-varying: $\dfrac{2}{t}$ & Constant: $2\sqrt{\mu}$ \\
Schwarzian condition & $\Sch[w] = 0$ & Fixed point of Riccati flow \\
Cayley coefficient & $\dfrac{k-1}{k+1}$ & $\dfrac{1 - h\sqrt{\mu}}{1 + h\sqrt{\mu}}$ \\
Effective momentum & $\dfrac{k-1}{k+2}$ & $\dfrac{\sqrt{L}-\sqrt{\mu}}{\sqrt{L}+\sqrt{\mu}}$ \\
Convergence rate & $\mathcal{O}(1/k^2)$ & $\mathcal{O}\!\left(\left(\dfrac{\sqrt{\kappa}-1}{\sqrt{\kappa}+1}\right)^{\!k}\right)$ \\
\bottomrule
\end{tabular}
\end{center}

\begin{remark}

    In the strongly convex case, the Riccati equation for the velocity-to-position ratio $p = \dot{x}/x$:
\begin{equation} \nonumber
\dot{p} + p^2 + 2\sqrt{\mu}\, p + \lambda = 0
\end{equation}
has fixed points (unlike the time-varying smooth convex case). The fixed points satisfy:
\begin{equation} \nonumber
p^2 + 2\sqrt{\mu}\, p + \lambda = 0 \quad \Longrightarrow \quad p = -\sqrt{\mu} \pm \sqrt{\mu - \lambda}.
\end{equation}
For $\lambda = \mu$: $p^* = -\sqrt{\mu}$ (double root---critically damped).\\
For $\lambda > \mu$: complex roots, corresponding to damped oscillation.
 
The Cayley discretization preserves this fixed-point structure.
\end{remark}

Let $\lambda \in [\mu, L]$ be a Hessian eigenvalue. The two cases of the main paper
reduce in a single eigendirection to:
\begin{align}
\text{Smooth convex:}&\qquad \ddot{z} + \tfrac{c}{t}\,\dot{z} + \lambda z = 0, \label{eq:sc}\\
\text{Strongly convex:}&\qquad \ddot{z} + 2\sqrt{\mu}\,\dot{z} + \lambda z = 0. \label{eq:strc}
\end{align}
The associated Riccati variable is the velocity-to-position ratio
$p \coloneqq \dot{z}/z$, satisfying
\begin{equation}
\dot{p} + p^2 + \alpha(t)\,p + \lambda = 0,
\label{eq:riccati1}
\end{equation}
where $\alpha(t) = c/t$ in the smooth convex case and $\alpha(t) \equiv 2\sqrt{\mu}$
in the strongly convex case.
 
The steady state of the Riccati flow can be interpreted as follows.
 
\begin{Proposition}[Steady-state roots]\label{prop:roots}
The fixed points of the strongly convex Riccati \eqref{eq:riccati1} with
$\alpha \equiv 2\sqrt{\mu}$ are
\begin{equation}
p_\pm \;=\; -\sqrt{\mu} \;\pm\; \sqrt{\mu - \lambda}.
\label{eq:fixed-points}
\end{equation}
In the convex limit $\mu \to 0^{+}$,
\begin{equation}
p_\pm \;\longrightarrow\; \pm\, i\sqrt{\lambda},
\label{eq:imag-roots}
\end{equation}
i.e.\ the steady-state roots are purely imaginary.
\end{Proposition}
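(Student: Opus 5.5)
Setting $\dot p = 0$ in \eqref{eq:riccati1} with $\alpha \equiv 2\sqrt{\mu}$ turns the fixed-point condition into the scalar quadratic
\[
p^2 + 2\sqrt{\mu}\,p + \lambda = 0 .
\]
I would solve this by completing the square: $(p+\sqrt{\mu})^2 = \mu - \lambda$, which gives \eqref{eq:fixed-points} directly. This matches the remark preceding the proposition. The square root must be read in the complex sense: $\sqrt{\mu-\lambda} = i\sqrt{\lambda-\mu}$ whenever $\lambda > \mu$, which is the generic case since $\lambda \in [\mu, L]$. When $\lambda = \mu$ the two roots merge into the double root $-\sqrt{\mu}$, the critically damped case.

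For the limit, I would fix the Hessian eigenvalue $\lambda > 0$ and let $\mu \to 0^{+}$. The only care needed is that $\lambda \in [\mu, L]$ stays admissible as $\mu$ shrinks, which it does for any fixed $\lambda > 0$. Eventually $\mu < \lambda$, so
\[
p_\pm = -\sqrt{\mu} \pm i\sqrt{\lambda - \mu}.
\]
Both maps $\mu \mapsto \sqrt{\mu}$ and $\mu \mapsto \sqrt{\lambda-\mu}$ are continuous at $\mu = 0$, so $p_\pm \to \pm i\sqrt{\lambda}$, giving \eqref{eq:imag-roots}. As a consistency check, the limits are exactly the roots of $p^2 + \lambda = 0$, the fixed-point equation of \eqref{eq:riccati1} with $\alpha = 0$.

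The main obstacle is keeping the branch of the square root consistent, so that the labels $p_+$ and $p_-$ pass continuously to $+i\sqrt{\lambda}$ and $-i\sqrt{\lambda}$ rather than swapping. I would handle this by fixing the convention $\sqrt{\mu-\lambda} := i\sqrt{\lambda-\mu}$ for $\mu < \lambda$, using the principal branch. I would then note that the real parts $-\sqrt{\mu}$ vanish in the limit, which is the sense in which the roots become purely imaginary. The degenerate case $\lambda = 0$, where the roots tend to $0$, does not arise because the proposition concerns a fixed positive eigenvalue.
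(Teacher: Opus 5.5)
Your proposal is correct and follows essentially the same route as the paper: set $\dot p = 0$, solve $p^2 + 2\sqrt{\mu}\,p + \lambda = 0$, observe that the roots are complex conjugates $-\sqrt{\mu} \pm i\sqrt{\lambda-\mu}$ for $\mu < \lambda$, and pass to $\mu \to 0^{+}$. Your added care is a harmless refinement of the paper's direct substitution $\mu = 0$: you fix the principal branch so the labels $p_\pm$ do not swap, take $\lambda > 0$ fixed, and argue by continuity.
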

 
\begin{proof}
Setting $\dot p = 0$ in \eqref{eq:riccati1} with $\alpha = 2\sqrt{\mu}$ gives
$p^2 + 2\sqrt{\mu}\, p + \lambda = 0$, whose roots are \eqref{eq:fixed-points}.
For $\mu < \lambda$, the discriminant is negative and the roots are complex
conjugates; at $\mu = 0$ they reduce to $\pm i\sqrt{\lambda}$.
\end{proof}
 
\textit{Projective interpretation :} Under the projective identification $\RP^1 \simeq S^1$ via
$p \mapsto (1-ip)/(1+ip)$ (the standard Cayley map), the imaginary axis $\{p = i s : s \in \R\}$ is sent to the unit circle. Hence Proposition~\ref{prop:roots} says:
 
\begin{remark}
In the convex limit $\mu \to 0$, both fixed points of the Riccati flow lie on
the unit circle of $\RP^1$. This is the geometric content of vanishing projective curvature.
\end{remark}

The informal picture is presented for ease of understanding. 
\begin{quote}
``$c/t$ models $\sqrt{\mu/L}$, and the convex case is the strongly convex case
in the limit $\mu \to 0 \iff t \to \infty$''
\end{quote}
is made precise by the chain
\begin{center}
\renewcommand{\arraystretch}{1.5}
\begin{tabular}{rcl}
Steady-state Riccati at $\mu=0$ & $\Longrightarrow$ &
roots $\pm i\sqrt{\lambda}$ on $S^1 \subset \RP^1$ \\
& $\Longrightarrow$ & Schwarzian $= 0$ (projective flatness) \\
& $\Longrightarrow$ & continuous damping $c/t$ with $c = 2$ \\
& $\Longrightarrow$ & Cayley momentum $(k-1)/(k+1)$ \\
& $\xrightarrow{\text{Lie--Trotter}}$ & final momentum $(k-1)/(k+2)$.
\end{tabular}
\end{center}
 
In particular the limit $\mu \to 0$ in the continuous strongly-convex
$\beta(\mu)$ and the limit $k \to \infty$ in the discrete smooth-convex
$\beta_k$ both yield $\beta \to 1$, with leading-order gap rates that
agree up to the Lie--Trotter shift of one unit, completing the
geometric unification.

\appendix

\section{Geometric definition of NAIM}\label{App:Fiber}
We provide a geometric definition of a NAIM arising in the context of fiber bundles. The definition synthesizes the adapted coordinate framework and the Ehresmann connection structure developed above.

\begin{definition}[NAIM]
Given a trivial fiber bundle $E = M \times F$ with coordinates $(x, y)\in \mathbb{R}^n \times \mathbb{R}^m$, a (NAIM) is a submanifold $\mathcal{M} \subset E$ defined as the graph of a smooth mapping $P : M \to F$:
\begin{equation} \nonumber
    \mathcal{M} = \{(x, y) \in E \mid y = P(x)\}.
\end{equation}
It is characterized by two defining properties:
\begin{enumerate}
    \item \textbf{Invariance.} $\mathcal{M}$ is invariant under the flow of the system, meaning trajectories starting on $\mathcal{M}$ remain on it. In adapted coordinates $(x, z)$, where
    \begin{equation} \nonumber
        z = y - P(x),
    \end{equation}
    this is expressed by the invariant condition $z = 0$.
    \item \textit{Normal Attractivity.} The transversal (fiber) dynamics, governed
    by the evolution of the off-manifold error
    \begin{equation} \nonumber
        \dot{z} = \dot{y} - DP(x)\,\dot{x},
    \end{equation}
    are asymptotically stable. That is, trajectories starting near $\mathcal{M}$
    contract exponentially back toward it along the fibers.
\end{enumerate}
\end{definition}

The connection form $w$ is defined by 
\begin{equation} \nonumber
    \omega = dz = dy - DP(x)\,dx,
\end{equation}
where $DP(x) = \frac{\partial P}{\partial x}$ is the Jacobian matrix of $P$, is
the geometric object that precisely measures deviation from $\mathcal{M}$. A vector
$v = (\dot{x}, \dot{y})$ is horizontal (i.e.\ tangent to $\mathcal{M}$) if and only
if $\omega(v) = 0$, which implies
\begin{equation} \nonumber
    \dot{y} = DP(x)\,\dot{x}.
\end{equation}
The normal attractivity of $\mathcal{M}$ is therefore equivalent to the asymptotic
stability of the vertical dynamics defined relative to this Ehresmann connection.

\begin{remark}
The tangent bundle $TE$ admits the direct sum decomposition
\begin{equation}
    TE = \mathcal{V}E \oplus \mathcal{H}E,
\end{equation}
where $\mathcal{V}E = \ker(d\pi)$ is the vertical subbundle (tangent to the fibers)
and $\mathcal{H}E$ is the horizontal subbundle defined by the connection $\omega$.
Normal attractivity requires that the flow contracts exponentially in the
$\mathcal{V}E$ direction.
\end{remark}

\section{The Riccati Equation as a Tilt Consistency Condition}\label{App:Tilt}
In this section we show how the slope of the invariant graph is fixed by balancing base dynamics against fiber dynamics. A candidate invariant manifold in a fiber bundle is a graph: a surface whose \textit{height} in the fiber direction is a function of the base coordinate. The slope (tilt) of that graph is a matrix~$P$.  The Riccati equation is the unique condition on~$P$ that makes the vector field of the system everywhere tangent to the graph.  This note derives that condition geometrically for the perturbed fiber dynamics \eqref{eq:perturb} $\dot x_2 = -\alpha x_2 - \nabla f(x_1)$, interprets its two solutions as a stable tilt (NAIM) and an unstable tilt (anti-NAIM), and explains why no other slope is consistent with invariance.  The algebraic
Riccati equation (ARE) is identified as the steady-state special case; when the curvature varies along trajectories, the tangency condition extends to the full DRE.

We start by viewing a manifold in a fiber bundle as a graph. Consider the phase space $\R^{2n}$ split as base~$\times$~fiber:
$z = (x_1, x_2) \in \R^n \times \R^n$.
A smooth $n$-dimensional submanifold $\mathcal{M} \subset \R^{2n}$ that
projects diffeomorphically onto the base is called a \emph{graph manifold} or
\emph{section}, and can be written as
\begin{equation} \nonumber
  \mathcal{M} = \operatorname{graph}(\sigma)
  = \bigl\{(x_1,\,\sigma(x_1)) : x_1 \in \R^n\bigr\}
  \label{eq:graph}
\end{equation}
for some smooth map $\sigma : \R^n \to \R^n$.  The \emph{slope} (or \emph{tilt matrix}) of $\mathcal{M}$ at a point $x_1$ is the Jacobian:
\begin{equation}\nonumber
  P \;:=\; \Dop\sigma(x_1) \;\in\; \R^{n\times n}.
  \label{eq:slope}
\end{equation}
In the linearised (quadratic~$f$) setting, $\sigma(x_1) = Px_1$ is linear and
the slope is the same matrix~$P$ everywhere.
 
The slope~$P$ encodes how much the fiber coordinate~$x_2$ changes per unit
displacement along the base~$x_1$.  A steep tilt ($\norm{P}$ large) means the manifold rises sharply in the fiber direction; a shallow tilt ($\norm{P}$ small) means the manifold is nearly horizontal. Next the invariance condition is shown as equivalent to tangency of the vector field to the graph. 

We show that the invariance is equivalent to the tangency of the vector field to the graph. A graph manifold $\mathcal{M} = \operatorname{graph}(\sigma)$ is invariant under the flow of a vector field $F = (F_1, F_2)$ if and only if
\begin{equation}
  \text{whenever } x_2 = \sigma(x_1),\quad
  \dot x_2 = \Dop\sigma(x_1)\,\dot x_1.
  \label{eq:tangency}
\end{equation}
The rate of change of the fiber coordinate along trajectories on~$\mathcal{M}$ must equal the rate predicted by following the slope of $\mathcal{M}$ along the base motion.  This is the \emph{tangency condition}.
 
The left-hand side of~\eqref{eq:tangency} is given by the vector field (fiber dynamics): $\dot x_2 = F_2(x_1, x_2)$.  The right-hand side is the slope times the base motion: $\Dop\sigma(x_1)\,\dot x_1 = PF_1(x_1, x_2)$. Substituting onto $\mathcal{M}$ (i.e.\ $x_2 = Px_1$ in the linear case) gives a condition on~$P$ alone — that condition is the Riccati equation.

We derive the Riccati Equation from Tilt Consistency condition. The two contributions to \texorpdfstring{$\dot x_2$} on the Graph. 
 
We work with the linearised (quadratic~$f$, Hessian~$H$) system:
\begin{equation}
  \dot x_1 = x_2, \qquad
  \dot x_2 = -\alpha x_2 - Hx_1, \qquad
  \alpha = 2\sqrt{\mu}.
  \label{eq:system}
\end{equation}
We seek an invariant graph $\{x_2 = Px_1\}$.  On this graph there are two independent ways to compute~$\dot x_2$:

\begin{enumerate}
    \item Route A: via the vector field. Substitute $x_2 = Px_1$ directly into $\dot x_2$:
  $\dot x_2 = -\alpha(Px_1) - Hx_1 = (-\alpha P - H)x_1$
    \item   Route B: via the slope of the graph. Differentiate the constraint $x_2 = Px_1$ in time:
  $\dot x_2 = P\dot x_1 = Px_2 = P(Px_1) = P^2 x_1$ 
\end{enumerate}

\medskip\noindent
\textit{Tangency condition:} Route~A $=$ Route~B for all $x_1$,
\[
  (-\alpha P - H)x_1 = P^2 x_1.
\]
Rearranging yields the \emph{Riccati tilt equation}:
\begin{equation}
  P^2 + \alpha P + H = 0.
  \label{eq:ARE}
\end{equation}

The quadratic nature of~\eqref{eq:ARE} reflects a genuine geometric tension between two contributions.
\begin{enumerate}[label=\arabic*.]
  \item \textit{Route~A is linear in~$P$:} $(-\alpha P + H)x_1$.  This is the
  fiber dynamics pushing the graph at rate~$\alpha$ (damping the tilt toward
  the $x_1$-axis) while the Hessian~$H$ tries to tilt it upward.
  \item \textit{Route~B is quadratic in~$P$:} $P^2 x_1$.  This is the base
  dynamics composed with the slope twice: the base moves at rate
  $x_2 = Px_1$, and the slope translates that base motion back into a fiber
  velocity at the same rate~$P$.
\end{enumerate}
 
The quadratic term is the \emph{graph's own inertia}.  The term~$P^2$ arises because the graph is a self-referential object: the fiber coordinate $x_2 = Px_1$ feeds into $\dot x_1 = x_2$, which feeds back into $\dot x_2 = P\dot x_1 = P^2 x_1$.  This is the graph ``feeling its own tilt'' — a nonlinear effect even in the linear system.  It is precisely this self-referential quadratic term that makes the Riccati equation nontrivial and admits exactly two solutions rather than one.
 
The ARE is used as a steady-state condition, and the DRE is used for time-varying tilt. Equation~\eqref{eq:ARE} is \emph{algebraic}: no time derivative of~$P$ appears.  This reflects the assumption that both~$H$ (the curvature) and~$P$ (the tilt) are constant.  The ARE calculates given that the curvature is fixed, what slope~$P$ makes the graph permanently invariant. Its two roots $P_{\pm}$ are the two fixed points of the tilt dynamics: $P_-$ is the stable NAIM tilt (attracting nearby tilts) and $P_+$ is the unstable anti-NAIM tilt (repelling them). When the Hessian $H = H(x_1(t))$ varies along a trajectory, the slope $P(t)$ must continuously adjust to maintain tangency.  Route~A is unchanged, but Route~B must be corrected.
 
\medskip\noindent
\textit{Route~B with time-varying~$P$.}\enspace
The constraint is still $x_2 = P(t)x_1$, but differentiating in time now gives:
\begin{equation}
  \dot x_2 = P\dot x_1 + \dot P x_1 = P\dot x_1 + P^2 x_1.
  \label{eq:routeB-tv}
\end{equation}
The term $\dot P x_1$ was absent in the steady-state derivation because~$P$
was assumed constant.  Setting Route~A $=$ Route~B:
\[
  -\alpha P x_1 - H x_1 = \dot P x_1 + P^2 x_1,
\]
which must hold for all~$x_1$.  Rearranging:
\begin{equation}
  \dot P + P^2 + \alpha P + H\!\bigl(x_1(t)\bigr) = 0.
  \label{eq:DRE}
\end{equation}
This is the \textit{DRE}: the same tangency condition as the ARE, extended by a single~$\dot P$ term that accounts for the graph's own motion.
 
The DRE~\eqref{eq:DRE} can be read as a \emph{tilt update law}:
\begin{equation}
  \dot P = -H\!\bigl(x_1(t)\bigr) - \alpha P - P^2.
  \label{eq:update}
\end{equation}
At each moment:
\begin{itemize}
  \item The term $-H(x_1(t))$ drives the tilt toward the local curvature.
  \item The term $-\alpha P$ damps the tilt at rate $\alpha = 2\sqrt{\mu}$.
  \item The term $-P^2$ is the nonlinear self-referential correction: the
        graph's own current tilt feeding back into its rate of change.
\end{itemize}
 
The ARE is the \emph{equilibrium} of the DRE.  When $H$ is frozen, the DRE has $P_-$ as its stable fixed point and $P_+$ as its unstable fixed point.  The DRE converges to $P_-$ at rate $2\sqrt{\mu}$ — precisely the normal hyperbolicity rate of the NAIM.  When $H$ varies slowly (relative to~$2\sqrt{\mu}$), the tilt $P(t)$ tracks the quasi-static solution $P_-(t)$ adiabatically; when $H$ varies rapidly, the tilt lags behind and transient errors accumulate.
 
\begin{remark}[The DRE as the complete geometric invariant]
In the NAIM framework, the DRE is the complete geometric invariant of the system: it contains all information about how the invariant manifold tilts, how fast it adapts to changing curvature, and what the convergence rate of nearby trajectories is.  The solution $P_-(t)$ traces a path on the Grassmannian $Gr(n,2n)$, and the stability of this path is equivalent to the normal
hyperbolicity of the NAIM.
\end{remark}
 
We start with the scalar case of geometry of the two roots. For $n=1$ with scalar curvature $\lambda$ (eigenvalue of~$H$), the Riccati
equation is
\begin{equation}
  p^2 + \alpha p + \lambda = 0
  \qquad\Longrightarrow\qquad
  p_{\pm} = \frac{-\alpha \pm \sqrt{\alpha^2 - 4\lambda}}{2}.
  \label{eq:scalar-roots}
\end{equation}
With $\alpha = 2\sqrt{\mu}$ and $\lambda \in [\mu, L]$, the discriminant
$\Delta = \alpha^2 - 4\lambda = 4(\mu - \lambda) \le 0$ for $\lambda \ge \mu$.
Three spectral regimes emerge:
 
\begin{enumerate}[label=\arabic*.]
  \item \textit{Slow mode} ($\lambda = \mu$): $\Delta = 0$.  A real double root
    $p = -\alpha/2 = -\sqrt{\mu}$.  The NAIM is tangent to this eigenspace at
    exactly the spectral resonance rate~$\sqrt{\mu}$ — critically damped.
 
  \item \textit{Fast modes} ($\lambda > \mu$): $\Delta < 0$.  Two complex
    conjugate roots $p_{\pm} = -\sqrt{\mu} \pm i\sqrt{\lambda - \mu}$.  Real
    part $-\sqrt{\mu}$ for all fast modes.
 
  \item \textit{Resonance mode} ($\lambda = L$):
    $p_{\pm} = -\sqrt{\mu} \pm i\sqrt{L-\mu}$, with $ \left| p_{\pm} \right| = \sqrt{L}$.
    The spectral resonance frequency $\omega^* = \sqrt{\mu L}$ appears as the
    geometric mean of $\sqrt{\mu}$ and~$\sqrt{L}$.
\end{enumerate}
 
We describe the geometric meaning of the Riccati Roots. 
\begin{enumerate}
    \item The slow mode: the real NAIM tilt. At $\lambda = \mu$ the double root is $p = -\sqrt{\mu}$.  The invariant line $\{x_2 = -\sqrt{\mu}\,x_1\}$ is the NAIM in the slow direction: the fiber opposes the base at rate~$\sqrt{\mu}$, and the tilt is critical.
    \item The fast modes: complex tilts and underdamped spirals. For $\lambda > \mu$ the roots $p_{\pm} = -\sqrt{\mu} \pm i\sqrt{\lambda-\mu}$ are complex.  There is no real invariant line in the fast directions: the perturbed flow spirals in the fiber, contracting at rate~$\sqrt{\mu}$ while oscillating at frequency~$\sqrt{\lambda-\mu}$.  The modulus $ \left| p_{\pm} \right| = \sqrt{\lambda}$ grows with~$\lambda$, so faster modes
correspond to a larger effective tilt magnitude, even though the contraction rate stays fixed at~$\sqrt{\mu}$.
\end{enumerate}
Next we check what happens at the wrong slope. Suppose we try a real slope $P \ne -\sqrt{\mu}\,I$.  The tangency condition is violated: Routes~A and~B give different values for~$\dot x_2$ on the candidate graph. The violation is quantified by the \emph{tilt error}:
\begin{equation}
  \varepsilon(P)
  \;:=\;
  \dot x_2^{(\text{field})} - \dot x_2^{(\text{slope})}
  \;=\;
  (-\alpha P - H - P^2)x_1
  \;=\;
  -(P^2 + \alpha P + H)x_1.
  \label{eq:tilt-error}
\end{equation}
The Riccati equation $P^2 + \alpha P + H = 0$ is precisely $\varepsilon(P) = 0$:
zero tilt error, everywhere, for all~$x_1$.
 
\begin{remark}[The tilt error is the normal velocity]
The tilt error $\varepsilon(P)$ is the component of the vector field in the normal direction to the candidate manifold.  For an invariant manifold, the normal component must vanish everywhere.  $P^2 + \alpha P + H = 0$ is therefore the condition that the normal velocity is zero — the manifold is tangent to the flow.
\end{remark}
 
The Tilt can be shown as a function of curvature. The Riccati roots actually track the Hessian. For a scalar system with curvature $\lambda$ varying over $[\mu, L]$, the Riccati roots $p_{\pm} = -\sqrt{\mu} \pm i\sqrt{\lambda-\mu}$ satisfy:
\begin{align} \nonumber
  \operatorname{Re}(p_{\pm}) &= -\sqrt{\mu}
    \quad\text{(constant, independent of $\lambda$),}
   \\ \nonumber
  \left| p_{\pm} \right| &= \sqrt{\lambda}
    \quad\text{(monotone increasing in $\lambda$).}
\end{align}
 
The uniform contraction rate leads to direction-adaptive phase. With $\alpha = 2\sqrt{\mu}$, the Riccati roots reveal a remarkable structure. In the eigenbasis of~$H$ (columns~$v_i$, eigenvalues~$\lambda_i$), the fiber dynamics in direction~$v_i$ contract at rate~$\sqrt{\mu}$ — identically for all modes — while oscillating at frequency~$\sqrt{\lambda_i - \mu}$.

\begin{remark}
    The Riccati equation enforces a uniform contraction law. The choice $\alpha = 2\sqrt{\mu}$ is the unique value for which $\operatorname{Re}(p_{\pm}) = -\alpha/2 = -\sqrt{\mu}$ is the same for every eigenmode $\lambda_i \in [\mu, L]$.  All fiber directions contract at the same rate~$\sqrt{\mu}$, regardless of mode stiffness.  This is \textit{dynamics whitening} in its purest form.
\end{remark}

When $f$ is not quadratic, the Hessian $H(x_1) = \nabla^2 f(x_1)$ varies along the trajectory.  The slope $P(t)$ of the invariant manifold must then be continuously adjust to maintain tangency.  As derived in \eqref{ric}, the time evolution of $P(t)$ is governed by the DRE:
\begin{equation}
  \dot P + P^2 + \alpha P + H\!\bigl(x_1(t)\bigr) = 0.
\end{equation}
 
The tilt update law must chase the curvature. The DRE can be read as a tilt update law $\dot P = -H(x_1(t)) - \alpha P - P^2$.
At each moment:
\begin{itemize}
  \item The term $-H(x_1(t))$ drives the tilt toward the local curvature.
  \item The term $-\alpha P$ damps the tilt at rate $\alpha = 2\sqrt{\mu}$.
  \item The term $-P^2$ is the nonlinear self-referential correction.
\end{itemize}
The stable equilibrium of this DRE at each frozen~$x_1$ is precisely the algebraic Riccati solution~$P_-$, and the DRE converges to it at rate $2\sqrt{\mu}$ (the NAIM's normal hyperbolicity rate).
 
\begin{remark}[The DRE as the complete geometric invariant]
In the NAIM framework, the DRE is the complete geometric invariant of the system: it contains all information about how the invariant manifold tilts, how fast it adapts to changing curvature, and what the convergence rate of nearby trajectories is.  The solution $P_-(t)$ traces a path on the Grassmannian $\Gr(n,2n)$, and the stability of this path is equivalent to the normal hyperbolicity of the NAIM.
\end{remark}
 
\begin{remark}[Why no other slope works]
    The Riccati equation $P^2 + \alpha P + H = 0$ is a degree-2 polynomial in~$P$.
For $n=1$ and $\alpha = 2\sqrt{\mu}$, completing the square gives
$(p + \sqrt{\mu})^2 = \mu - \lambda$.  For $\lambda \ge \mu$, the right-hand
side is $\le 0$, so there are no real solutions except the double root
$p = -\sqrt{\mu}$ at $\lambda = \mu$ exactly.
 
We note that any other real \texorpdfstring{$P$}{P} violates tangency. For any real matrix~$P$, the tilt error is
\begin{equation}
  \varepsilon(P)
  = -\!\left[\left(P + \tfrac{\alpha}{2}I\right)^2
    + \!\left(H - \tfrac{\alpha^2}{4}I\right)\right]x_1.
  \label{eq:tilt-error-complete}
\end{equation}
Since $H - \tfrac{\alpha^2}{4}I = H - \mu I \succ 0$ (all curvatures exceed
$\mu$), the matrix $(P + \tfrac{\alpha}{2}I)^2 + (H - \mu I)$ is strictly positive definite for all real~$P$.  Hence $\varepsilon(P) \ne 0$ for all real $P$ and all nonzero~$x_1$: no real graph is invariant in the fast directions. This is not a failure — it is the correct answer. The NAIM in the fast directions is a complex (rotating) object.
\end{remark}
 
The Riccati Equation selects the spectral resonance tilt. With $\alpha = 2\sqrt{\mu}$, the Riccati roots $p_{\pm} = -\sqrt{\mu} \pm
i\sqrt{\lambda-\mu}$ reveal that the NAIM tilt:
\begin{enumerate}[label=\arabic*.]
  \item Contracts uniformly at rate~$\sqrt{\mu}$ in every spectral direction —
        \emph{dynamics whitening}.
  \item Oscillates adaptively at frequency~$\sqrt{\lambda_i - \mu}$ in each
        eigendirection~$i$.
  \item Has modulus $\sqrt{\lambda_i}$ in direction~$i$: the NAIM is steeper
        in stiff directions.
  \item Achieves spectral resonance $\sqrt{\mu L}$ as the geometric mean of
        $\left| \operatorname{Re}(p_{\pm}) \right| = \sqrt{\mu}$ and
        $\left| p_{\pm} \right|_{\lambda=L} = \sqrt{L}$.
\end{enumerate}
 
When the eigenbases of~$H$ and~$P_-$ do not coincide (that is \ $[P_-, H] \ne 0$), the tilt matrix~$P_-$ is not diagonal in the natural coordinates.  This means the NAIM is tilted in a direction that mixes different eigenmodes of the Hessian — the geometric signature of \emph{non-normality}. The Riccati equation still determines the correct tilt, but that tilt is off-diagonal, encoding cross-mode interactions that a diagonal (normal) system
would not exhibit.

\section{The Resolution of gaps in Nesterov's Estimation Sequence} \label{App:Gap}
Nesterov's original proof of accelerated gradient descent relies on the estimation sequence method, in which a sequence of quadratic lower bounds $\{\phi_k\}$ is constructed and shown to track $f(x_k) - f(\xst)$ at rate $(1-\sqmuL)^k$.  Although the proof is valid for general $\mu$-strongly convex $L$-smooth functions $f$, it is deeply and inescapably anchored in the quadratic case: the quadratic ansatz $\phi_k(x) = \phist + \tfrac{\lambda_k}{2}\|x - v_k\|^2$ is imposed as an ungrounded assumption, the recursion closes only because of this quadratic structure, and no theorem in the estimation sequence literature explains why this ansatz yields the optimal rate for non-quadratic $f$. This constitutes a logical gap: the proof is correct but epistemically incomplete.  The NAIM framework, by contrast, identifies the quadratic case as the exactly solvable model problem, derives the optimal rate from spectral resonance and the Riccati equation, and then employs Fenichel's theorem --- a structural theorem about NAIM --- to rigorously lift the result to general $f$.  This is the first geometrically complete proof of Nesterov acceleration: the extension from quadratic to general $f$ is derived from checkable geometric hypotheses (normal hyperbolicity, $C^r$ smoothness), not transferred via an algebraic inequality from an ungrounded ansatz.  We argue that this distinction between a valid proof and an epistemically complete proof is substantive, and that the NAIM--Fenichel approach closes a genuine logical gap in the existing literature.

Nesterov's accelerated gradient method \cite{1370862715914709505} achieves the optimal first-order convergence rate $O(\sqkappa \log 1/\varepsilon)$ for $\mu$-strongly convex $L$-smooth functions, compared to the $O(\kappa \log 1/\varepsilon)$ rate of gradient descent. The factor of $\sqkappa$ improvement is the hallmark of Nesterov acceleration. The standard proof of this rate uses Nesterov's estimation sequence method --- an elegant algebraic construction involving a sequence of quadratic lower bounds on $f$.  This proof is universally accepted as correct. However, a closer examination reveals a logical gap that has not been addressed in the literature. The estimation sequence proof imposes a quadratic ansatz on the lower bound machinery that is not derived from the geometry of $f$, is not justified by any structural theorem, and for which no explanation is given as to why it yields the optimal rate for non-quadratic $f$.  The proof is valid (the algebraic inequalities are correct) but epistemically incomplete (the mechanism is not explained). The NAIM geometric framework provides a resolution.  By identifying the quadratic case as the exact model problem, deriving the optimal rate from spectral resonance and the Riccati equation, and then applying Fenichel's theorem to lift the result to general $f$, the NAIM approach provides what we call an epistemically complete proof: one in which not only the conclusion but the mechanism is fully explained.

The remainder of this section is organised as follows. We first review the estimation sequence construction and identifies precisely where and why the quadratic anchoring occurs. The logical gap is articulated in detail. The NAIM approach and the Riccati derivation for the quadratic case is presented. It is shown how Fenichel's theorem provides the logical bridge to general $f$ and the two approaches are systematically compared. The main claim and its implications are stated.

The Estimation Sequence is constructed as follows. Let $f : \R^n \to \R$ be $\mu$-strongly convex and $L$-smooth, with minimiser $\xst$.  Denote the local quadratic lower bound at $y$ by:
\[
  \lf(x;\, y) \;=\; f(y) + \langle \nabla f(y),\, x - y \rangle
  + \frac{\mu}{2}\|x - y\|^2.
\]
This is a valid lower bound for all $x, y$ by strong convexity.

\begin{definition}[Estimation Sequence]
A sequence of functions $\{\phi_k : \R^n \to \R\}$ and scalars
$\{\lambda_k\}$ with $\lambda_k > 0$ is called an \emph{estimation
sequence} for $f$ if for all $k \geq 0$:
\[
  \phi_k(x) \;\geq\; (1 - \lambda_k)\, f(x) + \lambda_k \phi_0(x)
  \qquad \forall\, x \in \R^n.
\]
\end{definition}

Nesterov's construction immediately and permanently restricts to the following quadratic form for $\phi_k$:
\begin{equation}
      \phi_k(x) \;=\; \phist + \frac{\gamma_k}{2}\|x - v_k\|^2 \label{QA}
\end{equation}

where $\phist \in \R$ is the minimum value, $v_k \in \R^n$ is the centre,
and $\gamma_k > 0$ is the curvature.  This is the quadratic ansatz
--- the central object of our analysis.

The sequence is initialised with:
\[
  \phi_0(x) \;=\; f(x_0) + \frac{\mu}{2}\|x - x_0\|^2,
\]
so $\gamma_0 = \mu$, $v_0 = x_0$, and $\phi_0^* = f(x_0)$.  This is
quadratic by construction.

Given $\phi_k$ of the form \eqref{QA}, the update is defined by choosing
$\alpha_{k+1} \in (0,1)$ and setting:
\begin{align}
  \phi_{k+1}(x) &= (1 - \alpha_{k+1})\,\phi_k(x)
    + \alpha_{k+1}\,\lf(x;\, y_{k+1}), \label{UP}
\end{align}
where $y_{k+1}$ is the gradient step point.  Since $\phi_k(x)$ is quadratic \eqref{QA} and $\lf(x;\,y_{k+1})$ is quadratic (by definition), their convex combination $\phi_{k+1}(x)$ is quadratic.  The ansatz \eqref{QA} is preserved inductively.

We check why the closure requires quadraticity. The update  closes --- i.e., $\phi_{k+1}$ remains of the form \eqref{QA} with computable $(\phist[k+1], \gamma_{k+1}, v_{k+1})$ --- only because both $\phi_k$ and $\lf(\cdot\,; y_{k+1})$ are quadratic.  A quadratic plus a quadratic is a quadratic.  If $\phi_k$ were a general convex function, the update \eqref{UP} would produce a function whose minimum $\phi_{k+1}^*$, centre $v_{k+1}$, and curvature $\gamma_{k+1}$ could not be computed in closed form, and the entire algebraic machinery would collapse.

Under the quadratic ansatz, the update \eqref{UP} yields three scalar recursions:
\begin{align}
  \gamma_{k+1} &= (1-\alpha_{k+1})\,\gamma_k + \alpha_{k+1}\,\mu,
  \label{CR}\\[4pt]
  v_{k+1} &= \frac{(1-\alpha_{k+1})\gamma_k\,v_k
    + \alpha_{k+1}\mu\,y_{k+1}}{\gamma_{k+1}},
  \label{VR}\\[4pt]
  \phi_{k+1}^* &\geq (1-\alpha_{k+1})\,\phist
    + \alpha_{k+1}\,f(y_{k+1})
    - \frac{\alpha_{k+1}^2}{2\gamma_{k+1}}\|\nabla f(y_{k+1})\|^2.
  \label{GR}
\end{align}
These recursions are linear in the curvature $\gamma_k$ (from \eqref{CR}) and yield closed-form expressions for $v_{k+1}$ and $\phi_{k+1}^*$
--- again, only because of the quadratic structure.

\noindent The momentum coefficient $\alpha_{k+1}$ is chosen to satisfy the condition:
\begin{equation}
      \frac{\alpha_{k+1}^2}{\gamma_{k+1}} = \frac{1}{L},
  \qquad \text{i.e.,} \qquad
  \alpha_{k+1}^2 L = (1-\alpha_{k+1})\gamma_k + \alpha_{k+1}\mu.
  \label{MC}
\end{equation}

\noindent The fixed point of \eqref{MC} as $k \to \infty$ (with $\gamma_k \to \mu$) is:
\begin{equation}
  \alpha^2 L = (1-\alpha)\mu + \alpha\mu = \mu
  \qquad\Longrightarrow\qquad
  \alpha^* = \sqrt{\frac{\mu}{L}} = \frac{1}{\sqkappa}.
  \label{FP}
\end{equation}
The per-step contraction rate is then:
\[
  \rho = 1 - \alpha^* = 1 - \sqmuL.
\]

We now articulate precisely the logical gap in the estimation sequence proof and show what the proof does and does not establish. The estimation sequence proof establishes the following chain:

\begin{enumerate}[label=Step \arabic*., leftmargin=3.5em]
  \item Assume $\phi_k$ has the quadratic form \eqref{QA}. 
  \item Show the recursions \eqref{CR}--\eqref{GR} hold and close.
  \item Show $\phi_k^* \geq (1 - \lambda_k) f(\xst) + \lambda_k \phi_0^*$
    with $\lambda_k = (1-\alpha^*)^k$. 
  \item Show $f(x_k) \leq \phi_k^*$. 
  \item Conclude $f(x_k) - f(\xst) \leq (f(x_0) - f(\xst))(1-\sqmuL)^k$.
\end{enumerate}

This is a valid proof where Steps 2--5 are rigorous.  However, Step 1 is an ungrounded ansatz. The estimation sequence proof leaves the following three questions
completely unanswered:
\begin{enumerate}
    \item \textit{Unanswered Question 1}: Why Quadratic?. Why does imposing a quadratic structure on the lower bound machinery yield the optimal rate for non-quadratic $f$?
    The proof shows it works but it does not explain why.  There is no theorem of the form: ``among all possible ansatz classes for $\phi_k$, the quadratic class uniquely achieves the optimal rate.''

    \item \textit{Unanswered Question 2}: Is the Ansatz Necessary and could a non-quadratic ansatz achieve a better rate? No result in the estimation sequence literature addresses this.  The proof does not show the quadratic ansatz is optimal, necessary, or canonical --- only that it is sufficient.

    \item \textit{Unanswered Question 3}: What is the geometric or analytic mechanism by which $\alpha^* = \sqmuL$ emerges as the correct momentum coefficient?

    The fixed point equation (FP) is an algebraic observation.  The proof gives no geometric interpretation of why $\sqmuL$ is special, why it equals the Riccati root, or why it corresponds to spectral resonance.
    
\end{enumerate}

We now introduce a distinction that is standard in philosophy of mathematics but rarely invoked in optimisation theory:

\begin{definition}[Epistemic Completeness]
A proof is epistemically complete if it not only establishes the conclusion but also explains the mechanism by which the conclusion holds --- i.e., it identifies the structural reason for the result, not merely a chain of inequalities that leads to it.
\end{definition}

The core observation is that the estimation sequence proof is valid but epistemically incomplete. It is correct but leaves the mechanism unexplained.  Specifically:
\begin{itemize}
  \item The quadratic ansatz is assumed, not derived.
  \item The extension from quadratic to general $f$ is \textit{transferred via inequality}, not derived from a structural theorem.
  \item The rate $\sqmuL$ is identified as an algebraic fixed point,
    not as a geometric invariant.
\end{itemize}
This constitutes a logical gap: the proof is complete as an algebraic argument, but incomplete as a structural explanation.

\noindent The NAIM Approach is proposed with quadratic case as exact model. The continuous-time limit of Nesterov's method is the NAG-ODE:
\begin{equation}
      \ddot{x} + 2\sqmu\,\dot{x} + \nabla f(x) = 0.
\label{NAG}
\end{equation}

\noindent For quadratic $f(x) = \tfrac{1}{2}x^\top Qx$ with $\mu I \preceq Q \preceq
LI$, this becomes a constant-coefficient linear system with phase-space
matrix:
\[
  A = \begin{pmatrix} 0 & I \\ -Q & -2\sqmu I \end{pmatrix}.
\]

\noindent With the help of tools from spectral resonance we show the exact quadratic result. The eigenvalues of $A$ decompose block-diagonally. For each eigenvalue
$\lambda_i$ of $Q$:
\[
  p_\pm^{(i)} = -\sqmu \pm \sqrt{\mu - \lambda_i}.
\]
Since $\lambda_i \geq \mu$, the discriminant $\mu - \lambda_i \leq 0$, so:
\[
  \operatorname{Re}\!\left(p_\pm^{(i)}\right) = -\sqmu \qquad \forall\, i.
\]
All $2n$ eigenvalues share the real part $-\sqmu$: spectral resonance. This is \textit{exact and global} for quadratic $f$ --- not an approximation.

We now state the Riccati equation and its exact solution. The NAIM connection form $\beta$ (the slope of the slow manifold in phase space) satisfies the ARE:
\begin{equation}
     \beta^2 + 2\sqmu\,\beta + Q = 0.
\label{ARE} 
\end{equation}

The stable solution is $\beta^* = \sqmu\,I$ (for $Q = \mu I$, the critically damped case) and more generally the principal square root stabilising solution.  This is derived, not assumed:
$\beta^* = \sqmu\,I$ follows from the \eqref{ARE} as a theorem, not an ansatz.

The Riccati Root as derived object: In the NAIM framework, the value $\sqmu$ (equivalently $\sqmuL$ in dimensionless units) is not a fixed point discovered by algebra.  It is the unique stable solution of the Riccati equation, which is in turn the equation governing the geometry of the slow manifold. The mechanism is transparent: spectral resonance forces all modes to decay at $\sqmu$, the Riccati equation encodes this as the manifold slope, and the optimal rate is the Riccati root.

The Fenichel's Theorem acts as the logical bridge. The problem is that general $f$ breaks the exact chain. For non-quadratic $f$, the NAG-ODE \eqref{NAG} is nonlinear.  The Hessian $\nbf(x)$ varies along the trajectory, the Riccati equation becomes variable-coefficient:
\[
  \dot\beta + \beta^2 + 2\sqmu\,\beta + \nbf(x) = 0,
\label{Riccati-f}
\]
and the NAIM is no longer a linear subspace but a curved submanifold. The exact NAIM--Riccati chain no longer holds globally.

The question that arises is can the rate $\rho = 1 - \sqmuL$ still be derived for general $f$ in a \emph{principled} and not merely algebraic way. 

We now formulate the Fast--Slow systems. Introduce dimensionless time $\tau = \sqmu\,t$ and singular parameter
$\varepsilon = \sqmuL \in (0,1)$.  The NAG-ODE becomes:
\begin{equation}
  \varepsilon\,\frac{dx}{d\tau} = v, \qquad
  \varepsilon\,\frac{dv}{d\tau} = -v - \varepsilon\,\nabla f(x).
\label{FS}
\end{equation}

The critical manifold (the $\varepsilon = 0$ limit) is $\mathcal{M}_0 = \{v = 0\} \subset \R^{2n}$, the configuration space embedded in phase space.

\begin{definition}[Normal Hyperbolicity]
    A compact invariant manifold $\mathcal{M}$ is normally hyperbolic if the linearised flow restricted to the normal bundle of $\mathcal{M}$ has all eigenvalues with real part bounded away from zero, with this normal rate strictly dominating any tangential contraction/expansion rates.
\end{definition}

For the fast-slow system \eqref{FS}, the linearisation normal to $\mathcal{M}_0$ (the $v$-direction at $v=0$) has eigenvalue $-1/\varepsilon = -\sqLmu$ in $\tau$-time, or equivalently $-\sqmu$ in physical time.  This is the normal hyperbolicity gap:
\[
  \gamma_\perp = \sqmu.
\]
The tangential rate on $\mathcal{M}_0$ is zero at leading order (since $\nabla f(x)$ is $O(\varepsilon)$ in $\tau$-time), so the gap condition is satisfied: $\gamma_\perp \gg \gamma_{\text{tangential}}$.

\begin{Proposition}
The critical manifold $\mathcal{M}_0 = \{v = 0\}$ of the NAG fast-slow system \eqref{FS} is normally hyperbolic with gap $\gamma_\perp = \sqmu$, for any $f \in C^2$ satisfying $\mu I \preceq \nbf(x) \preceq LI$.
\end{Proposition}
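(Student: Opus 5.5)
The plan is to verify normal hyperbolicity directly from the layer problem of \eqref{FS}, working in the fast time $s=\tau/\varepsilon$. In this time \eqref{FS} reads
\[
x' = v,\qquad v' = -v-\varepsilon\nabla f(x),
\]
and the layer problem ($\varepsilon=0$) is $x'=v$, $v'=-v$. Every point $(x,0)$ is an equilibrium, so $\mathcal{M}_0=\{v=0\}$ is an $n$-dimensional manifold of equilibria and hence invariant for the layer flow. Since Definition~(Normal Hyperbolicity) requires compactness, I will restrict to $\mathcal{M}_0\cap K$ with $K$ the compact sublevel region of Proposition~\ref{prop:fenichel-hyp}. That region is invariant because $\dot V=-2\sqrt{\mu}\norm{v}^2\le 0$.

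Next I compute the linearisation of the layer vector field at $(x,0)$. It is
\[
J=\begin{pmatrix}0 & I\\ 0 & -I\end{pmatrix},
\]
with eigenvalue $0$ of multiplicity $n$ and eigenspace $\{(a,0)\}=T\mathcal{M}_0$, and eigenvalue $-1$ of multiplicity $n$ and eigenspace $\{(a,-a)\}$. The latter is transverse to $T\mathcal{M}_0$, so $\R^{2n}=T\mathcal{M}_0\oplus N$ with $N=\{(a,-a)\}$. This splitting is invariant under $J$ and independent of $x$. It follows that:
\begin{itemize}
\item the tangential spectrum is exactly $\{0\}$;
\item the normal spectrum is exactly $\{-1\}$;
\item the gap is uniform over the whole of $\mathcal{M}_0\cap K$.
\end{itemize}
I will emphasise that the Hessian does not enter $J$ at all. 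The assumption $\mu I\preceq\nabla^2 f\preceq LI$ and $f\in C^2$ are used only to guarantee two things: the $O(\varepsilon)$ term $\varepsilon\nabla f$ is a $C^1$-small perturbation on $K$, with $\norm{\varepsilon\nabla^2 f}\le \varepsilon L$, and the sublevel set $K$ is compact, which follows from strong convexity.

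Then I translate the rate $-1$ back to the original clocks. In $s$-time the normal rate is $1$; in $\tau$-time it is $1/\varepsilon=\sqrt{L/\mu}$; and with the paper's normalisation of physical time (as in the text preceding the statement) this is identified with $\gamma_\perp=\sqrt{\mu}$. The tangential rate is $0$ at leading order, and the $O(\varepsilon)$ drift from $\nabla f$ contributes at most $\varepsilon L$ in $s$-time. So the dominance condition ``normal rate strictly exceeds tangential rate'' holds whenever $\varepsilon L<1$, i.e.\ for $\varepsilon$ small.

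I expect the main obstacle to be bookkeeping rather than analysis. First, in \eqref{FS} $x$ is not literally a slow variable: $x'=v$ is not $O(\varepsilon)$. The fix is exactly the observation above, namely that $v\equiv 0$ on $\mathcal{M}_0$, so the tangential dynamics vanish there and the normal eigendirection $(a,-a)$ is still transverse. Second, the time-rescalings must be handled consistently so that the stated value $\gamma_\perp=\sqrt{\mu}$ matches the convention used in Theorem~\ref{thm:fenichel}. I will state the gap as $1$ in fast time and make the conversion explicit, so that the hypothesis feeds directly into Fenichel's theorem.
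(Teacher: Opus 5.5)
You follow the paper's route: linearise the layer problem of \eqref{FS} at $v=0$, read off the normal eigenvalue, then convert clocks. Your linear algebra is sharper than the paper's. Both equations of \eqref{FS} carry the factor $\varepsilon$, so $x$ is not slow, and the invariant normal bundle is $N=\{(a,-a)\}$ rather than the bare $v$-direction. You therefore do establish that $\mathcal{M}_0\cap K$ is a uniformly attracting, normally hyperbolic manifold of equilibria of the layer problem, with normal rate $1$ in fast time and independent of $f$.

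The gap is the one step that produces the number in the statement. If you make the conversion explicit, as you promise, you get $s=\tau/\varepsilon=\sqrt{\mu}\,t/\varepsilon=\sqrt{L}\,t$. So the eigenvalue $-1$ in $s$ is $-1/\varepsilon$ in $\tau$ and $-\sqrt{L}$ in $t$, not $-\sqrt{\mu}$. The text preceding the statement, and (H3) of Proposition~\ref{prop:hypotheses}, reach $\sqrt{\mu}$ only by applying the factor from $\tau=\sqrt{\mu}\,t$ to the fast-time eigenvalue $-1$, i.e.\ by conflating $s$ with $\tau$. Deferring to ``the paper's normalisation'' imports that slip rather than justifying the value.

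No other clock repairs it. In physical time, \eqref{FS} is equivalent to $\ddot x+\sqrt{L}\,\dot x+\sqrt{\mu L}\,\nabla f(x)=0$, so it is not a rescaling of the NAG ODE. Conversely, rescaling the actual NAG ODE to $s=\sqrt{L}\,t$ gives $x''+2\varepsilon x'+\tfrac{1}{L}\nabla f(x)=0$. Its $\varepsilon=0$ limit is undamped, and there $\{v=0\}$ is not even invariant. The number $\sqrt{\mu}$ is the decay rate of the NAG ODE itself: in the quadratic case its spectrum $-\sqrt{\mu}\pm\sqrt{\mu-\lambda_i}$ has real part $-\sqrt{\mu}$. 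It is not a normal rate of $\mathcal{M}_0$. What your argument proves is a gap of $1$ in fast time, $1/\varepsilon$ in $\tau$-time, equivalently $\sqrt{L}$ in physical time. The value $\gamma_\perp=\sqrt{\mu}$ does not follow.

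Two smaller points. First, your dominance condition $\varepsilon L<1$ is not ``$\varepsilon$ small'' once $\varepsilon=\sqrt{\mu/L}$. Then $\varepsilon L=\sqrt{\mu L}$, which grows as $\varepsilon\to0$ at fixed $\mu$. It becomes a smallness condition only if $\varepsilon$ is decoupled from $f$; any Fenichel argument has to do that, but then $\varepsilon\neq\sqrt{\mu/L}$. It is not needed here anyway, since normal hyperbolicity of $\mathcal{M}_0$ is a property of the $\varepsilon=0$ layer problem alone. Second, $\dot V=-2\sqrt{\mu}\norm{v}^2$ is the energy identity of the NAG ODE, not of \eqref{FS}. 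Along \eqref{FS} in fast time, $f+\tfrac12\norm{v}^2$ is not monotone, whereas $\varepsilon f(x)+\tfrac12\norm{v}^2$ has derivative $-\norm{v}^2$. So $K$ should be taken as a sublevel set of the latter.
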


Note that normal hyperbolicity is verified from the strong convexity and smoothness constants alone --- it does not require $f$ to be quadratic.

\begin{theorem}[Fenichel, 1979]
Let $\mathcal{M}_0$ be a compact NAIM of the $\varepsilon = 0$ system, with $C^r$ smoothness ($r \geq 1$). Then for all sufficiently small $\varepsilon > 0$, there exists a $C^r$-smooth slow manifold $\mathcal{M}_\varepsilon$ satisfying:
\begin{enumerate}[label=\emph{(\roman*)}, itemsep=3pt]
  \item $\mathcal{M}_\varepsilon$ is $O(\varepsilon)$-close to $\mathcal{M}_0$
    in the $C^r$ topology;
  \item $\mathcal{M}_\varepsilon$ is locally invariant under the full
    $\varepsilon > 0$ flow;
  \item Trajectories off $\mathcal{M}_\varepsilon$ are attracted to it at
    rate $e^{-\gamma_\perp t}$ where $\gamma_\perp$ is the normal
    hyperbolicity gap;
  \item The reduced flow on $\mathcal{M}_\varepsilon$ inherits the
    Fenichel spectral gap $\gamma_F = \gamma_\perp(1 - \varepsilon) =
    \sqmu(1-\sqmuL)$.
\end{enumerate}
\end{theorem}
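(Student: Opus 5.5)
The plan is to reduce the statement to the standard geometric singular perturbation setting and then prove items (i)--(iii) by a graph-transform (contraction mapping) argument. Item (iv) is different in kind: it is a quantitative claim that classical Fenichel theory does not supply, so it has to be extracted separately from the reduced flow.

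\textbf{Step 1: put \eqref{FS} in standard form.} First I would pass from $\tau$ to the fast time $s=\tau/\varepsilon$. System \eqref{FS} then reads
\[
x'=v,\qquad v'=-v-\varepsilon\nabla f(x).
\]
At $\varepsilon=0$ the set $\mathcal{M}_0=\{v=0\}$ consists of equilibria. The linearization along $\mathcal{M}_0$ has normal eigenvalue $-1$ (multiplicity $n$) and tangential eigenvalue $0$, so $\mathcal{M}_0$ is uniformly attracting and normally hyperbolic. This is the content of the preceding Proposition.

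\textbf{Step 2: obtain compactness.} Since $\mathcal{M}_0$ is not compact, I would restrict to the sublevel region $K$ of Proposition~\ref{prop:fenichel-hyp}. There the vector field points inward on $\partial K$. Alternatively one can multiply $\nabla f$ by a smooth cutoff outside a neighbourhood of $K$, so that all estimates are uniform with constants depending only on $\mu$, $L$ and the $C^{r+1}$ norm of $f$ on $K$.

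\textbf{Step 3: persistence via the invariance equation (items (i)--(ii)).} I would seek $\mathcal{M}_\varepsilon$ as a graph $v=\varepsilon h(x)$. Tangency, in the sense of the tilt consistency condition of Appendix~\ref{App:Tilt}, requires $\varepsilon Dh(x)\,x'=v'$, which gives the fixed-point equation
\[
h=\mathcal{T}(h):=-\nabla f-\varepsilon\,Dh\,h .
\]
On a ball of $C^r$ functions around $-\nabla f$, I would show that $\mathcal{T}$ (or rather its Lyapunov--Perron or graph-transform version, to avoid derivative loss) is a contraction for small $\varepsilon$. The point is that the normal contraction rate $1$ dominates the tangential rate $O(\varepsilon)$ by the factor required for $r$-normal hyperbolicity.

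The fixed point satisfies $h=-\nabla f+O(\varepsilon)$, so $\mathcal{M}_\varepsilon=\{v=-\varepsilon\nabla f+O(\varepsilon^2)\}$. This gives $O(\varepsilon)$-closeness in $C^r$, which is item (i), and local invariance relative to $K$, which is item (ii).

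\textbf{Step 4: attraction (item (iii)).} I would use the adapted coordinate $w=v-\varepsilon h(x)$ of Definition~\ref{def:w}. Exactly as in the derivation of \eqref{eqnw}, the invariance equation removes the $x$-forcing and leaves
\[
w'=-(I+\varepsilon Dh)w+O(\|w\|^2).
\]
A Gronwall or Lyapunov estimate with $\|w\|^2$ then gives decay at rate $1-O(\varepsilon)$ in $s$. Converting back to physical time yields item (iii).

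\textbf{Step 5: the reduced flow (item (iv)).} On $\mathcal{M}_\varepsilon$ the reduced flow is
\[
x'=\varepsilon h(x)=-\varepsilon\nabla f(x)+O(\varepsilon^2).
\]
I would compute its contraction rate with the Lyapunov function $f-f^\ast$ and strong convexity. The rate should then be combined with the normal rate to identify the spectral gap $\gamma_F$.

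To pin down the constant $(1-\varepsilon)$, I would first verify it exactly in the quadratic case, using the eigenvalues $p_\pm^{(i)}$ computed above. I would then transfer the result to general $f$ by the $O(\varepsilon)$ $C^1$-closeness established in Step 3.

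\textbf{Main obstacle.} I expect item (iv) to be the hardest step. It is not part of the classical theorem, and the exact constant $\gamma_\perp(1-\varepsilon)$ is sensitive to how $\varepsilon$, $\tau$ and $t$ are related. The rescalings given in \eqref{FS} and \eqref{eq:fenichel-form} must first be made mutually consistent, or else (iv) should be stated only up to $O(\varepsilon^2)$ corrections.
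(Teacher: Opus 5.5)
\emph{Verdict: genuine gap.} Your Steps 1--4 are a sound and more self-contained route to (i)--(iii) than the paper, which only cites Fenichel after checking (H1)--(H5). One repair is needed in Step 2: the paper's $K$ is built from $f+\tfrac12\|v\|^2$, which is not a Lyapunov function for \eqref{FS}. In $s$-time use $\varepsilon f+\tfrac12\|v\|^2$, whose derivative is $-\|v\|^2$, or use your cutoff. Step 4 is in fact exact: \eqref{FS} is linear in $v$, so the invariance equation $h=-\nabla f-\varepsilon\,Dh\,h$ gives $w'=-(I+\varepsilon Dh(x))\,w$ with no $O(\|w\|^2)$ remainder.

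The gap is Step 5, and it cannot be closed. Under the reading the paper uses in Appendix~\ref{App:fenichel} (the reduced flow converges at rate $\gamma_F$), item (iv) is false for \eqref{FS}, because it contradicts the time-scale separation on which (i)--(iii) rest. Carry out your own Lyapunov computation on the reduced flow $x'=-\varepsilon\nabla f-\varepsilon^2\nabla^2 f\,\nabla f+O(\varepsilon^3)$. It gives a decay rate of order $\varepsilon\mu$ in $s$-time, and the quadratic case shows this is sharp along the flattest direction. Meanwhile $\gamma_\perp=1$ in $s$-time. The ratio of reduced rate to normal rate does not depend on how $\varepsilon$, $\tau$, $t$ are reconciled, so no consistent rescaling turns it into $1-\varepsilon$. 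A reduced rate $\gamma_\perp(1-\varepsilon)$ would also put the tangential rate within a factor $1-\varepsilon$ of the normal one, destroying the separation $\gamma_\perp\gg\gamma_{\tan}$ that (H3) and your Step 3 use. There is no mechanism for ``combining'' the two rates into $\gamma_F$. The only quantity of the form $\gamma_\perp(1-O(\varepsilon))$ is the normal attraction rate read off from $w'=-(I+\varepsilon Dh)w$. That rate is roughly $1-\varepsilon\lambda$ along a $\lambda$-eigendirection of $\nabla^2 f$, with worst case $1-\varepsilon L$ rather than $1-\varepsilon$, and it is item (iii), not a property of the reduced flow.

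Your quadratic check cannot pin down the constant either. The roots $p_\pm^{(i)}=-\sqrt{\mu}\pm\sqrt{\mu-\lambda_i}$ are eigenvalues of the NAG ODE with damping $2\sqrt{\mu}$, not of \eqref{FS}. Undoing $\tau=\sqrt{\mu}\,t$, system \eqref{FS} is $\ddot x+\sqrt{L}\,\dot x+\sqrt{\mu L}\,\nabla f(x)=0$. Its quadratic eigenvalues in $s$-time are $\tfrac12\bigl(-1\pm\sqrt{1-4\varepsilon\lambda_i}\bigr)$, and the slow branch $-\varepsilon\lambda_i+O(\varepsilon^2\lambda_i^2)$ refutes (iv) rather than confirming it.

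Conversely, all $2n$ eigenvalues of the NAG ODE lie on $\operatorname{Re}p=-\sqrt{\mu}$. It therefore has no normal/tangential separation, and, as Appendix~\ref{App:Tilt} itself notes, no real invariant graph in the stiff directions. So the $\sqrt{\mu}$ rate and the fast--slow structure belong to two different systems. The reduced flow of the system that actually is fast--slow is, to leading order, the rescaled gradient flow $x'=-\varepsilon\nabla f(x)$, which is unaccelerated. For the same reason your fallback of stating (iv) up to $O(\varepsilon^2)$ does not help: the discrepancy is $O(1)$ relative to $\gamma_\perp$.

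Note also that the contraction in Step 3 really requires $\varepsilon\sup\|\nabla^2 f\|\le\varepsilon L=\sqrt{\mu L}$ to be small. Already $4\varepsilon L\le 1$ is needed for a real slow graph in the stiffest direction. Large $\kappa$ does not provide this, because $\varepsilon=\sqrt{\mu/L}$ is tied to $f$ rather than being a free parameter.

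What your argument can honestly deliver is (i)--(iii) for \eqref{FS} with $\varepsilon$ treated as an independent small parameter; (iv) should be dropped. The paper supplies nothing to fill this. Its only support for (iv) is the bare assertion $\gamma_F=\gamma_\perp-O(\varepsilon)$ in Appendix~\ref{App:fenichel}, which is exactly the missing step.
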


Applying Fenichel's theorem to the NAG fast-slow system \eqref{FS}:

Fenichel consequences for general $f$:
For any $f \in C^2$ that is $\mu$-strongly convex and $L$-smooth, with
$\varepsilon = \sqmuL$ sufficiently small:
\begin{enumerate}[label=\emph{(\roman*)}, leftmargin=2em, itemsep=4pt]
  \item The NAIM $\mathcal{M}_\varepsilon$ exists as a smooth
    nonlinear submanifold of phase space, $O(\varepsilon)$-close to
    $\{v=0\}$.
  \item Trajectories are attracted to $\mathcal{M}_\varepsilon$
    \textit{exponentially at rate} $e^{-\sqmu\,t}$ --- the same rate as
    the quadratic case, for any $f$.
  \item The Fenichel spectral gap is
    $\gamma_F = \sqmu\,(1 - \sqmuL)$, giving the per-step discrete
    contraction $\rho = 1 - \sqmuL$ after time-discretisation.
  \item The Riccati equation (Riccati-f) has solution
    $\beta^* \to \sqmu\,I$ near $\xst$, with the Fenichel slow manifold
    as the nonlinear deformation of the quadratic NAIM.
\end{enumerate}

The key property of Fenichel's theorem that makes it a logical bridge, and not merely another algebraic trick, is that its hypotheses are checkable from first principles:

\begin{enumerate}
  \item Normal hyperbolicity.  Verified directly from $\mu > 0$
    (strong convexity) and the NAG damping coefficient $2\sqmu$.  No
    quadraticity of $f$ required.
  \item $C^r$ smoothness. Satisfied for any $C^2$ function $f$.
  \item Compactness. Satisfied on any sublevel set of $f$, which
    is compact by strong convexity.
\end{enumerate}

The extension from quadratic to general $f$ is therefore a derived consequence of the geometric structure (normal hyperbolicity) and Fenichel's theorem, not an assumption or an algebraic accident.  The logical gap is closed.

The systematic Comparison of the Two Approaches is tabulated below. 

\begin{center}
\renewcommand{\arraystretch}{1.8}
\small
\begin{tabular}{>{\raggedright\arraybackslash}p{3.8cm}
                >{\raggedright\arraybackslash}p{4.8cm}
                >{\raggedright\arraybackslash}p{4.8cm}}
\toprule
\textit{Aspect}
  & \textit{Estimation Sequence}
  & \textit{NAIM + Fenichel} \\
\midrule
Role of quadratic $f$
  & Tacit: ansatz (QA) is quadratic regardless of $f$
  & Explicit: exact model problem, then lifted \\
Quadratic ansatz
  & \textit{Imposed} as ungrounded assumption
  & \textit{Derived} from Riccati/phase-space geometry \\
Extension to general $f$
  & Via algebraic inequality (transferred, not derived)
  & Via Fenichel's theorem (structural, derived) \\
Mechanism for $\sqmuL$
  & Algebraic fixed point of (MC); no explanation
  & Riccati root = spectral resonance rate \\
Why quadratic ansatz works
  & \textit{Not explained}
  & Normal hyperbolicity persists under $C^1$ perturbations \\
Hypotheses for general $f$
  & $\mu$-strong convexity, $L$-smoothness (implicit)
  & $\mu$-strong convexity, $L$-smoothness (explicit in Fenichel) \\
Could a better ansatz exist?
  & Unknown; not addressed
  & No: normal hyperbolicity uniquely fixes the gap \\
What breaks for $\mu = 0$?
  & Rate formula degenerates; no geometric insight
  & Normal hyperbolicity fails ($\gamma_\perp = 0$); precise diagnosis \\
Proof validity
  & \textit{Valid}
  & \textit{Valid} \\
Epistemic completeness
  & \textit{Incomplete}
  & \textit{Complete} \\
\bottomrule
\end{tabular}
\end{center}

The correspondence between Discrete and Continuous versions is highlighted below. 
The logical gap in the estimation sequence approach is further illuminated
by the following correspondence, which shows that the estimation sequence
and the NAIM/Riccati approach are computing the same object in
different representations.

\begin{center}
\renewcommand{\arraystretch}{1.8}
\small
\begin{tabular}{lll}
\toprule
\textit{Object} & \textit{Estimation Sequence (discrete)} & \textit{NAIM/Riccati (continuous)}\\
\midrule
Lower bound structure & Quadratic ansatz $\phi_k$ & Quadratic Lyapunov on NAIM\\
Curvature & $\gamma_k \to \mu$ & Riccati solution $\beta^* = \sqmu\,I$\\
Momentum fixed point & $\alpha^* = \sqmuL$ & Spectral resonance rate $\sqmuL$\\
Recursion equation & Scalar (MC) for $\alpha_{k+1}$ & Riccati ODE for $\beta$\\
Closure condition & Quadratic form preserved & ARE satisfied at fixed point\\
Rate derivation & Algebraic fixed point of (FP) & Riccati root (geometric theorem)\\
Extension to general $f$ & Inequality transfer (gap) & Fenichel persistence (theorem)\\
\bottomrule
\end{tabular}
\end{center}

The correspondence is exact: the momentum recursion (MC) is the discrete-time shadow of the Riccati equation, and the fixed point (FP) is the discrete-time shadow of the Riccati root $\beta^* = \sqmu\,I$.  The estimation sequence is computing the Riccati geometry algebraically, without realising it.  This is why it gives the correct rate --- but also why it cannot explain the mechanism.

We claim that the NAIM framework, via Fenichel's theorem, provides the first epistemically complete proof of Nesterov acceleration for general strongly convex $f$, in the following precise sense:
\begin{enumerate}
    \item \textit{The estimation sequence proof is valid but epistemically incomplete.} It correctly establishes Nesterov's rate, but only by imposing an ungrounded quadratic ansatz, closing the recursion algebraically, and transferring the result to general $f$ via inequality --- leaving unanswered why the ansatz works, whether it is necessary, and what mechanism drives acceleration.

    \item \textit{The quadratic case is the exact model problem, not an ansatz.} Under NAIM, spectral resonance holds globally for quadratic $f$, the Riccati equation forces $\beta^* = \sqrt{\mu}\,I$ at the fixed point, and the rate $\sqmu$ is derived as a theorem rather than observed as a numerical coincidence. The momentum recursion is simply the discrete shadow of the Riccati ODE.

    \item \textit{Fenichel's theorem provides the structural bridge to general $f$.} From $\mu$-strong convexity alone one verifies normal hyperbolicity with gap $\gamma_\perp = \sqrt{\mu}$; Fenichel's theorem then guarantees NAIM persistence, exponential attraction, and the spectral gap $1 - \sqmu$ for any $C^2$ strongly convex $f$ --- replacing an algebraic trick with a checkable geometric hypothesis.

    \item \textit{The optimal rate is a geometric invariant.} $\sqmu$ is the normal hyperbolicity gap of the NAG fast-slow system, preserved under $C^1$-perturbations of the vector field, rather than an artifact of a particular algebraic construction.

    \item \textit{The framework explains failure at $\mu = 0$ and enables principled extensions.} Normal hyperbolicity collapses in the merely convex case, so Fenichel no longer applies and the NAIM degenerates --- a precise diagnosis the estimation sequence cannot give. Extensions to non-Euclidean, stochastic, or composite settings reduce to verifying normal hyperbolicity in the new setting, a well-defined geometric task.
\end{enumerate}

\section{Fenichel Verification}\label{App:fenichel}
For general smooth strongly convex $f$, applying Fenichel's theorem requires careful verification
of all hypotheses. This section does so completely, closing the gap identified in our assessment.
 Fenichel's theorem requires a compact normally hyperbolic manifold. The critical manifold
$\mathcal{M}_0 = \{v = 0\} \cong \R^n$ is non-compact. We resolve this by localizing to a compact
sublevel set.
 
Fix any $c > f(x^*)$. The sublevel set
\begin{equation}
\Omega_c = \{ x \in \R^n \mid f(x) \le c \} \label{eq:sublevel}
\end{equation}
is compact by $\mu$-strong convexity (it is contained in the ball
$\norm{x - x^*} \le \sqrt{2(c - f(x^*))/\mu}$). Since $f$ is $L$-smooth and $\mu$-strongly convex,
all trajectories of the NAG-ODE \eqref{eq:NestODE} starting from any $(x_0, v_0)$ with
$f(x_0) \le c$ and $\norm{v_0} \le R$ remain in the compact phase-space region
$K = \Omega_c \times B(0, R)$ for all $t \ge 0$, where $R > 0$ depends only on $c, \mu, L$ (this
follows from the Lyapunov function $V = f(x) - f^* + \tfrac{1}{2}\norm{v}^2$ and
$\dot V \le -\mu \norm{v}^2 \le 0$ along trajectories). All subsequent Fenichel analysis is
restricted to $K$.
 
Define $\mathcal{M}_0^c = \{(x, 0) \mid x \in \Omega_c\} \subset K$. This is a compact $n$-dimensional submanifold of $K$. The Fast-Slow formulation can be obtained by setting $\eps = \sqrt{\mu/L} \in (0,1)$ and $\tau = \sqrt{\mu}\, t$. The system
\eqref{eq:15a}--\eqref{eq:15b} (with $\lambda^* = 2\sqrt{\mu}$) becomes
\begin{equation}
\eps\, \frac{dx}{d\tau} = v, \qquad
\eps\, \frac{dv}{d\tau} = -v - \frac{\eps}{\sqrt{\mu}}\, \nabla f(x). \label{eq:fastslow}
\end{equation}
At $\eps = 0$ (the layer problem), the $v$-equation gives $v = 0$, so $\mathcal{M}c_0^c = \{v = 0\}$ is the critical manifold.
 
We are now ready to verify all five hypotheses of Fenichel's theorem \cite{fenichel1979geometric} for the system
\eqref{eq:fastslow} restricted to $K$.
 
\begin{Proposition}[Fenichel hypothesis verification]\label{prop:hypotheses}
The following five conditions hold for system \eqref{eq:fastslow} on $K$:
\begin{enumerate}[label=\textup{(H\arabic*)},leftmargin=*]
    \item \textit{(Compactness)} $\Mc_0^c$ is a compact $C^\infty$ submanifold of $K$.
    \item \textit{(Invariance at $\eps=0$)} $\Mc_0^c$ is invariant under the layer flow: at
    $\eps = 0$, $dv/d\tau = -v/\eps \to -\infty$ unless $v = 0$. More precisely,
    $\Mc_0^c = \{v = 0\}$ is the equilibrium set of $\eps\, dv/d\tau = -v$.
    \item \textit{(Normal hyperbolicity)} The linearization of the fast subsystem
    $\eps\, dv/d\tau = -v$ normal to $\Mc_0^c$ has eigenvalue $-1/\eps$ (in $\tau$-time),
    equivalently $-1$ in $\sqrt{\mu}\, t$-time, or $-\sqrt{\mu}$ in physical time. The tangential
    rate on $\Mc_0^c$ is $O(\eps)$ (from the $\nabla f$ term). Hence the spectral gap
    $\gamma_\perp = 1/\eps \gg \gamma_{\tan} = O(1)$ in $\tau$-time, satisfying the strict
    dominance condition.
    \item \textit{($C^r$ smoothness)} Since $f \in C^2(\R^n)$, the vector field of \eqref{eq:fastslow}
    is $C^r$ for $r = 1$ on $K$. For $f \in C^{r+1}$, the manifold and its asymptotic expansion
    are $C^r$.
    \item \textit{(Boundary conditions)} The compact region $K$ can be chosen with smooth boundary
    such that the vector field points strictly inward on $\partial K$ (by the Lyapunov argument
    of Section~\ref{sec:fenichel}).
\end{enumerate}
\end{Proposition}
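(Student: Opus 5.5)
The plan is to verify (H1)--(H5) directly for \eqref{eq:fastslow}, after first rewriting it in the fast time $s = \tau/\eps$. In that time the system reads
\[
\frac{dx}{ds} = v, \qquad \frac{dv}{ds} = -v - \frac{\eps}{\sqrt{\mu}}\,\nabla f(x).
\]
This is the standard form for Fenichel's theory, with $\eps$ entering as a regular parameter. The unperturbed ($\eps = 0$) system is then $x' = v$, $v' = -v$. Its equilibrium set is exactly $\{v = 0\}$, so (H2) follows: $\Mc_0^c$ consists of equilibria and is trivially invariant.

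For (H1), I would use the sublevel-set bound already recorded: $\Omega_c$ lies in the ball of radius $\sqrt{2(c-f(x^*))/\mu}$ and is closed, hence compact. Because $f$ is strongly convex and $C^2$, and $c > f(x^*)$ is a regular value, $\partial\Omega_c = \{f = c\}$ is a $C^1$ (indeed $C^{r+1}$ if $f \in C^{r+1}$) hypersurface. So $\Mc_0^c = \Omega_c \times \{0\}$ is a compact manifold with boundary. I would then state Fenichel's theorem for compact manifolds with boundary: it is \emph{overflowing}/\emph{inflowing} invariant, or invariant after the usual extension beyond $\partial K$. This is also where the $C^\infty$ claim in (H1) should be weakened to the smoothness inherited from $f$.

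For (H3), I would linearize the $\eps = 0$ field at a point $(x,0)$. The Jacobian is
\[
\begin{pmatrix} 0 & I \\ 0 & -I \end{pmatrix},
\]
with eigenvalue $0$ ($n$-fold, eigenvectors $(\xi,0)$, tangent to $\Mc_0^c$) and eigenvalue $-1$ ($n$-fold, eigenvectors $(\xi,-\xi)$, transverse). The transverse eigenvalue is uniformly bounded away from zero and the tangential one vanishes, which is exactly normal hyperbolicity (in fact normal attractivity) in $s$-time. Converting the units gives the rate $-1/\eps$ in $\tau$-time and $-\sqrt{\mu}$ in physical time, matching the statement. One caveat: the transverse bundle is spanned by $(\xi,-\xi)$ rather than the $v$-axis. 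It is still complementary to $T\Mc_0^c$, which is all that is needed. (H4) is immediate, since the field is polynomial in $v$ and as smooth as $\nabla f$, i.e. $C^r$ when $f \in C^{r+1}$.

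The main obstacle is (H5). The Lyapunov derivative $\dot V = -2\sqrt{\mu}\norm{v}^2$ vanishes on $\{v = 0\}$, so a sublevel set of $V = f - f^* + \tfrac12\norm{v}^2$ is only weakly forward invariant, not strictly inward pointing. The product set $\Omega_c \times B(0,R)$ is worse still: on its face $\{\norm{v} = R\}$ the field need not point inward. I would therefore replace $V$ by the tilted function
\[
V_\delta = f - f^* + \tfrac12\norm{v}^2 + \delta\langle \nabla f(x), v\rangle, \qquad \delta > 0 \text{ small}.
\]
Its derivative picks up the extra term $-\delta\norm{\nabla f}^2$, plus cross terms that are absorbed for small $\delta$ using $\mu I \preceq \nabla^2 f \preceq LI$. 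This makes $\dot V_\delta < 0$ on $\{V_\delta = c\}$ away from the minimizer. I would take $K = \{V_\delta \le c\}$, whose boundary is smooth (since $\nabla V_\delta \ne 0$ there) and on which the field points strictly inward for every $\eps$ in a small interval. With (H1)--(H5) in hand, Fenichel's theorem applies on $K$.
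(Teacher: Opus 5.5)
The genuine gap is in (H5). You rightly note that the paper's argument ($\dot V\le 0$ with equality on $\{v=0\}$) gives only weak invariance, and a tilted Lyapunov function is the right repair. But you tilt the wrong energy. $f-f^*+\tfrac12\norm{v}^2$ is the energy of $\ddot x+2\sqrt{\mu}\,\dot x+\nabla f=0$, whereas \eqref{eq:fastslow} in physical time is $\ddot x+\sqrt{L}\,\dot x+\sqrt{L}\,\nabla f(x)=0$. In your fast time $s$, with $\eta:=\eps/\sqrt{\mu}$,
\[
\frac{dV_\delta}{ds}=(1-\eta-\delta)\langle\nabla f,v\rangle-\norm{v}^2+\delta\langle\nabla^2 f\,v,v\rangle-\delta\eta\norm{\nabla f}^2 .
\]
The cross term has an $O(1)$ coefficient, while the only negative $\norm{\nabla f}^2$ term is $O(\delta\eta)$. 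So for small $\delta$ this form is indefinite, and your step \emph{cross terms absorbed for small $\delta$} fails. Already $V_0$ is not nonincreasing unless $\eps=\sqrt{\mu}$, and the paper's claim $\dot V=-\norm{v}^2$ has the same defect. The energy that \eqref{eq:fastslow} actually dissipates is $E=\eta(f-f^*)+\tfrac12\norm{v}^2$, with $dE/ds=-\norm{v}^2$. Tilting it, $W=E+\delta\eta\langle\nabla f,v\rangle$ satisfies
\[
\frac{dW}{ds}\le-(1-\delta\eta L)\norm{v}^2+\delta\eta\norm{v}\norm{\nabla f}-\delta\eta^2\norm{\nabla f}^2 ,
\]
which is negative definite once $\delta(1+4\eta L)<4$. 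With $W$ in place of $V_\delta$, your sublevel-set argument (smooth boundary, strict inwardness) goes through. Your (H1)--(H4) follow the paper's proof in substance and are more careful than it: you give the regular-value argument for $\partial\Omega_c$, and you observe that the transverse eigenvectors are $(\xi,-\xi)$ because $x$ is also fast in \eqref{eq:fastslow}.

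Two further corrections. First, strict inwardness cannot hold for every $\eps$ in an interval with a fixed $K$. Any compact $K\supset\Mc_0^c$ has $\partial K\cap\{v=0\}\neq\emptyset$, and at $\eps=0$ those points are equilibria. You get (H5) only for each fixed $\eps>0$, with $K$ depending on $\eps$ and an inward margin near $\{v=0\}$ that vanishes as $\eps\to0$. This is harmless, because Fenichel's singular-perturbation theorem for a compact normally hyperbolic manifold of equilibria with boundary gives a locally invariant $\Mc_\eps$ without any inward-pointing hypothesis. Second, in (H3), $s=\tau/\eps$ with $\tau=\sqrt{\mu}\,t$ gives $s=\sqrt{L}\,t$. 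So the transverse eigenvalue $-1$ in $s$-time is $-1/\eps$ in $\tau$-time, but $-\sqrt{L}$, not $-\sqrt{\mu}$, in physical time. The statement's parenthetical is internally inconsistent, since $\tau$-time and $\sqrt{\mu}\,t$-time are the same clock. You should flag this rather than report a match.
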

 
\begin{proof}
(H1) $\Mc_0^c = \Omega_c \times \{0\}$; $\Omega_c$ is compact by strong convexity and smooth
since $f \in C^2$. (H2) The layer problem at $\eps = 0$ is $dv/d\tau = -v$, for which $v = 0$ is
a globally attracting equilibrium. (H3) The linearization of $\eps\, dv/d\tau = -v$ at $v = 0$
is the identity $-I/\eps$, with all eigenvalues $-1/\eps < 0$. Tangential motion on $\Mc_0^c$ is
governed by $\eps\, dx/d\tau = v|_{v=0} = 0$, giving zero tangential rate at leading order; the
$O(\eps)$ correction from $-(\eps/\sqrt{\mu})\nabla f(x)$ is strictly smaller. (H4) Clear from
$f \in C^2$. (H5) On $\partial K$, the Lyapunov function $V = f(x) - f^* + \tfrac{1}{2}\norm{v}^2$
satisfies $\dot V = -\norm{v}^2 \le 0$ with equality only at $v = 0$; choosing $K$ as a strict
sub-level set of $V$ gives strict inward pointing.
\end{proof}
 
\begin{theorem}[Fenichel persistence for NAG]\label{thm:fenichel}
Let $f \in C^{r+1}(\R^n)$ ($r \ge 1$) satisfy $\mu I \preceq \nabla^2 f \preceq L I$. Fix
$c > f(x^*)$ and let $K$ be the associated compact region. Then for all $\eps = \sqrt{\mu/L}$
sufficiently small (equivalently, $\kappa = L/\mu$ sufficiently large), there exists a $C^r$
slow manifold $\Mc_\eps^c$ in $K$ satisfying:
\begin{enumerate}[label=\textup{(\roman*)},leftmargin=*]
    \item $\Mc_\eps^c$ is $O(\eps)$-close to $\Mc_0^c$ in $C^r$;
    \item $\Mc_\eps^c$ is locally invariant under \eqref{eq:fastslow};
    \item trajectories in $K$ off $\Mc_\eps^c$ are attracted at rate
    $e^{-\gamma_\perp t} = e^{-\sqrt{\mu}\, t}$;
    \item the reduced flow on $\Mc_\eps^c$ converges to $x^*$ at rate $O(e^{-\gamma_F t})$ with
    Fenichel spectral gap $\gamma_F = \sqrt{\mu}(1 - \eps) = \sqrt{\mu}(1 - \sqrt{\mu/L})$.
\end{enumerate}
\end{theorem}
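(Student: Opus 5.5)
The plan is to deduce the statement from Fenichel's first and third theorems \cite{fenichel1979geometric}, using Proposition~\ref{prop:hypotheses} for the hypotheses, and to obtain the rate claims (iii)--(iv) from a separate Lyapunov estimate. The reason for the split is that Fenichel's theorem yields existence, closeness and smoothness of $\Mc_\eps^c$, but on its own it gives no quantitative convergence rate to $x^*$.

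\textbf{Step 1: a genuine slow time.} I would first recast \eqref{eq:fastslow} in standard form. As written, both equations carry $\eps$ on the left, so at $\eps=0$ the variable $x$ is also frozen and the critical manifold is not an $n$-dimensional manifold of equilibria in the usual sense. I would pass to the slow time $s=\eps\tau$ and rescale the fiber as $v=\eps\tilde v$, or equivalently read off from \eqref{eq:fenichel-form} the pair
\[
\frac{dx}{ds}=\tilde v,\qquad \eps\,\frac{d\tilde v}{ds}=-\tilde v-\mu^{-1/2}\nabla f(x).
\]
Written this way, $x$ is slow and $\tilde v$ is fast. The critical manifold is the graph $\tilde v=-\mu^{-1/2}\nabla f(x)$ over $\Omega_c$. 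It is compact by (H1), and it is normally hyperbolic with fast Jacobian $-I$ by (H3), uniformly on $K$. Its smoothness is $C^{r}$, since $f\in C^{r+1}$ (H4).

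\textbf{Step 2: persistence.} With (H5) supplying overflowing or inflowing invariance on $\partial K$, Fenichel's theorem gives a locally invariant $C^r$ manifold $\Mc_\eps^c$ that is $O(\eps)$-$C^r$-close to the critical graph. This gives (i) and (ii). In the original variables this is $O(\eps)$-closeness to $\Mc_0^c$, because $v=\eps\tilde v$. Fenichel's third theorem (the stable foliation) then gives exponential attraction of off-manifold trajectories. The rate is the fast contraction rate $1/\eps$ in $s$-time, minus $O(1)$ tangential terms, and converting to physical time should produce the factor $e^{-\sqrt{\mu}\,t}$ claimed in (iii). I would check the bookkeeping $t=s/(\eps\sqrt{\mu})$ carefully, since an error here changes the stated rate.

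\textbf{Step 3: rate on the manifold.} For (iv) I would not rely on Fenichel. Instead I would use the strongly convex Lyapunov function
\[
\mathcal{E}=f(x)-f^*+\tfrac12\norm{v+\sqrt{\mu}(x-x^*)}^2
\]
along the full flow \eqref{eq:NestODE}. The standard computation uses strong convexity $\langle\nabla f(x),x-x^*\rangle\ge f-f^*+\tfrac{\mu}{2}\norm{x-x^*}^2$ and should give $\dot{\mathcal E}\le-\sqrt{\mu}\,\mathcal E$. Since $\Mc_\eps^c$ is invariant, the same bound holds for the reduced flow. I would then account for the $(1-\eps)$ factor in $\gamma_F$ as the $O(\eps)$ deformation of the manifold, which perturbs the decay exponent by $O(\sqrt{\mu}\,\eps)$. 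The per-step factor $1-\sqrt{\mu/L}$ would follow by sampling at time step $1/\sqrt{L}$.

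\textbf{Main obstacle.} I expect Step~1, together with the phrase ``$\eps$ sufficiently small'', to be the hardest part. Here $\eps=\sqrt{\mu/L}$ is not an independent parameter, and both $K$ and the fiber rescaling depend on $\mu$ and $L$. The uniformity constants in Fenichel's theorem therefore have to be tracked as $\kappa\to\infty$ with $\mu$ held fixed. I would first try to normalize $\mu=1$, so that only $L$ varies, and bound $\nabla f$ and $\nabla^2 f$ on $K$ uniformly in $L$ via $\Omega_c$. If that uniformity fails, I would fall back on proving (iii)--(iv) directly from the Lyapunov argument and present Fenichel only as providing the geometric picture in (i)--(ii).
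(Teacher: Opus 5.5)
\textbf{Steps 1--2.} These are a more careful version of the paper's own proof. The paper simply invokes Fenichel on \eqref{eq:fastslow} via (H1)--(H5) and obtains $\gamma_F$ as $\gamma_\perp-O(\eps)$ converted to physical time. One slip: your displayed standard form comes from $v=\eps\tilde v$ with $s=\tau$, not $s=\eps\tau$. With the correct time, the fast rate $1/\eps$ in $\tau$-time is $\sqrt\mu/\eps=\sqrt L$ in physical time. So (iii) holds only as a weak consequence, and the conversion to exactly $\sqrt\mu$ comes from the wrong time scale.

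\textbf{The gap is in Step~3.} You prove $\dot{\mathcal E}\le-\sqrt\mu\,\mathcal E$ along \eqref{eq:NestODE}, but $\Mc_\eps^c$ is invariant under \eqref{eq:fastslow}, and these are different flows. Undoing $\tau=\sqrt\mu\,t$ in \eqref{eq:fastslow} gives
\[
\dot x=\sqrt L\,v,\qquad \dot v=-\sqrt L\,v-\nabla f(x),\qquad\text{i.e.}\qquad \ddot x+\sqrt L\,\dot x+\sqrt L\,\nabla f(x)=0 .
\]
The same rescaling of \eqref{eq:NestODE} gives $dx/d\tau=v/\sqrt\mu$ and $dv/d\tau=-2v-\mu^{-1/2}\nabla f(x)$, with no $\eps$ in front of any derivative.

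Your own Step~1 already exposes the consequence. The reduced flow on the critical graph $\tilde v=-\mu^{-1/2}\nabla f(x)$ is $dx/d\tau=-\mu^{-1/2}\nabla f(x)$, i.e.\ unaccelerated gradient flow $\dot x=-\nabla f(x)$. For $f=\tfrac{\mu}{2}\norm{x}^2$ the slow eigenvalue is $-\mu-\mu^2/\sqrt L+\cdots$. So trajectories on $\Mc_\eps^c$ decay at rate about $\mu$, which for small $\eps$ is below $\sqrt\mu(1-\eps)$ whenever $\mu<1$. Hence ``the same bound holds for the reduced flow'' is false, and (iv) cannot be obtained for the manifold you construct.

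\textbf{Moving the construction to \eqref{eq:NestODE} does not help.} For quadratic $f$ its linearization has eigenvalues $-\sqrt\mu\pm i\sqrt{\lambda_i-\mu}$, all with the same real part (the paper's own spectral resonance). Normal rates therefore never dominate tangential ones, and there is no normally hyperbolic slow manifold. In the normalized form \eqref{eq:normalized} the damping is $2\eps$, so the system is weakly damped, not singularly perturbed.

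\textbf{The $(1-\eps)$ step is not an argument either.} Absorbing the factor as an ``$O(\eps)$ deformation'' does not work, because your Lyapunov bound holds on all of phase space and never refers to any manifold. It gives $e^{-\sqrt\mu t}$ for $\mathcal E$, hence only $e^{-\sqrt\mu t/2}$ for $\norm{x-x^*}$, which is weaker than $e^{-\gamma_F t}$.

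\textbf{What your plan actually delivers.} Including the fallback, you get two separate facts: (i)--(iii) for \eqref{eq:fastslow}, and accelerated Lyapunov decay for \eqref{eq:NestODE}. The bridge between them is the whole content of (iv), and it is missing; the paper's proof also only asserts it. The uniformity-in-$L$ issue you flag is not the real obstacle. In your standard form $\eps$ multiplies only $d\tilde v/d\tau$, so for fixed $f$ and $\mu$ it is an ordinary Fenichel family.
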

 
\begin{proof}
By Proposition~\ref{prop:hypotheses}, all five hypotheses (H1)--(H5) hold. The conclusion is
Fenichel's theorem \cite{fenichel1979geometric} applied to \eqref{eq:fastslow} restricted to $K$. The Fenichel spectral gap $\gamma_F$ equals $\gamma_\perp - O(\eps) = 1/\eps \cdot \eps - O(\eps^2) = 1 - O(\eps)$ in $\tau$-time; converting back to physical time $t = \tau/\sqrt{\mu}$ and using $\eps = \sqrt{\mu/L}$: $\gamma_F = \sqrt{\mu}(1 - \sqrt{\mu/L})$.
\end{proof}
 
\begin{corollary}[Nesterov rate from Fenichel gap]\label{cor:nestrate}
In normalized time $\tau = t\sqrt{L}$, the per-unit-step contraction factor on $\Mc_\eps^c$ is
$e^{-\gamma_F/\sqrt{L}} = e^{-\sqrt{\mu/L}(1 - \sqrt{\mu/L})/\sqrt{L}} \approx 1 - \sqrt{\mu/L}$
for large $\kappa$, matching the optimal Nesterov rate for general $f \in C^2$.
\end{corollary}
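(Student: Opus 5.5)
The plan is to read the corollary off Theorem~\ref{thm:fenichel}(iv) by a change of time scale and a first-order expansion in $\eps=\sqrt{\mu/L}$. No new geometric input is needed. Theorem~\ref{thm:fenichel} already supplies three facts: the slow manifold $\Mc_\eps^c\subset K$ exists, the reduced flow on it converges to $x^*$ at rate $O(e^{-\gamma_F t})$ in physical time, and $\gamma_F=\sqrt{\mu}(1-\eps)$. The remaining work is to translate "rate in physical time $t$" into "contraction per unit step in the normalized time $\tau=t\sqrt{L}$" of Remark~\eqref{eq:normalized}.

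First, I fix the meaning of one step. Under $\tau=t\sqrt{L}$, one unit of $\tau$ corresponds to a physical-time increment $\Delta t=1/\sqrt{L}$. Matching this with the discretization convention of Section~\ref{sec:SPD}, one iteration is one unit of $\tau$, or equivalently one gradient step of size $1/L$ in the normalized ODE.

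Second, I evaluate the decay of the reduced flow over one such increment. By Theorem~\ref{thm:fenichel}(iv), over $\Delta t=1/\sqrt{L}$ the distance to $x^*$ on $\Mc_\eps^c$ is multiplied by at most
\[
e^{-\gamma_F\Delta t}=e^{-\gamma_F/\sqrt{L}}=e^{-\sqrt{\mu/L}\,(1-\sqrt{\mu/L})}=e^{-\eps(1-\eps)},
\]
up to a uniform constant coming from the $O(\cdot)$ in (iv). The exponent is therefore $\eps(1-\eps)$. The physical-time rate $\sqrt{\mu}(1-\eps)$ divided by $\sqrt{L}$ gives exactly this dimensionless quantity, so no further factor of $1/\sqrt{L}$ should appear; I will state the intermediate expression in this form.

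Third, I Taylor-expand:
\[
e^{-\eps(1-\eps)}=1-\eps+\eps^2+\tfrac12\eps^2+O(\eps^3)=1-\sqrt{\mu/L}+O(\mu/L).
\]
This agrees with $1-\sqrt{\mu/L}$ to leading order as $\kappa\to\infty$. That is the sense of "$\approx$" in the corollary, and it is consistent with the regime "$\eps$ sufficiently small" already required by Theorem~\ref{thm:fenichel}.

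Finally, I must justify that this rate governs arbitrary trajectories in $K$, not only those that start on $\Mc_\eps^c$.
\begin{itemize}
\item By Theorem~\ref{thm:fenichel}(iii), off-manifold components decay like $e^{-\sqrt{\mu}t}$.
\item Since $\sqrt{\mu}\ge\gamma_F$, this transverse transient is never the bottleneck.
\item Using the invariant foliation (asymptotic phase) picture, each trajectory shadows a trajectory on $\Mc_\eps^c$.
\item The overall rate is therefore $\min(\sqrt{\mu},\gamma_F)=\gamma_F$.
\end{itemize}

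I expect the main obstacle to be this last step together with the uniformity of constants. Fenichel's theorem is local, on a compact $K$, and asymptotic in $\eps$. I therefore need the $O(e^{-\gamma_F t})$ constant to be independent of the step index, which I would obtain from the compactness of $K$ and its forward invariance under the Lyapunov function $V$ used in (H5). I also need the $O(\eps^2)$ discrepancy to be presented honestly as a higher-order correction, not as exact equality with $1-\sqrt{\mu/L}$.
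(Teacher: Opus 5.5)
Your proposal follows the paper's implicit argument. You read off $\gamma_F=\sqrt{\mu}(1-\eps)$ from Theorem~\ref{thm:fenichel}(iv), identify one unit of $\tau=t\sqrt{L}$ with $\Delta t=1/\sqrt{L}$, and expand $e^{-\eps(1-\eps)}=1-\eps+O(\eps^{2})$. You are also right that the middle expression in the statement carries a spurious extra factor $1/\sqrt{L}$: taken literally it gives $1-\sqrt{\mu}/L$, so the correct intermediate form is $e^{-\sqrt{\mu/L}\,(1-\sqrt{\mu/L})}$.

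Your final step about off-manifold trajectories is not needed, since the corollary only asserts the contraction on $\Mc_\eps^c$.
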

 
\begin{remark}[Globalization]\label{rem:global}
Theorem~\ref{thm:fenichel} is local to $K$. Global acceleration follows by increasing $c$
arbitrarily: for any initial condition $(x_0, v_0)$, choose
$c > f(x_0) + \tfrac{1}{2}\norm{v_0}^2$ so that $(x_0, v_0) \in K$.
\end{remark}

\section{Triple-momentum dynamics}\label{App:Triple}
We provide a rigorous derivation of the cubic Riccati equation arising in continuous-time
triple-momentum dynamics for strongly convex quadratics, prove a dynamical factorization
theorem for its flow, and show that the convergence-critical slow manifold coincides with
that of Nesterov acceleration. Consequently, triple momentum cannot improve the
$\mathcal{O}(\kappa^{-1/2})$ convergence rate.

Consider minimizing $f(x) = \frac{1}{2}x^T Q x$ with $0 < \mu I \preceq Q \preceq LI$.
The triple-momentum ODE is
\begin{equation} \nonumber
    \dddot{x} + \alpha_2 \ddot{x} + \alpha_1 \dot{x} + \alpha_0 Q x = 0.
    \label{eq:triple-momentum}
\end{equation}
Introduce $v = \dot{x}$, $a = \ddot{x}$. Then
\begin{align} \nonumber
    \dot{x} &= v, \\ \nonumber
    \dot{v} &= a, \\ \nonumber
    \dot{a} &= -\alpha_2 a - \alpha_1 v - \alpha_0 Q x. 
\end{align}

We seek an invariant manifold of the form
\begin{equation}
    (v,\, a) = \bigl(P_1(t)\,x,\; P_2(t)\,x\bigr), \nonumber
    \label{eq:ansatz}
\end{equation}
with symmetric $P_1, P_2$.

\begin{lemma}[Invariance Equations]
The invariance conditions are
\begin{align}
    \dot{P}_1 &= P_2 - P_1^2, \label{eq:inv1}\\
    \dot{P}_2 &= -\alpha_2 P_2 - \alpha_1 P_1 - \alpha_0 Q - P_2 P_1. \label{eq:inv2}
\end{align}
\end{lemma}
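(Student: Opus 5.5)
The plan is to repeat the tilt-consistency argument of Appendix~\ref{App:Tilt}, now with two fiber coordinates instead of one. Each fiber coordinate of the third-order system gets its own adapted deviation:
\[
  w_1 = v - P_1(t)\,x, \qquad w_2 = a - P_2(t)\,x .
\]
The graph $\{(v,a) = (P_1x, P_2x)\}$ is exactly $\{w_1 = 0,\ w_2 = 0\}$. Invariance means that $w_1 = w_2 = 0$ implies $\dot w_1 = \dot w_2 = 0$ for every base point $x$. So I will compute $\dot w_1$ and $\dot w_2$ along the flow, restrict to the graph, and require the coefficient of $x$ to vanish identically. This is the Route~A $=$ Route~B comparison, carried out once for each fiber.

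\emph{First fiber.} Differentiate $w_1$ and use $\dot x = v$, $\dot v = a$:
\[
  \dot w_1 = a - \dot P_1 x - P_1 v .
\]
On the graph, $a = P_2x$ and $v = P_1x$, so this becomes $(P_2 - P_1^2 - \dot P_1)\,x$. Requiring this to vanish for all $x$ gives \eqref{eq:inv1}. Off the graph the same computation yields
\[
  \dot w_1 = w_2 - P_1 w_1 + (P_2 - P_1^2 - \dot P_1)\,x,
\]
and this is useful later for the transverse stability analysis.

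\emph{Second fiber.} Differentiate $w_2$ and substitute the third equation of the system:
\[
  \dot w_2 = -\alpha_2 a - \alpha_1 v - \alpha_0 Q x - \dot P_2 x - P_2 v .
\]
On the graph this equals $\bigl(-\alpha_2 P_2 - \alpha_1 P_1 - \alpha_0 Q - \dot P_2 - P_2P_1\bigr)x$. Requiring it to vanish gives \eqref{eq:inv2}. Since $x$ is arbitrary in $\mathbb{R}^n$, vanishing of the vector for all $x$ is equivalent to vanishing of the matrix coefficient. This gives both necessity and sufficiency: if \eqref{eq:inv1}--\eqref{eq:inv2} hold, then $(w_1,w_2)$ satisfies a linear homogeneous ODE, so solutions starting at $w = 0$ stay there by uniqueness.

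The one delicate point is the ordering of the matrix products. The quadratic terms arise as $P_1\dot x = P_1P_1x$ and $P_2\dot x = P_2P_1x$. The derivative of the graph map always multiplies the base velocity from the left, and the base velocity is $P_1x$. This is why the product appears as $P_2P_1$ and not $P_1P_2$. The symmetry assumption on $P_1, P_2$ plays no role in the derivation. It only matters for consistency: the flow of \eqref{eq:inv1}--\eqref{eq:inv2} need not preserve symmetry unless $P_1$, $P_2$ and $Q$ commute, as they do in the eigenbasis of $Q$ for the quadratic case. I will note this rather than rely on symmetry.
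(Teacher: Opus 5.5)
Your proposal is correct and follows essentially the paper's approach. The paper states this lemma without a written proof, but the intended argument is the same tilt-consistency computation used in Appendix~\ref{App:Tilt} and in the Riccati invariance proposition: differentiate $v = P_1x$ and $a = P_2x$ along the flow and match the coefficients of $x$. You carry this out correctly, including the $P_2P_1$ ordering, and your off-graph expressions for $\dot w_1, \dot w_2$ also supply the sufficiency direction that the paper leaves implicit.
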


\begin{theorem}[Cubic Riccati]
Eliminating $P_2$ yields
\begin{equation}
    \ddot{P}_1 + (3P_1 + \alpha_2)\dot{P}_1 + P_1^3 + \alpha_2 P_1^2 + \alpha_1 P_1 + \alpha_0 Q = 0.
    \label{eq:cubic-riccati}
\end{equation}
\end{theorem}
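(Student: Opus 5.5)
The plan is to eliminate $P_2$ from the two invariance equations \eqref{eq:inv1}--\eqref{eq:inv2} by differentiating, exactly as one turns a first-order system into a second-order scalar ODE. From \eqref{eq:inv1} we can solve for the second slope explicitly:
\[
  P_2 = \dot P_1 + P_1^2 .
\]
Hence $P_2$ is completely determined by $P_1$ and its derivative. Differentiating this identity in $t$ gives
\[
  \dot P_2 = \ddot P_1 + \dot P_1 P_1 + P_1 \dot P_1 .
\]
We keep the order of the factors in the derivative of $P_1^2$, since $P_1$ and $\dot P_1$ are matrices.

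Next we substitute both expressions into \eqref{eq:inv2}. The term $-\alpha_2 P_2$ becomes $-\alpha_2\dot P_1-\alpha_2P_1^2$. The term $-P_2P_1$ becomes $-\dot P_1P_1-P_1^3$. Collecting everything on one side yields
\[
  \ddot P_1 + P_1\dot P_1 + 2\dot P_1 P_1 + \alpha_2\dot P_1 + P_1^3 + \alpha_2 P_1^2 + \alpha_1 P_1 + \alpha_0 Q = 0 .
\]
This is \eqref{eq:cubic-riccati} except that the damping-type term reads $P_1\dot P_1+2\dot P_1P_1$ rather than $3P_1\dot P_1$. In the scalar case, or in any single eigendirection, the two coincide and the theorem follows at once.

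The main obstacle is justifying the commutation $[P_1,\dot P_1]=0$ in the matrix case. Symmetry of $P_1$ and $P_2$, which the ansatz imposes, does not suffice on its own. My plan is to restrict to the natural class of solutions that are functions of $Q$, i.e.\ that commute with $Q$. This is the setting of the rest of the paper, where everything is diagonalized in the eigenbasis of the Hessian.

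Concretely, let $Q=U\Lambda U^\top$. Suppose the initial data $P_1(0)$, $P_2(0)$ are diagonal in the basis $U$. Then the right-hand sides of \eqref{eq:inv1}--\eqref{eq:inv2} map the commutative algebra of $U$-diagonal matrices into itself. So the diagonal ansatz gives a solution of the system, and by uniqueness for this locally Lipschitz polynomial ODE it is the solution. Along it, $P_1$, $\dot P_1$ and $P_2$ all lie in a commutative algebra. Thus $P_1\dot P_1+2\dot P_1P_1=3P_1\dot P_1$, and \eqref{eq:cubic-riccati} holds. Equivalently, the computation decouples into $n$ scalar equations
\[
  \ddot p + (3p+\alpha_2)\dot p + p^3+\alpha_2p^2+\alpha_1p+\alpha_0\sigma_i=0 ,
\]
one for each eigenvalue $\sigma_i$ of $Q$.

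I will note explicitly that, outside this commuting class, the correct general statement is the unsymmetrized form displayed above. This is what I would record as the precise hypothesis under which the theorem's form is valid.
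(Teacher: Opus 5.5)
Your proposal is correct and follows the only route the paper indicates; the paper gives no argument beyond ``eliminating $P_2$''. Your unsymmetrized equation is right, and it differs from the stated one by $2[\dot P_1,P_1]=2[P_2,P_1]$. So the paper is tacitly working in the eigen-decoupled setting you make explicit. Indeed, its symmetric ansatz is not even preserved by the second invariance equation unless $[P_1,P_2]=0$, since otherwise $P_2P_1$ is not symmetric.

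One precision fix: record the hypothesis as ``$P_1(0)$, $P_2(0)$ and $Q$ commute pairwise'' (for symmetric data, simultaneously diagonalizable in one orthonormal basis), which is what your $U$-diagonal argument actually uses. The weaker condition ``commute with $Q$'' does not suffice. For $Q=\mu I$ it is vacuous, yet the stated formula already fails at $t=0$ whenever $[P_1(0),P_2(0)]\neq 0$.
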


Let $\omega = \sqrt{\mu L}$ and choose $\alpha_0 = \omega^2$, $\alpha_1 = \omega(\omega + 2)$,
$\alpha_2 = 2\omega + 1$.

\begin{theorem}[Dynamic Factorization]
There exists a smooth change of variables $(P_1, P_2) \mapsto (R, S)$ such that:
\begin{enumerate}
    \item $R$ satisfies Nesterov's Riccati equation
    \begin{equation}
        \dot{R} + R^2 + 2\omega R + \omega^2 Q = 0.
        \label{eq:nesterov-riccati}
    \end{equation}
    \item $S$ evolves according to a linearly stable (fast) subsystem decoupled from $R$.
    \item The slow manifold is given by $S = 0$, on which $P_1 = R$.
\end{enumerate}
\end{theorem}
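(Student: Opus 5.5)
The plan is to factor the third-order linear operator underlying the triple-momentum ODE into a first-order fast factor and Nesterov's second-order factor. We then read off $(R,S)$ as the graph coordinates adapted to that factorization, exactly as $w=v-Px$ was adapted to $\mathcal{M}_P$ in Definition~\ref{def:w}.

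\textbf{Step 1 (the factorization).} At the level of characteristic polynomials one checks
\[
(s+1)(s^2+2\omega s+\omega^2 q)=s^3+(2\omega+1)s^2+(\omega^2 q+2\omega)s+\omega^2 q .
\]
This motivates the choice
\[
R:=P_1,\qquad S:=P_2+2\omega P_1+\omega^2 Q .
\]
Here $S=0$ is the condition $a=-2\omega v-\omega^2 Qx$. That is, the acceleration is slaved to Nesterov's second-order dynamics. The map $(P_1,P_2)\mapsto(R,S)$ is affine and invertible, hence smooth.

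\textbf{Step 2 (the $R$-equation).} Substitute $P_2=S-2\omega R-\omega^2Q$ into \eqref{eq:inv1}. This gives
\[
\dot R=-R^2-2\omega R-\omega^2Q+S .
\]
On $\{S=0\}$ this is exactly \eqref{eq:nesterov-riccati}, which proves item 1 on the slow manifold and item 3.

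\textbf{Step 3 (the $S$-equation).} Differentiate $S$ and insert \eqref{eq:inv1}--\eqref{eq:inv2} with $\alpha_0=\omega^2$ and $\alpha_2=2\omega+1$. Collecting terms, the constant terms in $Q$ cancel identically, since $(2\omega+1)\omega^2-\omega^2-2\omega^3=0$, and the $P_1^2$ terms also cancel. What is left is
\[
\dot S=-S(I+R)+\bigl(\omega^2Q+2\omega I-\alpha_1\bigr)R .
\]
With $\alpha_1=\omega(\omega+2)$ the residual is $\omega^2(Q-I)R$.

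\textbf{Main obstacle.} This residual is the main obstacle: as stated, decoupling holds exactly only when $Q=I$. I would handle it in one of two ways, as Step 4.
\begin{enumerate}
\item Work modewise in the eigenbasis of $Q$ and interpret the coefficient $\alpha_1$ as the matrix $\omega^2Q+2\omega I$. This is the natural reading, because $\alpha_0$ already multiplies $Q$. The residual then vanishes identically, and $\{S=0\}$ is exactly invariant.
\item Keep scalar $\alpha_1$ and treat $\omega^2(Q-I)R$ as a perturbation. In this case Fenichel's theorem (Theorem~\ref{thm:fenichel}) yields a nearby invariant graph $S=h(R)$ rather than $S=0$.
\end{enumerate}

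\textbf{Step 5 (stability of the fast subsystem).} Linearize at the stable root $R_-$ of the algebraic version of \eqref{eq:nesterov-riccati}. Its real part is $-\omega$ modewise, by the same computation as in Lemma~\ref{lem:resonance}, so $\dot S=-S(I+R_-)$ contracts at rate $1-\omega$. This requires $\omega<1$ in the normalized units, which is a second point to check carefully.

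\textbf{Conclusion for the rate.} The rate ceiling then follows because the $R$-dynamics on $\{S=0\}$ are precisely the Nesterov tilt dynamics. Their contraction is bounded by the modewise real part $-\omega$ of the Riccati roots, as in Lemma~\ref{lem:resonance}. The extra $S$-mode, being faster, cannot improve the slowest rate.
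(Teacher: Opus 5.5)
Your change of variables is different from the paper's, and yours is the right one. The paper sets $R=P_1$, $S=P_2-P_1^2$. Then $\dot R=S$ identically, so on $\{S=0\}$ one gets $\dot R=0$, not \eqref{eq:nesterov-riccati}. Moreover, for commuting $P_1,P_2$, $\dot S=-(\alpha_2I+3R)S-(R^3+\alpha_2R^2+\alpha_1R+\alpha_0Q)$. The part the paper dismisses as ``higher order'' is $S$-independent and $O(1)$, so $\{S=0\}$ is not even invariant. Your choice $S=P_2+2\omega P_1+\omega^2Q$ is in fact forced. Since $\dot P_1=P_2-P_1^2$ holds identically by \eqref{eq:inv1}, items 1 and 3 together require $P_2=-2\omega P_1-\omega^2Q$ at every point of the slow manifold. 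Your formula $\dot S=-S(I+R)+\bigl(\omega^2Q+(2\omega-\alpha_1)I\bigr)R$ is correct and needs no commutativity.

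The gap is in Step 4, and it lies in the statement rather than in your algebra. By the forcing argument, the only candidate manifold is $\{S=0\}$. Its invariance requires $\omega^2(Q-I)P_1=0$ for all $P_1$ in an open set, i.e.\ $Q=I$. So with the stated scalar $\alpha_1=\omega(\omega+2)$, \emph{no} change of variables achieves items 1--3 when $Q\neq I$. Concretely, for $n=1$ the triple-momentum polynomial equals $(r+1)(r^2+2\omega r+\omega^2q)+\omega^2(1-q)r$. This does not vanish at Nesterov's (nonzero) roots unless $q=1$, so no invariant plane in $(x,\dot x,\ddot x)$ carries Nesterov's dynamics.

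You should therefore present this as a counterexample and prove the corrected theorem of your option 1. That theorem has matrix coefficient $\alpha_1=\omega^2Q+2\omega I$, i.e.\ the operator factorization $(\tfrac{d}{dt}+1)(\tfrac{d^2}{dt^2}+2\omega\tfrac{d}{dt}+\omega^2Q)$. Option 2 cannot rescue the original claim, for three reasons:
\begin{itemize}
\item $\omega^2(Q-I)R$ is not multiplied by any small parameter.
\item The paper's Fenichel theorem concerns the NAG phase-space system, not this Riccati flow.
\item A persisting graph $S=h(R)$ would carry $\dot R=-R^2-2\omega R-\omega^2Q+h(R)$, which violates items 1 and 3 anyway.
\end{itemize}

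For item 2 of the corrected theorem, avoid linearizing at $R_-$ and work in phase space instead. The fiber coordinate $s:=\ddot x+2\omega\dot x+\omega^2Qx=Sx$ satisfies $\dot s=-s$ exactly, so it is genuinely decoupled and unconditionally stable. With the original scalar $\alpha_1$ it satisfies $\dot s+s=\omega^2(Q-I)\dot x$, which is your residual. By contrast, $\dot S=-S(I+R)$ still has an $R$-dependent coefficient, so it is not decoupled from $R$. Also, for modes with $q>1$ the root $R_-$ is complex: the real Riccati flow $\dot r=-(r+\omega)^2-\omega^2(q-1)$ has no fixed point and blows up in finite time. A spectral-gap condition of the type $\omega<1$ is then needed only to make the rate-$1$ mode dominate the modes $-\omega\pm\omega\sqrt{1-q}$ on $\{s=0\}$, i.e.\ to justify calling that manifold slow.
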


\begin{proof}
Define $R = P_1$ and $S = P_2 - P_1^2$. Using \eqref{eq:inv1}--\eqref{eq:inv2}, one obtains
$\dot{R} = S$ and
\[
    \dot{S} = -(\alpha_2 I + 3R)\,S + \text{higher order terms}.
\]
Restricting to $S = 0$ yields \eqref{eq:nesterov-riccati}. Linearization shows $S$ decays
exponentially.
\end{proof}

\begin{corollary} \label{TM}
On the slow manifold, triple momentum exhibits the same convergence rate as Nesterov:
\begin{equation}
    \|x(t) - x^*\| \leq C\, e^{-\omega t}.
    \label{eq:convergence}
\end{equation}
\end{corollary}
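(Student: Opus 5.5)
The plan is to reduce the triple-momentum flow on the slow manifold $\{S=0\}$ to a second-order linear flow governed by Nesterov's Riccati equation \eqref{eq:nesterov-riccati}. We then read off the decay rate mode by mode, exactly as in Lemma~\ref{lem:resonance}.

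\emph{Step 1 (restriction to $S=0$).} By the Dynamic Factorization theorem, the manifold $\{S=0\}$ is invariant. On it, $P_2=P_1^2$ and $P_1=R$, so the invariant graph reads $v=Rx$, $a=R^2x$. The base dynamics is therefore $\dot x = R(t)\,x$, where $R$ solves $\dot R + R^2 + 2\omega R + \omega^2 Q = 0$. Since $Q$ is constant and symmetric, I will diagonalize it and work in its eigenbasis. For each eigenvalue $\sigma_i\in[\mu,L]$ this gives a scalar Riccati equation $\dot r_i + r_i^2 + 2\omega r_i + \omega^2\sigma_i = 0$. Equivalently, with $r_i=\dot z_i/z_i$, each mode obeys the linear damped oscillator $\ddot z_i + 2\omega \dot z_i + \omega^2\sigma_i z_i = 0$. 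This is the Nesterov ODE \eqref{eq:NestODE} in that mode, with damping $2\omega$.

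\emph{Step 2 (mode-wise rate).} I will bound each mode through the linear second-order ODE rather than the Riccati flow, so that finite-time blow-up of $r_i$ at zeros of $z_i$ causes no trouble. The characteristic roots are $s_\pm = -\omega \pm \omega\sqrt{1-\sigma_i}$. When $\sigma_i\ge 1$, both roots have real part exactly $-\omega$. This is the spectral-resonance statement of Appendix~\ref{App:Tilt} and Lemma~\ref{lem:resonance}, transplanted with $\sqrt{\mu}\mapsto\omega$. It gives $|z_i(t)|\le C_i(1+t)e^{-\omega t}$, where the factor $(1+t)$ is needed only at the double root $\sigma_i=1$. To reach the stated bound I will either absorb this factor into the constant by working with a slightly smaller exponent, or use the Lyapunov function $V=\tfrac12\dot z_i^2+\omega\,\dot z_i z_i+\tfrac{\omega^2}{2}(\sigma_i+1)z_i^2$, which is adapted to critical damping. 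Summing over the finitely many modes and using $x^*=0$ then yields $\|x(t)-x^*\|\le C e^{-\omega t}$. Here $C$ depends on the initial data on $\{S=0\}$, i.e.\ on $x_0$ and $R(0)$.

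\emph{Step 3 (main obstacle).} The delicate point is that the rate $-\omega$ in every mode requires $\sigma_i\ge 1$, i.e.\ $Q\succeq I$ in the chosen normalization. For $\sigma_i<1$ the slow root $-\omega+\omega\sqrt{1-\sigma_i}$ is strictly slower than $-\omega$. Relatedly, the exact factorization $(s+1)(s^2+2\omega s+\omega^2\sigma)$ of the triple-momentum characteristic polynomial reproduces $\alpha_1=\omega(\omega+2)$ only when $\sigma=1$; otherwise the ``higher order terms'' in $\dot S$ couple $S$ back to $R$. My first attempt will be to fix the normalization $Q\mapsto Q/\mu$, so that $\sigma_i\ge 1$; this is the same rescaling used to pass to \eqref{eq:normalized}. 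I will then treat the residual coupling $O(\sigma_i-1)$ as a perturbation of the decoupled $S$-subsystem, which contracts at rate $\alpha_2=2\omega+1>\omega$. With a Grönwall argument I will show that this perturbation only changes the constant $C$ and not the exponent. If the perturbation cannot be made small uniformly over $[\mu,L]$, I will fall back on a weaker conclusion: the triple-momentum rate on the slow manifold is bounded by the Nesterov rate and not improved beyond it. This is what is needed for the rate ceiling.
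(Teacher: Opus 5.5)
Your Steps 1--2 follow the paper's route. The paper gives the corollary no separate proof: it is read off from the Dynamic Factorization theorem (restrict to $S=0$, obtain Nesterov's Riccati equation, hence Nesterov's rate), and your mode-wise reading of the rate fills in what the paper leaves implicit.

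Your Step 3 diagnosis is correct, and it is fatal to that route, not a technicality. In the eigenbasis of $Q$ one gets exactly $\dot S=-(\alpha_2 I+3R)S-\bigl(R^3+\alpha_2R^2+\alpha_1R+\omega^2Q\bigr)$. The ``higher order terms'' are therefore an $S$-independent forcing, and $\{S=0\}$ is invariant only if $R$ is a \emph{constant} root of the cubic. There $\dot R=S=0$, so treating $R(t)$ as a general DRE solution with free $R(0)$, as you do in Steps 1--2, is inconsistent. Moreover $R^3+\alpha_2R^2+\alpha_1R+\omega^2Q=(R+I)(R^2+2\omega R+\omega^2Q)+\omega^2(I-Q)R$. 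Hence a root of Nesterov's ARE is admissible only in modes with $\sigma_i=1$. For $\sigma_i>1$ that ARE has no real root at all, so the graph $v=Rx$ you restrict to does not exist there.

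The repair you propose cannot close this, because the coupling moves the exponent, not just $C$. Per mode the characteristic polynomial is exactly $p(s)=(s+1)(s+\omega)^2+\omega^2(\sigma_i-1)$, with root sum $-(2\omega+1)$. If $\sigma_i<1$, then $p(-\omega)<0<p(0)$ gives a real root in $(-\omega,0)$. If $\sigma_i>1$, then $p(-1)=p(-\omega)>0$ gives a real root below $-\max(1,\omega)$, and the remaining two roots have mean real part $>-\omega$. So in every mode with $\sigma_i\neq1$ the slow invariant subspace decays strictly slower than $e^{-\omega t}$, and no Gr\"onwall estimate can yield $\|x(t)-x^*\|\le Ce^{-\omega t}$. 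The coupling $\omega^2(\sigma_i-1)$ is not small in any case; after your normalization it ranges up to $\omega^2(\kappa-1)$.

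That normalization is also not available. The coefficients $\alpha_0,\alpha_1,\alpha_2$ are pinned to $\omega$, and the constants $1$ and $2$ in $\alpha_2=2\omega+1$ and $\alpha_1=\omega(\omega+2)$ are not scale-covariant. Rescaling $Q$ or time therefore changes the ODE. Even granting it, it would only give $\sigma_i\ge1$, whereas the factorization needs $\sigma_i=1$.

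Your fallback, that the rate is not improved beyond Nesterov, is a lower bound on the error, i.e.\ the opposite inequality. It addresses the remark after the corollary, not the corollary itself.

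In Step 2, neither fix for the double root gives exponent exactly $\omega$. Shrinking the exponent changes the claim. Your $V$ satisfies $\dot V+2\omega V=\omega^2 z_i(\dot z_i+\omega z_i)$, which is sign-indefinite. Since $te^{-\omega t}$ solves the critically damped mode, no Lyapunov function can help; the factor is absent only on the invariant line $\dot z_i=-\omega z_i$.

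What can actually be proved is the case $Q=I$, or more generally the modes with $\sigma_i=1$. There $R=-\omega I$ is a constant root of the cubic, $\{S=0\}$ is invariant, and $x(t)=e^{-\omega t}x_0$ exactly. For $\sigma_i$ spread over $[\mu,L]$ with $\mu<L$, the bound is out of reach of both your argument and the paper's.
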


Triple momentum introduces an additional fast/neutral mode but does not alter the
convergence-critical slow manifold, hence cannot improve the $\mathcal{O}(\kappa^{-1/2})$
rate.

\section{Unperturbed NAIM in Adapted Coordinates}\label{App:Unperturbed}
In adapted coordinates, the NAIM structure takes on its most transparent algebraic form: the slow manifold is identified with a coordinate slice, and normal-bundle dynamics are governed by an explicitly hyperbolic linear part. A central question in applications---ranging from robust control design to understanding parameter perturbations in estimation algorithms---is:

\begin{quote}
\itshape
When a perturbation is added to a NAIM system in adapted coordinates, under what
conditions does a modified NAIM persist, and when is the modified normal-bundle
connection governed by a matrix Riccati equation?
\end{quote}

This section provides a self-contained treatment: we state the unperturbed framework,
introduce general perturbations, derive the invariance equation for the perturbed manifold,
and identify the precise structural and spectral conditions under which the reduced dynamics
on the perturbed NAIM connection satisfy a (generalized) matrix Riccati equation. We close
with a geometric interpretation in terms of Ehresmann connections and their gauge
transformations.

Let the state space decompose as $\mathcal{X} = \mathbb{R}^n \times \mathbb{R}^m$ with coordinates $(x, y)$, where $x \in \mathbb{R}^n$ are slow variables and $y \in \mathbb{R}^m$ are fast variables. The unperturbed system in adapted coordinates is:
\begin{align}
  \dot{x} &= f(x,y), \label{eq:slow}  \\
  \varepsilon\,\dot{y} &= A(x)\,y + g(x,y), \label{eq:fast} 
\end{align}
where $\varepsilon \in (0,\varepsilon_0]$ is a small parameter, $f : \mathbb{R}^n \times
\mathbb{R}^m \to \mathbb{R}^n$ and $g : \mathbb{R}^n \times \mathbb{R}^m \to \mathbb{R}^m$
are $C^r$ ($r \geq 2$), and:
\begin{itemize}
  \item $g(x,0) = 0$ for all $x$ (the coordinate $y=0$ is a rest set for the fast
    subsystem), so $\mathcal{M}_0 = \{y = 0\}$ is trivially invariant at $\varepsilon = 0$.
  \item $A(x) \in \mathbb{R}^{m \times m}$ is the linearization of the fast vector field
    along $\mathcal{M}_0$.
\end{itemize}

\begin{assumption}[Normal Hyperbolicity]\label{ass:NH}
There exists $\alpha > 0$ such that for all $x$ in a compact set $K \subset \mathbb{R}^n$:
\begin{equation}
  \operatorname{Re}\lambda_i(A(x)) \leq -\alpha < 0, \quad i = 1,\ldots,m.
\end{equation}
\end{assumption}

Under Assumption~\ref{ass:NH}, by Fenichel's theorem, for sufficiently small $\varepsilon > 0$, a slow manifold $\mathcal{M}_\varepsilon = \{y = h(x,\varepsilon)\}$ exists, is $C^r$-smooth,
and is $O(\varepsilon)$-close to $\mathcal{M}_0$. In adapted coordinates, $h(x,\varepsilon)$ has the asymptotic expansion:
\begin{equation} \nonumber
  h(x,\varepsilon) = \varepsilon h_1(x) + \varepsilon^2 h_2(x) + O(\varepsilon^3),
\end{equation}
where $h_1(x) = -A(x)^{-1}D_x f(x,0)$ at leading order (from the layer problem).

The term adapted coordinates refers to the fact that $\mathcal{M}_0 = \{y=0\}$ is flat in these coordinates: the slow manifold coincides with a coordinate hyperplane. The normal bundle $N\mathcal{M}_0$ is spanned by $\partial/\partial y$, and the normal linearization is exactly $A(x)$---there is no coupling between normal and tangential directions at $y = 0$.

Introducing a smooth perturbation $P = (P_x, P_y)$:
\begin{align} \nonumber
  \dot{x} &= f(x,y) + P_x(x,y,\varepsilon,t), \\ \nonumber
  \varepsilon\,\dot{y} &= A(x)\,y + g(x,y) + P_y(x,y,\varepsilon,t).
\end{align}
We allow $P$ to depend on time $t$ (non-autonomous perturbations), though for the main Riccati analysis we primarily treat the autonomous case.

The invariance equation is derived for the perturbed manifold. Suppose a perturbed invariant manifold $\mathcal{M}_\varepsilon = \{y = h(x,\varepsilon)\}$
exists. Differentiating $y = h(x(t),\varepsilon)$ along trajectories and substituting yields
the fundamental invariance equation:
\begin{equation}
  \varepsilon\, D_x h(x)\bigl[f(x,h) + P_x(x,h,\varepsilon)\bigr]
  = A(x)\,h + g(x,h) + P_y(x,h,\varepsilon).
  \label{eq:invariance}
\end{equation}
This is a first-order PDE for $h : \mathbb{R}^n \to \mathbb{R}^m$. The existence and smoothness of solutions to \eqref{eq:invariance} is tied directly to the spectral properties of the linearized system.

We now define the connection form and the Riccati structure. Define the \emph{normal bundle connection matrix}:
\begin{equation}
  \Pi(x) := D_x h(x) \in \mathbb{R}^{m \times n}.
\end{equation}
This is the Ehresmann connection coefficient: it encodes how the normal fiber tilts as one moves tangentially along the slow manifold. Differentiating \eqref{eq:invariance} with
respect to $x$ and evaluating yields an evolution equation for $\Pi$. To isolate the Riccati
structure, expand $g$ and $P$ to first order in $y$:
\begin{align} \nonumber
  g(x,y) &= G(x)\,y + O(\|y\|^2), \quad G(x) = D_y g(x,0),\\ \nonumber
  P_y(x,y,\varepsilon) &= B_0(x,\varepsilon) + B_1(x,\varepsilon)\,y + O(\|y\|^2),\\ \nonumber
  P_x(x,y,\varepsilon) &= c_0(x,\varepsilon) + C_1(x,\varepsilon)\,y + O(\|y\|^2).
\end{align}
Substituting $y = h = \varepsilon h_1 + O(\varepsilon^2)$ and differentiating
\eqref{eq:invariance} with respect to $x$ gives, to leading non-trivial order in $\varepsilon$:
\begin{equation}
  \varepsilon\,\dot{\Pi}
  = \underbrace{\bigl[A(x) + G(x) + B_1(x,\varepsilon)\bigr]}_{\tilde{A}(x)}\Pi
  - \varepsilon\,\Pi\underbrace{\bigl[D_x f(x,0) + C_1(x,\varepsilon)\cdot\Pi\bigr]}_{\tilde{F}(x,\Pi)}
  - \varepsilon\,\Pi\,D_x f\,\Pi + R(x,\varepsilon),
  \label{eq:riccati_deriv}
\end{equation}
where $R(x,\varepsilon) = D_x B_0 + D_x B_1 \cdot h + \cdots$ collects inhomogeneous source terms from the perturbation. In the autonomous, purely quadratic-coupling case, this reduces to the matrix Riccati equation:
\begin{equation}
  0 = \tilde{A}(x)\,\Pi - \varepsilon\,\Pi\,\tilde{F}(x)\,\Pi + R(x,\varepsilon),
  \label{eq:riccati} \nonumber
\end{equation}
which is a quadratic matrix equation in $\Pi$---the hallmark Riccati form.

\subsection{Conditions for the Riccati Reduction}

We now state precisely the conditions required on the perturbation $P$ for the perturbed NAIM to persist and for the connection $\Pi$ to satisfy a matrix Riccati equation.

\textit{Condition 1: Smoothness and Magnitude} 

\begin{assumption}[Regularity]\label{ass:reg}
$P = (P_x, P_y) \in C^r(\mathbb{R}^n \times \mathbb{R}^m \times (0,\varepsilon_0])$ with
$r \geq 2$, and:
\begin{equation}
  \|P\|_{C^r} = O(\varepsilon^\sigma), \quad \sigma \geq 1. \nonumber
\end{equation}
\end{assumption}

The condition $\sigma \geq 1$ ensures the perturbation is small at the order of the singular perturbation parameter $\varepsilon$, so that Fenichel persistence applies. Perturbations of order $O(1)$ or larger generically destroy the NAIM by violating normal hyperbolicity on $O(1)$ timescales.

\begin{remark}
For $\sigma \geq 2$, the perturbation acts as a higher-order correction and the NAIM structure is preserved with finer control over the modified graph $h$. For $\sigma = 1$, the manifold persists but shifts at $O(\varepsilon)$ in phase space.
\end{remark}

\textit{Condition 2: Preservation of Normal Hyperbolicity}

\begin{assumption}[Spectral Stability]\label{ass:spec}
The perturbed normal linearization remains Hurwitz: there exists $\alpha' > 0$ such that
\begin{equation}
  \operatorname{Re}\lambda_i\bigl(A(x) + D_y P_y(x,0,\varepsilon)\bigr) \leq -\alpha' < 0
  \quad \forall\, x \in K. \nonumber
\end{equation}
Quantitatively, this requires:
\begin{equation}
  \|D_y P_y(\cdot,0,\varepsilon)\|_\infty < \alpha - \delta \quad \text{for some } \delta > 0.
\end{equation}
\end{assumption}

This is the \emph{spectral gap condition}. The perturbation may tilt the spectrum of $A(x)$ but must not push any eigenvalue into the closed right half-plane. If this condition fails, normal hyperbolicity is lost and no invariant manifold of graph form can persist.

\textit{Condition 3: Affine-Linear Coupling in \texorpdfstring{$y$}{y} (Riccati Reduction)}

\begin{assumption}[Linear-in-$y$ Structure]\label{ass:linear}
The perturbation is affine in $y$ to leading order:
\begin{align}
  P_y(x,y,\varepsilon) &= B_0(x,\varepsilon) + B_1(x,\varepsilon)\,y + O(\|y\|^2), \nonumber\\
  P_x(x,y,\varepsilon) &= c_0(x,\varepsilon) + C_1(x,\varepsilon)\,y + O(\|y\|^2). \nonumber
\end{align}
\end{assumption}

This is the key structural condition for obtaining a standard (degree-2 polynomial) Riccati equation on $\Pi$. It arises naturally in:
\begin{itemize}
  \item \emph{Linear perturbations} (e.g., feedback linearization corrections, gain
    perturbations in control).
  \item \emph{Perturbations from coordinate changes}: a smooth coordinate transformation
    near $\mathcal{M}_0$ generically introduces linear-in-$y$ terms at leading order.
  \item \emph{Gradient perturbations} of strongly convex/smooth potentials (appearing in
    optimization acceleration analysis).
\end{itemize}

\begin{remark}[Nonlinear Perturbations]
If $P_y$ or $P_x$ contain terms of order $O(\|y\|^2)$ or higher, the invariance equation  \eqref{eq:invariance} yields a polynomial (degree $\geq 3$) equation in $\Pi$---a generalized
ARE. The Riccati structure is broken in this case. However, near $\mathcal{M}_0$ where $y = O(\varepsilon)$, these higher-order terms are $O(\varepsilon^2)$
corrections and can sometimes be treated perturbatively around the quadratic Riccati solution.
\end{remark}

\textit{Condition 4: Solvability---The Hamiltonian Spectrum Condition}

Even when Conditions 1--3 hold, the ARE \eqref{eq:riccati} need not have a stabilizing solution. The existence of such a solution is governed by the spectrum of the associated Hamiltonian matrix.

For the ARE $0 = \tilde{A}\Pi - \varepsilon\,\Pi\tilde{F}\Pi + R$
(suppressing $x$-dependence), define the Hamiltonian:
\begin{equation}
  \mathcal{H} =
  \begin{pmatrix}
    \tilde{F} & -\varepsilon^{-1} I_n \\
    -\varepsilon R^T & -\tilde{A}^T
  \end{pmatrix}
  \in \mathbb{R}^{(n+m)\times(n+m)}. \nonumber
\end{equation}

\begin{assumption}[Hamiltonian Non-degeneracy]\label{ass:ham}
The Hamiltonian matrix $\mathcal{H}$ has no eigenvalues on the imaginary axis:
\begin{equation}
  \sigma(\mathcal{H}) \cap j\mathbb{R} = \emptyset. \nonumber
\end{equation}
\end{assumption}

Under this condition, the stable invariant subspace of $\mathcal{H}$ is $n$-dimensional and
defines a unique stabilizing solution $\Pi^*$ via:
\begin{equation}
  \operatorname{colspan}
  \begin{pmatrix} I_n \\ \Pi^* \end{pmatrix}
  = \text{stable invariant subspace of } \mathcal{H}. \nonumber
\end{equation}
The stabilizing solution $\Pi^*$ corresponds to the NAIM connection for which the perturbed fast dynamics are most rapidly attracted to the slow manifold.

\textit{Condition 5: Fast-Slow Spectral Separation (No Resonance)}

\begin{assumption}[Spectral Separation]\label{ass:sep}
The eigenvalues of the slow and fast linearizations remain well-separated under perturbation:
\begin{equation}
  \sigma\bigl(D_x f\big|_{\mathcal{M}_\varepsilon}\bigr)
  \cap \sigma\bigl(A(x)/\varepsilon\bigr) = \emptyset
  \quad \forall\, x \in K. \nonumber
\end{equation}
Equivalently, if $\lambda_s \in \sigma(D_x f)$ and $\lambda_f \in \sigma(A/\varepsilon)$, then:
\begin{equation}
  |\lambda_s - \lambda_f| \geq \delta_0 > 0. \nonumber
\end{equation}
\end{assumption}

This condition ensures that the adapted coordinate structure is stable: slow modes remain slow and fast modes remain fast. A perturbation that shifts a fast eigenvalue from $O(1/\varepsilon)$ to $O(1)$---or promotes a slow mode to $O(1/\varepsilon)$---destroys the entire slow-fast decomposition and hence the adapted coordinate framework.

\begin{remark}[Resonance and Turning Points]
At points where fast and slow spectra approach each other (``resonance'' or ``turning points''), the NAIM may undergo bifurcations---canard phenomena in $\mathbb{R}^2$, or more complex transcritical/pitchfork scenarios in higher dimensions. Near such points, the Riccati equation becomes ill-conditioned (the Hamiltonian approaches imaginary-axis eigenvalues).
\end{remark}

The following table summarizes the five conditions required for a perturbed NAIM in adapted coordinates to persist with its normal bundle connection governed by a matrix Riccati equation.

\begin{theorem}[Persistence and Riccati Reduction]\label{thm:main} Let the unperturbed system \eqref{eq:slow}--\eqref{eq:fast} satisfy Assumption~\ref{ass:NH}, and let the perturbation $P = (P_x, P_y)$ satisfy Assumptions~\ref{ass:reg}, \ref{ass:spec}, \ref{ass:linear}, \ref{ass:ham},
and~\ref{ass:sep}. Then, for all sufficiently small $\varepsilon > 0$:
\begin{enumerate}
  \item A perturbed NAIM $\mathcal{M}_\varepsilon = \{y = h(x,\varepsilon)\}$ exists,
    is $C^r$-smooth, and is $O(\varepsilon)$-close to $\mathcal{M}_0$.
  \item The normal bundle connection $\Pi(x) = D_x h(x,\varepsilon)$ satisfies the matrix
    Riccati equation:
    \begin{equation}
      0 = \bigl[A(x) + G(x) + B_1(x,\varepsilon)\bigr]\Pi
      - \varepsilon\,\Pi\bigl[D_x f(x,0) + C_1(x,\varepsilon)\bigr]\Pi
      + D_x B_0(x,\varepsilon) + O(\varepsilon), \nonumber
    \end{equation}
    which has a unique stabilizing solution $\Pi^*(x)$ determined by the stable invariant
    subspace of the Hamiltonian $\mathcal{H}$ in \eqref{eq:riccati}.
  \item The slow flow on $\mathcal{M}_\varepsilon$ is:
    \begin{equation}
      \dot{x} = f(x,\varepsilon h_1(x)) + P_x(x,\varepsilon h_1(x),\varepsilon) + O(\varepsilon^2). \nonumber
    \end{equation}
  \item The exponential attraction rate of $\mathcal{M}_\varepsilon$ is $O(\alpha'/\varepsilon)$,
    with $\alpha'$ from Assumption~\ref{ass:spec}.
\end{enumerate}
\end{theorem}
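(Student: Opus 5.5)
The proof has two parts. Existence and smoothness of the perturbed manifold come from Fenichel's theorem, applied to the perturbed system on the compact set $K$. The Riccati structure of $\Pi$ comes from differentiating the invariance equation \eqref{eq:invariance}. I would proceed in four steps, matching items 1--4 of Theorem~\ref{thm:main}.

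\textbf{Step 1 (persistence).} Absorb the perturbation into the fast field, writing $\tilde g(x,y) = g(x,y) + P_y(x,y,\varepsilon) - B_0$. Treat the residual $B_0$ together with $P_x$ as a regular $C^r$ perturbation of size $O(\varepsilon^\sigma)$ with $\sigma\ge 1$, which Assumption~\ref{ass:reg} permits. The normal linearization along $\{y=0\}$ is then $A(x) + D_yP_y(x,0,\varepsilon)$. By Assumption~\ref{ass:spec} it is uniformly Hurwitz on $K$ with margin $\alpha'$, so the critical manifold is still compact (restricted to $K$) and normally hyperbolic, with rate $\alpha'/\varepsilon$ in $t$-time. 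Assumption~\ref{ass:sep} supplies the tangential/normal dominance ratio that Fenichel requires, exactly as (H3) does in Proposition~\ref{prop:hypotheses}. I would also choose $K$ with an inflowing boundary, as in (H5). Fenichel's theorem then gives a $C^r$ graph $y = h(x,\varepsilon)$ that is $O(\varepsilon)$-close to $\mathcal{M}_0$ and attracts at rate $O(\alpha'/\varepsilon)$, which is item 4.

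\textbf{Step 2 (Riccati equation).} Substitute $h = \varepsilon h_1 + \varepsilon^2 h_2 + \cdots$ into \eqref{eq:invariance}. Use the affine expansions of Assumption~\ref{ass:linear}, together with $g = G y + O(\|y\|^2)$. Then differentiate in $x$ and collect terms in $\Pi = D_xh$. The terms $D_yP_y\,\Pi$ and $G\Pi$ supply the linear part $\tilde A\Pi$. The left-hand side $\varepsilon\,\Pi\,[D_xf + C_1\Pi]$ supplies the quadratic part. The inhomogeneous term $D_xB_0$ supplies $R$. Everything quadratic in $y$, together with the second derivatives $D_x^2 h\,[\cdot]$, is $O(\varepsilon)$ because $h = O(\varepsilon)$; these terms form the $O(\varepsilon)$ remainder. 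This yields the equation in item 2.

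\textbf{Step 3 (unique stabilizing solution).} This step uses standard algebraic Riccati theory. Form the Hamiltonian $\mathcal{H}$. By Assumption~\ref{ass:ham} it has no imaginary-axis eigenvalues, so its spectrum splits into stable and unstable halves. I would show the stable invariant subspace is a graph over the first block, with $\operatorname{colspan}\begin{pmatrix} I\\ \Pi^*\end{pmatrix}$. Uniqueness then follows because the invariant subspace is unique. Finally, the $O(\varepsilon)$ remainder perturbs $\Pi^*$ only by $O(\varepsilon)$, by the implicit function theorem, since the linearized Sylvester operator is invertible under the spectral separation.

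\textbf{Step 4 (slow flow).} Substitute $y = h = \varepsilon h_1 + O(\varepsilon^2)$ into the $\dot x$ equation and Taylor-expand; this gives item 3 immediately.

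The main obstacle is Step 3. I need to show that the stable subspace of $\mathcal{H}$ really has dimension $n$ and is a graph, that is, the top block is invertible. The assumption that $\mathcal{H}$ has no eigenvalues on the imaginary axis does not by itself guarantee this, and the dimensions $n$ versus $m$ of the blocks must be matched carefully. I would first try to use the $\varepsilon^{-1}I$ off-diagonal block to argue, for small $\varepsilon$, that the stable eigenvectors are $O(\varepsilon)$-close to those of the decoupled problem. In that decoupled problem the graph property is evident, because $\tilde A$ is Hurwitz by Assumption~\ref{ass:spec}. If this closeness argument fails, I would add a stabilizability/detectability hypothesis on $(\tilde A, \tilde F)$.
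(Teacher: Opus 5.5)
The paper gives no separate proof of this theorem. Its support is the preceding derivation: Fenichel's theorem, differentiation of \eqref{eq:invariance} to reach \eqref{eq:riccati_deriv}, and the Hamiltonian discussion of Condition~4. You follow exactly that route, and Steps~1, 2 and~4 go through. Two small points on Step~1. First, you do not need Assumptions~\ref{ass:spec} and~\ref{ass:sep}: $P$ vanishes at $\varepsilon=0$, so normal hyperbolicity of the layer problem is just Assumption~\ref{ass:NH}. Second, an inflowing boundary cannot be arranged for an arbitrary slow flow, but you do not need one, since Fenichel's theorem for a compact manifold with boundary already gives a locally invariant $\mathcal{M}_\varepsilon$.

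The genuine gap is Step~3, and it goes deeper than the block bookkeeping you flagged. $\mathcal{H}$ is only well formed when $m=n$, so take $m=n$. Then $\operatorname{colspan}\begin{pmatrix} I\\ \Pi\end{pmatrix}$ is $\mathcal{H}$-invariant if and only if
\[
  \varepsilon^{-1}\Pi^{2}-\tilde{A}^{T}\Pi-\Pi\tilde{F}-\varepsilon R^{T}=0 .
\]
This is not $0=\tilde{A}\Pi-\varepsilon\Pi\tilde{F}\Pi+R$. So even a graph-shaped stable subspace would give a $\Pi^*$ that does not solve item~2's equation. Further problems:
\begin{itemize}
\item $\mathcal{H}$ is not a Hamiltonian matrix: its diagonal blocks $\tilde F$ and $-\tilde A^{T}$ are not negative transposes of each other. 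Its spectrum therefore has no symmetry about $j\mathbb{R}$, and Assumption~\ref{ass:ham} says nothing about the dimension of the stable subspace.
\item Stabilizability/detectability hypotheses are tailored to symmetric Riccati Hamiltonians, so adding them does not repair this.
\item Your small-$\varepsilon$ fallback fails. Conjugating by $\operatorname{diag}(I,\varepsilon I)$ turns $\mathcal{H}$ into $\begin{pmatrix}\tilde F & -I\\ -R^{T} & -\tilde A^{T}\end{pmatrix}$, so the $\varepsilon^{-1}I$ block decouples nothing.
\item In the decoupled limit $R\to 0$, the block $-\tilde A^{T}$ is anti-Hurwitz and contributes nothing to the stable subspace. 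That subspace is $E^{s}(\tilde F)\times\{0\}$, with $\tilde F\approx D_xf(x,0)$.
\end{itemize}
Consequently, for small $\varepsilon$ the graph property becomes the condition that the slow Jacobian is Hurwitz, and Assumption~\ref{ass:ham} becomes the condition that it has no imaginary eigenvalues. Neither has anything to do with NAIM persistence. Item~2 as literally stated cannot be reached this way.

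What can be proved is the following. Consider $0=\tilde{A}\Pi-\varepsilon\Pi\tilde{F}\Pi+R$, with $\tilde F\in\mathbb{R}^{n\times m}$ so that the quadratic term type-checks. The matrix whose graph-type invariant subspaces are exactly its solutions is
\[
  \mathcal{K}=\begin{pmatrix} 0 & -\varepsilon\tilde{F}\\ -R & -\tilde{A}\end{pmatrix}.
\]
For small $\varepsilon$ its spectrum splits into $n$ eigenvalues near $0$ and $m$ eigenvalues near $\sigma(-\tilde A)$, which lies in $\operatorname{Re}\lambda\ge\alpha'$ by Assumption~\ref{ass:spec}. The relevant solution is the graph of the spectral subspace of the small cluster, $\Pi^*=-\tilde A^{-1}R+O(\varepsilon\|R\|^{2})$. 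It is unique among small solutions by the implicit function theorem, because $\tilde A$ is invertible. So it is selected by the fast--slow gap, not by a stable/unstable splitting.

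Be aware also that Steps~2--3 succeed only vacuously at the stated precision. Fenichel gives $C^r$-closeness, so $\Pi=D_xh=O(\varepsilon)$, while $B_1$, $C_1$ and $D_xB_0$ are $O(\varepsilon^{\sigma})$. Every term in item~2's display is therefore $O(\varepsilon)$, and the quadratic one is $O(\varepsilon^{3})$. The display reduces to $0=O(\varepsilon)$, and ``$\Pi=\Pi^*+O(\varepsilon)$'' carries no information. To get content, write $\Pi=\varepsilon\Pi_1+O(\varepsilon^2)$ and keep an $O(\varepsilon^2)$ remainder. The leading-order equation is then linear: $\Pi_1=D_xh_1$ with $h_1=-(A+G)^{-1}B_0^{(1)}$ when $B_0=\varepsilon B_0^{(1)}+O(\varepsilon^2)$. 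The quadratic Riccati term does not enter at leading order.
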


\subsection{Geometric Interpretation: Ehresmann Connections and Gauge Theory}
The NAIM is now represented as a Fiber Bundle. The slow-fast decomposition naturally defines a fiber bundle structure:
\begin{equation}
  \pi : \mathbb{R}^n \times \mathbb{R}^m \to \mathbb{R}^n, \quad \pi(x,y) = x, \nonumber
\end{equation}
with fibers $\pi^{-1}(x) \cong \mathbb{R}^m$. The NAIM $\mathcal{M}_\varepsilon$ is a section
of this bundle: a smooth map $s : \mathbb{R}^n \to \mathbb{R}^n \times \mathbb{R}^m$ with
$\pi \circ s = \mathrm{id}$.

The Connection \texorpdfstring{$\Pi$}{Pi} as an Ehresmann Connection. 
The matrix $\Pi(x) = D_x h(x)$ defines the horizontal lift of tangent vectors from the base
$\mathbb{R}^n$ to the total space. Explicitly, a tangent vector $v \in T_x\mathbb{R}^n$ is
lifted to:
\begin{equation}
  v^{\mathrm{hor}} =
  \begin{pmatrix} v \\ \Pi(x)\,v \end{pmatrix}
  \in T_{(x,h(x))}(\mathbb{R}^n \times \mathbb{R}^m). \nonumber
\end{equation}
The Riccati equation \eqref{eq:riccati} is precisely the flatness/integrability condition
for this lifted distribution: the horizontal distribution is integrable (i.e., admits the
section $h(x)$) if and only if the Riccati equation holds.

The action of the perturbation $P$ on the NAIM can be interpreted as a gauge transformation
of the Ehresmann connection:
\begin{equation}
  \Pi \mapsto \Pi + \delta\Pi, \nonumber
\end{equation}
where $\delta\Pi$ satisfies the linearized Riccati equation:
\begin{equation}
  0 = \tilde{A}\,\delta\Pi - \varepsilon\,\Pi\,\tilde{F}\,\delta\Pi
    - \varepsilon\,\delta\Pi\,\tilde{F}\,\Pi + \delta R. \nonumber
\end{equation}
This is a Lyapunov equation in $\delta\Pi$:
\begin{equation}
  0 = (\tilde{A} - \varepsilon\,\Pi\tilde{F})\,\delta\Pi
    + \delta\Pi\,(-\varepsilon\,\tilde{F}\Pi) + \delta R, \nonumber
\end{equation}
which has a unique solution provided the sum of eigenvalues from both coefficient matrices does not vanish---a condition equivalent to Assumption~\ref{ass:ham}.

The quadratic term $-\varepsilon\,\Pi\tilde{F}\Pi$ in the Riccati equation has a direct geometric meaning: it is the curvature form of the Ehresmann connection. In differential geometric terms:
\begin{equation}
  \Omega(u,v) = \Pi\bigl[D_x f \cdot u,\; D_x f \cdot v\bigr]_{\mathrm{slow}}
              - [\Pi u,\, \Pi v]_{\mathrm{fast}}, \nonumber
\end{equation}
and the Riccati equation asserts that the connection $\Pi$ is adjusted so that the curvature is precisely balanced by the normal attraction $\tilde{A}\Pi$ and source terms $R$. The perturbation $P$ modifies both $\tilde{A}$ and $R$, forcing a new curvature balance---encoded in the modified Riccati equation.

Consider the special case where $P$ is purely linear:
\begin{equation}
  P_y(x,y,\varepsilon) = \Delta A(x)\,y, \quad P_x = 0, \nonumber
\end{equation}
for some matrix-valued perturbation $\Delta A(x)$. Then $B_0 = 0$, $B_1 = \Delta A$,
$C_1 = 0$, and the Riccati equation \eqref{eq:riccati} becomes:
\begin{equation}
  0 = \bigl[A(x) + G(x) + \Delta A(x)\bigr]\Pi
    - \varepsilon\,\Pi\,D_x f(x,0)\,\Pi. \nonumber
\end{equation}
This is a standard continuous-time ARE. The stabilizing solution exists as long as $A + G + \Delta A$ remains Hurwitz, rec

The Riccati equation on the connection $\Pi$ governs the spectral shaping of the NAIM,
and its stabilizing solution captures the optimal momentum coefficient---connecting NAIM
theory directly to the condition number-dependent rate $1 - \sqrt{\mu/L}$ of Nesterov's
method.

In the DREM (Dynamic Regressor Extension and Mixing) framework, perturbed regressor matrices lead to slow-fast systems with $A(x) = -\varphi(x)\varphi(x)^T$ (gram matrix).
Non-normality of $\varphi$ introduces perturbations of the form
$B_1 = -\delta\varphi\,\varphi^T - \varphi\,\delta\varphi^T$. The Riccati condition then governs whether SVD-based decoupling (exploiting the structure of the perturbation) yields
a better-conditioned slow manifold than adjugate-based approaches.

\subsection{Conclusion}

The persistence of a NAIM in adapted coordinates under perturbation, with the modified
normal-bundle connection satisfying a matrix Riccati equation, requires the satisfaction of
five hierarchical conditions:
\begin{enumerate}
  \item Smoothness and $O(\varepsilon)$ magnitude: ensures Fenichel persistence.
  \item Spectral stability of the normal linearization: ensures continued
    exponential attraction.
  \item Affine-linear coupling in $y$: ensures the invariance equation reduces to
    a standard (degree-2) Riccati equation rather than a higher polynomial.
  \item Hamiltonian non-degeneracy: ensures the Riccati equation has a unique
    stabilizing solution, i.e., the connection $\Pi$ is well-defined.
  \item Fast-slow spectral separation: ensures adapted coordinates remain valid
    and the slow-fast decomposition is consistent.
\end{enumerate}

From the geometric perspective, the Riccati equation is the integrability condition for the
perturbed Ehresmann connection on the normal bundle, and the stabilizing solution corresponds
to the unique flat section (the graph $h(x)$) that is consistent with the modified fast-slow
structure. The perturbation acts as a gauge transformation, and the Riccati equation encodes
the new curvature balance imposed by the perturbed dynamics.

\section{Detailed Derivation: Smooth Convex Case}
\label{app:smooth}

Cayley Transform Calculation: 
At $t_k = kh$ with $c = 2$, the dissipative matrix is:
\[
  D_k = \begin{pmatrix} 0 & I \\ 0 & -\tfrac{2}{kh}I \end{pmatrix}.
\]
With $h = 1$ (unit step):
\[
  \tfrac{1}{2}D_k = \begin{pmatrix} 0 & \tfrac{1}{2}I \\ 0 & -\tfrac{1}{k}I \end{pmatrix}.
\]
Computing:
\[
  I - \tfrac{1}{2}D_k = \begin{pmatrix} I & -\tfrac{1}{2}I \\ 0 & (1+\tfrac{1}{k})I \end{pmatrix},
  \qquad
  I + \tfrac{1}{2}D_k = \begin{pmatrix} I & \tfrac{1}{2}I \\ 0 & (1-\tfrac{1}{k})I \end{pmatrix}.
\]
The inverse:
\[
  \left(I - \tfrac{1}{2}D_k\right)^{-1}
  = \begin{pmatrix} I & \dfrac{1}{2(1+1/k)}I \\[4pt] 0 & \dfrac{k}{k+1}I \end{pmatrix}.
\]
The Cayley map:
\[
  \Phi_k = \begin{pmatrix} I & I \\ 0 & \dfrac{k-1}{k+1}I \end{pmatrix}.
\]
This gives $x_{k+1/2} = x_k + v_k$ and $v_{k+1/2} = \dfrac{k-1}{k+1}v_k$.

Proof for Index Shift from Lie--Trotter: 
After the Cayley step:
\[
  (x_{k+1/2},\, v_{k+1/2}) = \left(x_k + v_k,\;\frac{k-1}{k+1}v_k\right).
\]
After the gradient step:
\begin{align*}
  v_{k+1} &= \frac{k-1}{k+1}v_k - \nabla f(x_k + v_k),\\
  x_{k+1} &= x_k + v_k + v_{k+1}.
\end{align*}
Using $v_k = x_k - x_{k-1}$:
\[
  x_{k+1}
  = x_k + \left(1 + \frac{k-1}{k+1}\right)(x_k - x_{k-1}) - \nabla f(y_k)
  = x_k + \frac{2k}{k+1}(x_k - x_{k-1}) - \nabla f(y_k).
\]
The effective momentum in the standard Nesterov form (with lookahead point
$y_k = x_k + \dfrac{k-1}{k+2}(x_k - x_{k-1})$) becomes $\beta_k = \dfrac{k-1}{k+2}$.

\section{Detailed Derivation: Strongly Convex Case}\label{app:strongly}
 
Cayley Transform with Constant Damping. The dissipative dynamics:
\[
\frac{d}{dt}\begin{pmatrix} x \\ v \end{pmatrix}
= \begin{pmatrix} 0 & I \\ 0 & -2\sqrt{\mu} I \end{pmatrix}\!\begin{pmatrix} x \\ v \end{pmatrix}.
\]
With step size $h$:
\[
\frac{h}{2} D = \begin{pmatrix} 0 & \frac{h}{2} I \\ 0 & -h\sqrt{\mu}\, I \end{pmatrix}.
\]
The Cayley map gives:
\[
\Phi = \begin{pmatrix} I & h I \\ 0 & \dfrac{1 - h\sqrt{\mu}}{1 + h\sqrt{\mu}} I \end{pmatrix}.
\]
 
Choice of Step Size. For an $L$-smooth function, the gradient step uses $\frac{1}{L}$. The natural unit of time is $1/\sqrt{L}$ (the fast
oscillation period). Setting $h = 1/\sqrt{L}$:
\[
\beta = \frac{1 - \sqrt{\mu/L}}{1 + \sqrt{\mu/L}} = \frac{\sqrt{L} - \sqrt{\mu}}{\sqrt{L} + \sqrt{\mu}}.
\]
 
Algorithm Assembly. The full step:
\begin{enumerate}
    \item \textit{Cayley (dissipative):} $x_{k+1/2} = x_k + h v_k$, $v_{k+1/2} = \beta v_k$
    \item \textit{Gradient:} $v_{k+1} = v_{k+1/2} - \frac{1}{L}\nabla f(x_{k+1/2})$, $x_{k+1} = x_{k+1/2} + h v_{k+1}$
\end{enumerate}
 
In the standard $(x_k, y_k)$ formulation:
\begin{align*}
x_{k+1} &= y_k - \frac{1}{L}\, \nabla f(y_k), \\
y_{k+1} &= x_{k+1} + \beta\, (x_{k+1} - x_k).
\end{align*}

\printbibliography

@book{nesterov2018lectures,
  title     = {Lectures on Convex Optimization},
  author    = {Nesterov, Yurii},
  year      = {2018},
  publisher = {Springer International Publishing},
  series    = {Springer Optimization and Its Applications},
  volume    = {137},
  address   = {Cham, Switzerland},
  isbn      = {978-3-319-91577-7}
}

@article{fenichel1979geometric,
  title={Geometric singular perturbation theory for ordinary differential equations},
  author={Fenichel, Neil},
  journal={Journal of Differential Equations},
  volume={31},
  number={1},
  pages={53--98},
  year={1979},
  publisher={Academic Press}
}

@book{nesterov2013introductory,
  title={Introductory lectures on convex optimization: A basic course},
  author={Nesterov, Yurii},
  volume={87},
  year={2013},
  publisher={Springer Science \& Business Media}
}

@article{lessard2016analysis,
  title={Analysis and design of optimization algorithms via integral quadratic constraints},
  author={Lessard, Laurent and Recht, Benjamin and Packard, Andrew},
  journal={SIAM Journal on Optimization},
  volume={26},
  number={1},
  pages={57--95},
  year={2016},
  publisher={SIAM}
}

@article{polyak1964some,
  title={Some methods of speeding up the convergence of iteration methods},
  author={Polyak, Boris T},
  journal={USSR computational Mathematics and Mathematical Physics},
  volume={4},
  number={5},
  pages={1--17},
  year={1964},
  publisher={Elsevier}
}

@article{wibisono2016variational,
  title={A variational perspective on accelerated methods in optimization},
  author={Wibisono, Andre and Wilson, Ashia C and Jordan, Michael I},
  journal={proceedings of the National Academy of Sciences},
  volume={113},
  number={47},
  pages={E7351--E7358},
  year={2016},
  publisher={National Academy of Sciences}
}

@article{1370862715914709505,
author={Nesterov, Y.},
title={A method for solving the convex programming problem with convergence rate O(1/k2)},
journal={Dokl Akad Nauk SSSR},
year={1983},
volume={269},
pages={543},
URL={https://cir.nii.ac.jp/crid/1370862715914709505}
}

@article{su2016differential,
  title={A differential equation for modeling Nesterov's accelerated gradient method: Theory and insights},
  author={Su, Weijie and Boyd, Stephen and Candes, Emmanuel J},
  journal={Journal of Machine Learning Research},
  volume={17},
  number={153},
  pages={1--43},
  year={2016}
}

@article{JMLR_Su,
  author  = {Weijie Su and Stephen Boyd and Emmanuel J. Cand{{\`e}}s},
  title   = {A Differential Equation for Modeling Nesterov's Accelerated Gradient Method: Theory and Insights},
  journal = {Journal of Machine Learning Research},
  year    = {2016},
  volume  = {17},
  number  = {153},
  pages   = {1--43},
  url     = {http://jmlr.org/papers/v17/15-084.html}
}

@article{Arun,
  author={Shenoy, Karthik and Mahindrakar, Arun D. and Vaidya, Umesh},
  journal={IEEE Control Systems Letters}, 
  title={Continuous-Time Heavy-Ball Gradient Method: Safety, Stability and Robustness}, 
  year={2025},
  volume={9},
  number={},
  pages={120-125},
  doi={10.1109/LCSYS.2025.3566345}}

@article{Shi2019,
author = {Shi, Bin and Du, Simon S. and Su, Weijie J. and Jordan, Michael I.},
title = {Acceleration via symplectic discretization of high-resolution differential equations},
year = {2019},
publisher = {Curran Associates Inc.},
address = {Red Hook, NY, USA},
booktitle = {Proceedings of the 33rd International Conference on Neural Information Processing Systems},
articleno = {516},
numpages = {9}
}

@ARTICLE{Scoy2018,
  author={Van Scoy, Bryan and Freeman, Randy A. and Lynch, Kevin M.},
  journal={IEEE Control Systems Letters}, 
  title={The Fastest Known Globally Convergent First-Order Method for Minimizing Strongly Convex Functions}, 
  year={2018},
  volume={2},
  number={1},
  pages={49-54},
  doi={10.1109/LCSYS.2017.2722406}}

@article{VIRGOS2018,
title = {Cayley transform on Stiefel manifolds},
journal = {Journal of Geometry and Physics},
volume = {123},
pages = {53-60},
year = {2018},
issn = {0393-0440},
doi = {https://doi.org/10.1016/j.geomphys.2017.08.011},
url = {https://www.sciencedirect.com/science/article/pii/S039304401730205X},
author = {Enrique Macías-Virgós and María José Pereira-Sáez and Daniel Tanré}
}

@article{Blanes2024, 
title={Splitting methods for differential equations}, 
volume={33}, 
DOI={10.1017/S0962492923000077}, 
journal={Acta Numerica}, 
author={Blanes, Sergio and Casas, Fernando and Murua, Ander}, year={2024}, pages={1–161}
}

\end{document}